\newenvironment{proofsketch}{%
  \proof}{\endproof}
\theoremstyle{plain}
\newtheorem{theorem}{Theorem}[section]
\newtheorem{lemma}[theorem]{Lemma}
\theoremstyle{definition}
\newtheorem{definition}[theorem]{Definition}
\theoremstyle{remark}
\newcommand{\bfh}{\mathbf{h}}
\newcommand{\bfr}{\mathbf{r}}
\newcommand{\bfu}{\mathbf{u}}
\newcommand{\bfv}{\mathbf{v}}
\newcommand{\bfw}{\mathbf{w}}
\newcommand{\bfx}{\mathbf{x}}
\newcommand{\bfH}{\mathbf{H}}
\newcommand{\bbE}{\mathbb{E}}
\newcommand{\bbN}{\mathbb{N}}
\newcommand{\bbP}{\mathbb{P}}
\newcommand{\bbR}{\mathbb{R}}
\newcommand{\calB}{\mathcal{B}}
\newcommand{\calC}{\mathcal{C}}
\newcommand{\calD}{\mathcal{D}}
\newcommand{\calF}{\mathcal{F}}
\newcommand{\calH}{\mathcal{H}}
\newcommand{\calL}{\mathcal{L}}
\newcommand{\calM}{\mathcal{M}}
\newcommand{\calN}{\mathcal{N}}
\newcommand{\calO}{\mathcal{O}}
\newcommand{\calS}{\mathcal{S}}
\newcommand{\calT}{\mathcal{T}}
\newcommand{\calX}{\mathcal{X}}
\newcommand{\paren}[1]{\left( #1 \right)}
\newcommand{\sqbrac}[1]{\left[ #1 \right]}
\newcommand{\set}[1]{\left\{ #1 \right\}}
\newcommand{\abs}[1]{\left\vert #1 \right\vert}
\newcommand{\sign}[1]{\text{sign} \paren{#1}}
\newcommand{\norm}[2]{\left\Vert #1 \right\Vert_{#2}}
\newcommand{\dotprod}[2]{\left\langle #1, #2 \right\rangle}
\newcommand{\VC}{\text{VC}}
\newcommand{\pdim}{\text{Pdim}}
\newcommand{\frakhatR}{\hat{\mathfrak{R}}}
\title{Learning Social Welfare Functions}
\author{%
  Kanad Shrikar Pardeshi \\
  Carnegie Mellon University \\
  \texttt{kpardesh@andrew.cmu.edu} \\
  \And
  Itai Shapira \\
  Harvard University \\
  \texttt{itaishapira@g.harvard.edu} \\
  \And
  Ariel D. Procaccia \\
  Harvard University \\
  \texttt{arielpro@seas.harvard.edu} \\
  \And
  Aarti Singh \\
  Carnegie Mellon University \\
  \texttt{aarti@andrew.cmu.edu} \\
}
\begin{document}

\maketitle

\begin{abstract}
Is it possible to understand or imitate a policy maker's rationale by looking at past decisions they made? We formalize this question as the problem of learning social welfare functions belonging to the well-studied family of power mean functions. We focus on two learning tasks; in the first, the input is vectors of utilities of an action (decision or policy) for individuals in a group and their associated social welfare as judged by a policy maker, whereas in the second, the input is pairwise comparisons between the welfares associated with a given pair of utility vectors. We show that power mean functions are learnable with polynomial sample complexity in both cases, even if the social welfare information is noisy. Finally, we design practical algorithms for these tasks and evaluate their performance.
\end{abstract}

\section{Introduction}
\label{sec:intro}

Consider a standard decision making setting that includes a set of possible actions (decisions or policies), and a set of individuals who assign utilities to the actions. A \emph{social welfare function} aggregates the utilities into a single number, providing a measure for the evaluation of actions with respect to the entire group. Utilitarian social welfare, for example, is the sum of utilities, whereas egalitarian social welfare is the minimum utility. Given two actions that induce the utility vectors $(3,0)$ and $(1,1)$ for two individuals, the former is preferred when measured by utilitarian social welfare, whereas the latter is preferred according to egalitarian social welfare. 

When competent decision makers adopt policies that affect groups or even entire societies, they may have a social welfare function in mind, but it is typically implicit. Our goal is to \emph{learn} a social welfare function that is consistent with the decision maker's rationale. This learned social welfare function has at least two compelling applications: first, \emph{understanding} the decision maker's priorities and ideas of fairness, and second, potentially \emph{imitating} a successful decision maker's policy choices in future dilemmas or in other domains. 

As a motivating example, consider the thousands of decisions made by public health officials in the United States during the Covid-19 pandemic: opening and closing schools, restaurants, and gyms, requirements for masking and social distancing, lockdown recommendations, and so on. Each decision induces utilities for individuals in the population; closing schools, for instance, provides higher utility to medically vulnerable individuals compared to opening them, but arguably has much lower utility for students and parents. Assuming that healthcare officials were acting in the public interest and (approximately) optimizing a social welfare function, which one did they have in mind? Our goal is to answer such questions by learning from example decisions. 

Another example we consider in this paper is that of allocating food resources in a community by a US-based nonprofit to hundreds of recipient organizations. Working with a dataset of utility of 18 different stakeholders such as donors, volunteers, dispatchers and recipient organizations \cite{lee2019webuildai}, we consider the task of learning the social welfare implicit in the decisions that may be made by the nonprofit.

\begin{table}[t]
  \caption{A summary of our results regarding the sample complexity of various tasks. Here, $\xi = u_{\max} \paren{u_{\max} - u_{\min}}$ and $\kappa = \log(u_{\max} / u_{\min})$, with all $d$ individual utilities assumed to be in the range $[u_{\min}, u_{\max}]$. $\rho \in [0, 1/2)$ is the probability of mislabeling for the i.i.d noise model, and $\tau_{\max}$ is the maximum temperature of the logistic noise model.}
  \label{tab:results}
  \centering
  \small 
  \begin{tabular}{@{}llll@{}}
    \toprule
    Social Welfare Information & Loss & Known Weights & Unknown Weights \\
    \midrule
    Cardinal values & $\ell_2$ & $\mathcal{O}(\xi^2)$ & $\mathcal{O}(\xi^2 d \log d)$ \\
    Pairwise comparisons & 0-1 & $\mathcal{O}(\log d)$ & $\mathcal{O}(d \log d)$ \\
    Pairwise comparison with i.i.d noise & 0-1 & $\mathcal{O}\left(\frac{\log d}{(1 - 2 \rho)^2}\right)$ & $\mathcal{O}\left(\frac{d \log d}{(1 - 2 \rho)^2}\right)$ \\
    Pairwise comparisons with logistic noise estimation & Logistic & $\mathcal{O}(\tau_{\max}^2 \kappa^2)$ & $\mathcal{O}(\tau_{\max}^2 \kappa^2 d \log d)$ \\
    \bottomrule
  \end{tabular}
\end{table}

In order to formalize this problem, there are two issues we need to address. First, to facilitate sample-efficient learnability, we need to make some structural assumptions on the class of social welfare functions. We focus on the class of \emph{weighted power mean functions}, which includes the most prominent social welfare functions: the aforementioned utilitarian and egalitarian welfare, as well as Nash welfare (the product of utilities). This class is a natural choice, as it is the only class of functions feasible under a set of reasonable social choice axioms such as monotonicity, symmetry, and scale invariance \cite{roberts_social_choice,cousins2023revisiting}.

Second, we need to specify the input to our learning problem. There are two natural options, and we explore both: utility vectors coupled with their values under a target social welfare function, or pairwise comparisons between utility vectors. 
We demonstrate sample complexity bounds for both types of inputs,  where the social welfare value or comparisons can be noiseless or corrupted by noise. 
We note that estimating the utility vector associated with any particular decision or policy is ostensibly challenging, but in fact this has been done in prior work and we have access to relevant data, as we discuss in \cref{sec:empirical}.

\smallskip
\textbf{Our contributions.} 
Learning weighted power mean functions is a non-standard regression or classification problem due to the complex, highly nonlinear dependence on the power parameter \( p \), which is the parameter of interest. 
While one can invoke standard hyperparameter selection approaches such as cross-validation to select $p$ from a grid of values, the infinite domain of $p$ does not allow demonstration of a polynomial sample complexity without deriving an appropriate cover. 
We derive statistical complexity measures such as pseudo-dimension, covering number, VC dimension and Rademacher complexity for this function class, under both cardinal and ordinal observations of the social welfare function. Our sample complexity bounds are summarized in \cref{tab:results}. These results may be of interest for other problems where weighted power mean functions are used, such as fairness in federated learning~\cite{SLBS20}. 

We highlight some key contributions of this paper. 
We first establish the statistical learnability of widely used social welfare functions belonging to the weighted power mean functions family. We derive a polynomial sample complexity of $\calO(1)$ for learning using cardinal social welfare values under $\ell_2$ loss, and $\calO(\log d)$ (where $d$ denotes the number of individuals) for learning using comparisons under $0-1$ loss in the unweighted/known weight setting. The upper bounds leverage the monotonicity of the target functions with $p$ in the cardinal case, and the restricted number of roots in the ordinal setting. We also prove matching lower bounds for the ordinal case.

We also establish a polynomial sample complexity of $\calO(d \log d)$ for both cardinal and ordinal tasks in the setting when the individual weights are unknown. This result is intuitive, as learning an additional $d$ weight parameters incurs a proportional increase in the sample requirement. 

We then analyze the sample complexity for the more practical ordinal task under different noise models (i.i.d. and logistic noise) and characterize the effect of noise on learning. In the i.i.d. noise setting, sample complexity increases with noise level $\rho$, converging to the noiseless complexity as $\rho \to 0$. Unlike the i.i.d. case where $\rho$ is known, in the logistic noise model, we also consider estimating the noise level $\tau$ and evaluate the likelihood with respect to the noisy distribution. As $\tau$ increases, estimating the noise becomes more challenging, leading to higher sample complexity. 

Finally, despite the problem’s non-convexity, we demonstrate a practical algorithm for learning weighted power mean functions across tasks using simulated data and a real-world food resource allocation dataset from \citet{lee2019webuildai}. Our empirical results validate theoretical bounds and highlight algorithm performance across parameter settings.

\vspace{-1em}
\paragraph{Related work.} Conceptually, our work is related to that of \citet{PROCACCIA20091133}, who study the learnability of decision rules that aggregate individual utilities. In their work, however, individual utilities are represented as rankings over a set of alternatives (rather than cardinal utilities as in our case), and the rule to be learned is a voting rule that maps input rankings to a winning alternative. They provide sample complexity results for two families of voting rules: positional scoring rules and voting trees.

\citet{basu2020falsifiability} derive VC dimension bounds for additive, Choquet, and max-min expected utility for decision-making under uncertainty, bounding the number of pairwise comparisons needed to falsify a candidate decision rule and estabilishing learnability for these classes. Their work addresses decision rules operating on probability distributions rather than utility vectors, resulting in technical distinctions from ours; for instance, the max-min rule is not learnable in their setting (infinite VC dimension), whereas it is learnable in ours.

\citet{kalai2001statistical} studies the learnability of choice functions and establishes PAC guaranteees. Choice functions are defined with respect to a fixed and finite set of alternatives $X$, with each sample being a subset from $X$ and the choice over this subset. In contrast, our approach involves learning a function over an infinite action space where utilities are known.

\citet{ijcai2021p398} conducts experiments on learning aggregation functions which are assumed to be a composition of $L_p$ means, observing that they perform favorably in various tasks such as scalar aggregation, set expansion and graph tasks. Our work provides a more theoretical analysis, proving the sample-efficient learnability of weighted power mean aggregation functions.

 \citet{pmlr-v101-melnikov19a} considers learning from actions with feature vectors and their global scores, with local scores for each individual unavailable for learning. They learn both local and global score functions, and consider the ordered weighted averaging operator for aggregating local scores. While we assume that each individual's local score is given, the aggregation function belongs to a richer function family motivated by social choice theory.

\section{Problem Setup}
We assume that the decision-making process concerns $d$ individuals. The decision-making setting we consider has each action associated with a positive utility vector $\bfu \in [u_{\min}, u_{\max}]^d \subset \mathbb{R}^d_+$, which describes the utilities derived from the $d$ individuals.  

We encode the impact of each individual $i \in [d]$ on the decision-making process through a weight value $w_i \geq 0$ such that $\sum_{i = 1}^{d} w_i = 1$. These weight values together form a weight vector $\bfw \in \Delta_{d - 1}$. The weight vector might be a known or unknown quantity. A common instance in which the weight vector is known is when all agents are assumed to have an equal say, in which case $\bfw = \mathbf{1}_d / d$. For all settings we consider, we provide PAC guarantees for both known weights and unknown weights.

We assume that the decision-making process provides a cardinal social welfare value to each action. However, this social welfare value can be latent and need not be available to us as data. For the first task concerned with cardinal decision values, the social welfare values are available and can be used for learning. For the second task, both actions in the pair have a latent social welfare which is not available to us; however, the preferred action in the pair is known to us. We consider learning bounds with the empirical risk minimization (ERM) algorithm for all the losses in this work, with $\hat{p}$ being learned when the weights are known, and $(\hat{\bfw}, \hat{p})$ being learned when the weights are unknown.

\paragraph{Power Mean.} The (weighted) power mean is defined on $p \in \bbR \cup \set{\pm \infty}$, and for $\bfu \in \bbR^d_+, \bfw \in \Delta_{d - 1}$, it is
\begin{align*}
M(\bfu; \bfw, p) = \begin{cases}
\paren{\sum_{i = 1}^{d} w_i u_i^p}^{1 / p} &p \neq 0 \\
\prod_{i=1}^d u_i^{w_i} &p = 0
\end{cases}
\end{align*}

It is sometimes more convenient to use the (natural) \textit{log} power mean than the power mean. Since $\sum_{i = 1}^{d} w_i = 1$, in effect we have $d$ variables, $w_1, \ldots, w_{d - 1}$ and $p$. We refer to the weighted power mean family with known weight $\bfw$ as 
\[
\calM_{\bfw, d} = \set{M( \cdot; \bfw, p) \vert p \in \bbR}.
\]
If the weight is unknown, the weighted power mean family is denoted by 
\[
\calM_{d} = \set{M(\cdot; \bfw, p) \vert p \in \bbR, \bfw \in \Delta_{d - 1}}.
\]

The power mean family is a natural representation for social welfare functions. \citet{cousins2023revisiting,cousins2021axiomatic} puts forward a set of axioms under which the set of possible welfare functions is precisely the weighted power mean family. An unweighted version of these functions results in the family of constant elasticity of substitution (CES) welfare functions \cite{goel2018markets}, which are widely studied in econometrics.

To show the generality of this family of functions, we list a few illustrative cases: 
\begin{itemize}
\item $M(\bfu; \bfw, p=-\infty) = \min_{i \in d} u_i$, which corresponds to egalitarian social welfare. 
\item $M(\bfu; \bfw, p=0) = \prod_{i = 1}^{d} u_i^{w_i}$, which corresponds to a weighted version of Nash social welfare. 
\item $M(\bfu; \bfw, p=1) = \sum_{i = 1} w_i u_i$, which corresponds to weighted utilitarian welfare. 
\item $M(\bfu; \bfw, p=\infty) = \max_{i \in d} u_i$, which corresponds to egalitarian social \textit{welfare}. 
\end{itemize}
We note that for $p = \pm \infty$, the decision utility is independent of $\bfw$. With $w_i = 1 / d$ for all $i \in [d]$, we get the conventional interpretations of the welfare notions mentioned above.

The power mean family has some useful properties. An obvious one is that $M(\bfu, \bfw, p) \in [u_{(1)}, u_{(d)}]$, where $u_{(1)}$ and $u_{(d)}$ denote the first and $d$-th order statistics of $\bfu = (u_1, \ldots, u_n)$. $u_{(1)}$ is attained at $p = -\infty$, and $u_{(d)}$ is attained at $p = \infty$. A more general observation is the following:
\begin{lemma}
\label{prop:increasing_p_w} 
\begin{enumerate}[label=(\alph*)]
    \item \label{lem:monotone_a} $M(\bfu; \bfw, p)$ is nondecreasing with respect to $p$ for all $\bfu \in [u_{\min}, u_{\max}]^d$, $\bfw \in \Delta_{d - 1}$. 
    
    \item \label{lem:monotone_b} $M(\bfu; \bfw, p)$ is monotonic with respect to $w_i$ for each $i \in [d - 1]$, for all $\bfu \in [u_{\min}, u_{\max}]^d$, $p \in \bbR$.
    
    \item \label{lem:quasilinear_c} $M(\bfu; \bfw, p)$ and $\log M(\bfu; \bfw, p) - \log M(\bfv; \bfw, p)$ are quasilinear with respect to $\bfw$ if $p$ is fixed.
\end{enumerate}
\end{lemma}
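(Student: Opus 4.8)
The plan is to handle the three parts separately, reducing each to a standard fact about the (log) power mean after choosing the right representation. Throughout I would work in logarithmic form and view $\bfw$ as a probability distribution placing mass $w_i$ on $u_i$, so that $M(\bfu;\bfw,p) = (\bbE[U^p])^{1/p}$ for $p \neq 0$, where $U$ takes the value $u_i$ with probability $w_i$. The cases $p = 0$ and $p = \pm\infty$ would be dispatched at the end by continuity, since $M$ is continuous in $p$ on the extended real line, agreeing with the weighted geometric mean at $p=0$ and with $\min$ and $\max$ at $\mp\infty$.

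For part \ref{lem:monotone_a} I would prove the classical power-mean inequality by differentiating $g(p) := \log M(\bfu;\bfw,p) = K(p)/p$, where $K(p) := \log(\sum_i w_i u_i^p) = \log \bbE[e^{pX}]$ with $X = \log U$ is the cumulant generating function of $X$ under $\bfw$. The key fact is that $K$ is convex, since $K''(p) = \mathrm{Var}_p(X) \geq 0$ under the exponentially tilted distribution, and $K(0) = 0$. Setting $h(p) := pK'(p) - K(p)$, so that $g'(p) = h(p)/p^2$, I would observe that $h(0) = 0$ and $h'(p) = pK''(p)$, which is $\geq 0$ for $p > 0$ and $\leq 0$ for $p < 0$; hence $h$ attains its minimum value $0$ at $p = 0$, giving $h(p) \geq 0$ and therefore $g'(p) \geq 0$ for every $p \neq 0$. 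Monotonicity of $M = e^{g}$ follows and extends to $p = 0, \pm\infty$ by continuity.

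For part \ref{lem:monotone_b} I would parametrize the simplex by $w_1, \ldots, w_{d-1}$ with $w_d = 1 - \sum_{j<d} w_j$, so that increasing $w_i$ shifts mass from coordinate $d$ to coordinate $i$. For $p \neq 0$ we have $M^p = u_d^p + \sum_{j<d} w_j(u_j^p - u_d^p)$, whence $\partial(M^p)/\partial w_i = u_i^p - u_d^p$ has a sign that is constant in $\bfw$; since $t \mapsto t^{1/p}$ is monotone on $(0,\infty)$, $M$ itself is monotone in $w_i$, with direction determined by the signs of $p$ and of $u_i - u_d$. The $p = 0$ case is identical using $\log M = \log u_d + \sum_{j<d} w_j(\log u_j - \log u_d)$.

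For part \ref{lem:quasilinear_c} I would invoke that affine functions are quasilinear, that quasilinearity is preserved under monotone transformations (in either direction, swapping or keeping the quasiconvex/quasiconcave roles), and that linear-fractional functions are quasilinear on the region where the denominator is positive. For $p \neq 0$, $\sum_i w_i u_i^p$ is affine in $\bfw$ and $M$ is its image under the monotone map $t \mapsto t^{1/p}$, hence quasilinear, while for $p = 0$ the claim follows since $\log M$ is linear in $\bfw$. For the difference, $\log M(\bfu;\bfw,p) - \log M(\bfv;\bfw,p) = \frac{1}{p}\log\frac{\sum_i w_i u_i^p}{\sum_i w_i v_i^p}$ is a monotone transformation of a ratio of two affine functions of $\bfw$, i.e. a linear-fractional function, and is therefore quasilinear; for $p = 0$ the difference reduces to $\sum_i w_i \log(u_i/v_i)$, which is linear. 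I expect the main obstacle to be part \ref{lem:monotone_a}, namely establishing the power-mean inequality cleanly and verifying the $p = 0$ and $\pm\infty$ boundary behavior by continuity, whereas parts \ref{lem:monotone_b} and \ref{lem:quasilinear_c} become routine convex-analysis arguments once the affine and linear-fractional structure in $\bfw$ is exposed.
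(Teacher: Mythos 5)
Your proposal is correct. For parts (b) and (c) it is essentially the paper's own argument: the paper also eliminates $w_d = 1 - \sum_{i<d} w_i$ and reads off the constant sign of $\partial \log M/\partial w_i$ from $u_i - u_d$ (differentiating $\log M$ rather than your $M^p$, an immaterial difference), and for (c) it likewise cites that a ratio of linear functions with strictly positive denominator is quasilinear and that the monotone maps $x \mapsto x^{1/p}$ and $\log$ preserve quasilinearity, obtaining the statement for $M$ itself via the special case $\bfv = \mathbf{1}_d$. Part (a) is where you take a genuinely different route, and it buys you something real. The paper normalizes $t_i = u_i/u_{\max}$, notes $\log\left(\sum_i w_i t_i^p\right) \le 0$, and for $p > q > 0$ chains $0 \le \log\left(\sum_i w_i t_i^p\right)(p^{-1} - q^{-1}) = p^{-1}\log\left(\sum_i w_i t_i^p\right) - q^{-1}\log\left(\sum_i w_i t_i^p\right) \le p^{-1}\log\left(\sum_i w_i t_i^p\right) - q^{-1}\log\left(\sum_i w_i t_i^q\right)$; but that last step requires $\sum_i w_i t_i^p \ge \sum_i w_i t_i^q$, whereas $t_i \le 1$ and $p > q$ give the reverse pointwise inequality, so as written the step runs backwards --- what is actually needed there is Jensen's inequality applied to $x \mapsto x^{p/q}$, i.e., a form of the power-mean inequality being proven. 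Your cumulant-generating-function argument ($K'' = \mathrm{Var}(X) \ge 0$ under the tilted measure, $K(0) = 0$, hence $h(p) = pK'(p) - K(p) \ge 0$ and $g'(p) = h(p)/p^2 \ge 0$) is self-contained, treats $p > 0$ and $p < 0$ uniformly, and avoids that delicate step entirely, at the cost of a little calculus in place of the paper's intended purely algebraic chain; the continuity extensions to $p \in \{0, \pm\infty\}$ are the same in both. One small slip on your side, which does not affect the claim: in part (b) the direction of monotonicity depends only on $\mathrm{sign}(u_i - u_d)$, since for $p < 0$ the sign reversal of $u_i^p - u_d^p$ is cancelled by $t \mapsto t^{1/p}$ being decreasing.
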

This monotonicity of the power mean in $\bfw$ and $p$ was also noted by \citet{qi2000new}. A proof for the above lemma is provided in  \cref{apdx:proof_increasing_p_w}.

\section{Cardinal Social Welfare}
\label{section:cardinal_social_welfare}
We first consider the case where we know the cardinal value of the social choice associated with each action. Learning in this setting thus corresponds to regression. Formally, we assume an underlying distribution $\calD: [u_{\min}, u_{\max}]^d \times [u_{\min}, u_{\max}]$ over the utilities and social welfare values. We receive i.i.d samples $\set{(\bfu_i, y_i)}_{i = 1}^{n} \sim \calD^n$, $\bfu_i$ being the utility vector and $y_i \in [u_{i(1)}, u_{i(d)}]$ being the social welfare value associated with action $i$.

We consider the $\ell_2$ loss over $M(\bfu_i; \bfw, p)$ and $y_i$. The true risk in this case is
\begin{align*}
R(\bfw, p) &= \bbE_{(\bfu, y) \sim \calD}\sqbrac{\paren{M(\bfu; \bfw, p) - y}^2}.
\end{align*}
To analyze the PAC learnability of this setting, we first provide bounds on the pseudo-dimensions\footnote{For a formal definition of pseudo-dimension, refer to \cref{def::pseudo-dimension}. A comprehensive review can be found in \citet{vidyasagar2003vapnik}.} of $\calM_{\bfw, d}$ and $\calM_{d}$. We begin by noting that 
\[
M(\bfu; \bfw, p) = u_{(d)} \cdot M\left(\bfr; \bfw, p\right), \quad \text{where } \bfr \in [d] \text{ and } r_i = \frac{u_i}{u_{(d)}}.
\]
Since \(M(\bfr; \bfw, p) \in [0, 1]\), we can determine the pseudo-dimensions of this function class.

We now define the function classes
\begin{align*}
    \calS_{\bfw, d} &= \left\{ f(\bfu; \bfw, p) = M(\bfr; \bfw, p) \mid (\bfw, p) \in \Delta_{d - 1} \times \bbR \right\}, \\  
    \calS_{d} &= \left\{ f(\bfu; \bfw, p) = M(\bfr; \bfw, p) \mid p \in \bbR \right\}.
\end{align*}
We then have the following bounds on pseudo-dimensions:

\begin{lemma}
\label{lemma:log_power_mean_pseudo_dim}
\begin{enumerate}[label = (\alph*)]
    \item \label{lemma:log_power_mean_pseudo_dim_part_a} If $\bfw$ is known, then $\pdim(\calS_{\bfw, d}) = 1$.
    \item \label{lemma:log_power_mean_pseudo_dim_part_b} If $\bfw$ is not known, then $\pdim(\calS_{d}) < 8d (\log_2 d + 1)$.
\end{enumerate}
\end{lemma}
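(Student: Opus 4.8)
My plan is to treat the two parts separately, in both cases passing to the threshold/subgraph view of pseudo-dimension (\cref{def::pseudo-dimension}): a set of points is pseudo-shattered iff, after fixing witness thresholds $t_j$, every sign pattern of the functions $M(\bfr_j;\bfw,p)-t_j$ is realized by some admissible parameter choice. Thus both parts reduce to counting the sign patterns the family $\{M(\bfr_j;\cdot,\cdot)-t_j\}_j$ can produce as the free parameters range over their domain.

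For part \ref{lemma:log_power_mean_pseudo_dim_part_a} the only free parameter is $p$ (weights known), and I would lean entirely on the monotonicity in \cref{prop:increasing_p_w}\ref{lem:monotone_a}. For the lower bound a single $\bfu$ with non-constant coordinates suffices: as $p$ sweeps $\bbR$, $M(\bfr;\bfw,p)$ traverses the nondegenerate interval $(r_{(1)},r_{(d)})$, so any interior threshold is crossed and both labels are achievable, giving $\pdim(\calS_{\bfw,d})\ge 1$. For the upper bound I would show no two points are shattered: for fixed $\bfw$ each $g_k(p)=M(\bfr_k;\bfw,p)$ is nondecreasing in $p$, so each set $\{p: g_k(p) > t_k\}$ is an upward ray. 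Realizing the pattern $(+,-)$ forces the first ray to start before the second, while $(-,+)$ forces the opposite; both cannot hold at once, so one of the four patterns is always missing and $\pdim \le 1$. Together these give $\pdim(\calS_{\bfw,d}) = 1$.

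For part \ref{lemma:log_power_mean_pseudo_dim_part_b} the parameter is $(\bfw,p)$ of dimension $d$, and the argument becomes a growth-function count built from two structural facts. First, linearize at fixed $p$: for $p>0$, $M(\bfr_j;\bfw,p)\ge t_j \iff \sum_i w_i r_{ji}^p \ge t_j^p$, an affine halfspace in $\bfw\in\Delta_{d-1}$ (the cases $p<0$ and $p=0$ flip the inequality or pass to the log form and are handled analogously, cf.\ \cref{prop:increasing_p_w}\ref{lem:quasilinear_c}). Hence at fixed $p$ the $m$ constraints form a hyperplane arrangement in dimension $d-1$, realizing at most $\sum_{i=0}^{d}\binom{m}{i}$ patterns. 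Second, control the sweep over $p$ using \cref{prop:increasing_p_w}\ref{lem:monotone_a} again: since $M(\bfr_j;\bfw,p)$ is monotone in $p$, each zero set $\{M(\bfr_j;\bfw,p)=t_j\}$ is the graph of a single function $p=\phi_j(\bfw)$ over $\bfw$-space, so the sign pattern at $(\bfw,p)$ is determined solely by the position of $p$ among the ordered values $\phi_1(\bfw),\dots,\phi_m(\bfw)$. The number of realizable patterns is therefore at most $(m+1)$ times the number of distinct orderings of the $\phi_j$, i.e.\ the cells cut out by the $\binom{m}{2}$ crossing surfaces $\{\phi_i=\phi_j\}$ in $\bbR^{d-1}$.

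Combining the two steps yields a growth-function bound of the form $\mathrm{poly}(m)^{O(d)}$; substituting into the shattering inequality $2^m \le \mathrm{poly}(m)^{O(d)}$ and solving the resulting $m = O(d\log m)$ gives $m = O(d\log d)$, and tracking constants should tighten this to the stated $\pdim(\calS_d) < 8d(\log_2 d + 1)$. I expect the main obstacle to be exactly this final count: because $p$ enters through the exponentials $r_{ji}^p$, the surfaces involved are non-polynomial, so the standard Warren and Milnor--Thom sign-pattern bounds do not apply off the shelf. I plan to absorb this using the two structural pieces already isolated — affinity in $\bfw$ at fixed $p$ and monotonicity in $p$ — together with a root bound for exponential sums (a sum with $L$ distinct exponents has at most $L-1$ real zeros) to bound the number of $p$-values at which the $\bfw$-arrangement changes combinatorial type, thereby keeping the exponent of $m$ linear in $d$ rather than quadratic.
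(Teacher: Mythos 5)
Your part~(a) is correct and is essentially the paper's own argument: monotonicity of $M(\bfr;\bfw,p)$ in $p$ (\cref{prop:increasing_p_w}) makes each above-threshold set a ray in $p$, so the patterns $(+,-)$ and $(-,+)$ cannot both be realized over two points, while one point with distinct coordinates is shattered.

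Part~(b) has a genuine gap, and it sits exactly where you flag it. Your two structural observations (affine halfspaces in $\bfw$ at fixed $p$; a monotone sweep in $p$) do match the skeleton of the correct proof, but neither of your framings completes the count. The cell-counting framing\,---\,cells of the $\binom{m}{2}$ crossing surfaces $\set{\phi_i = \phi_j}$ in $\bbR^{d-1}$\,---\,is unusable as stated, as you yourself note: these surfaces are level sets of exponential sums, not algebraic sets, so Warren/Milnor--Thom sign-pattern bounds do not apply and you supply no substitute. Your fallback\,---\,bound the number of values of $p$ at which the halfspace arrangement in $\bfw$ changes combinatorial type, using root bounds for exponential sums\,---\,is the right idea (it is what the paper does), but it needs two nontrivial pieces that your proposal leaves entirely unproven. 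First, one must show that the set of realizable sign patterns can change only at a $p$ where some $d$ of the normal vectors $\bfh_j(p)$ become linearly dependent; this is not a routine ``general position'' remark. The paper devotes a dedicated lemma to it, applying the intermediate value theorem to $m(p)=\max_{\bfw}\min_j \ell_j\dotprod{\bfw}{\bfh_j(p)}$ and then a perturbation argument showing that $m(p)=0$ with all $d$-subsets independent is contradictory. Second, one must actually execute the root count: the $d\times d$ determinant of such a subset expands into an exponential sum in $p$ with at most $2^d\, d!$ terms, hence at most $2^d\, d!-1$ zeros, giving at most $2^d\, d!\binom{m}{d}$ critical values of $p$; the final bound then comes from verifying $(m+1)^d\, 2^d\, d! \binom{m}{d} < 2^m$ at $m = 8(\lceil d\log_2 d\rceil+1)$. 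Your generic ``$\mathrm{poly}(m)^{O(d)}$, then track constants'' does not by itself deliver the stated constant.

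One further contrast: the paper does not rerun this machinery for the pseudo-dimension at all. It observes that, since $y = M(y\cdot\mathbf{1}_d;\bfw,p)$ for every $(\bfw,p)$, thresholding is a special case of comparison, $\sign{M(\bfu;\bfw,p)-y} = \sign{\log M(\bfu;\bfw,p)-\log M(y\cdot\mathbf{1}_d;\bfw,p)}$, so $\pdim(\calS_d)$ is bounded by $\VC(\calC_d)$ from \cref{lemma:vc_ordinal}\ref{lemma:vc_ordinal_part_b}, where all of the above work is done once. Your direct linearization $\sum_i w_i r_{ji}^p \ge t_j^p = \sum_i w_i t_j^p$ is this same homogenization in disguise, so your setup loses nothing\,---\,but the work you defer is precisely the content of that lemma, which is the heart of the result.
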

A detailed proof is provided in \cref{apdx:proof_log_power_mean_pseudo_dim}. 

We highlight the fact that $p$ and $\bfw$ are the parameters of the log power mean function family, which calls for the novel bounds provided in this work. These bounds on the pseudo-dimensions can now be used to obtain PAC bounds:

\begin{theorem}
\label{thm:cardinal_pac}
Given a set of samples $\left\{(\mathbf{u}_i, y_i)\right\}_{i=1}^{n}$ drawn from a distribution $\mathcal{D}^n$, for any $\delta > 0$, the following holds with probability at least $1 - \delta$ with respect to the $\ell_2$ loss function:
\vspace{-1em}
\begin{enumerate}[label = (\alph*)]
    \item If $\bfw$ is known, then
    \vspace{-0.7em}
    \begin{align*}
    &R(\bfw, \hat{p}) - \inf_{p \in \bbR} R(\bfw, p) \leq 16 \xi \paren{\sqrt{\frac{2 \log 2 + 2 \log n}{n}} + \frac{c}{\sqrt{n}}} + 6 \sqrt{\frac{\log(4 / \delta)}{2n}}
    \end{align*}
    \item If $\bfw$ is unknown, then
    \vspace{-0.7em}
    \begin{align*}
    R(\hat{\bfw}, \hat{p}) - \inf_{(\bfw, p) \in \Delta_{d - 1} \times \bbR} R(\bfw, p) \leq &16 \xi \paren{\sqrt{\frac{2 \log 2 + 16(d \log_2 d + 1) \log n}{n}} + \frac{c}{\sqrt{n}}}  \\
    & \quad\quad+ 6 \sqrt{\frac{\log(4 / \delta)}{2n}}
    \end{align*}
\end{enumerate}
where $\xi = u_{\max} \paren{u_{\max} - u_{\min}}$.
\vspace{-0.7em}
\end{theorem}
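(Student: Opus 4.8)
The plan is to prove both parts with a single uniform-convergence argument whose only input that differs between (a) and (b) is the pseudo-dimension supplied by \cref{lemma:log_power_mean_pseudo_dim}: $\pdim(\calS_{\bfw,d}) = 1$ in the known-weight case and $\pdim(\calS_d) < 8d(\log_2 d + 1)$ in the unknown-weight case. First I would use the defining optimality of the ERM estimator to reduce the excess risk to a two-sided uniform deviation, namely $R(\hat\bfw,\hat p) - \inf_{(\bfw,p)} R(\bfw,p) \le 2\sup_{(\bfw,p)} \abs{\hat R(\bfw,p) - R(\bfw,p)}$, where $\hat R$ denotes the empirical $\ell_2$ risk. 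It then suffices to bound this supremum in expectation and to add a high-probability concentration term.

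For the expectation I would symmetrize and then contract. Because $M(\bfu;\bfw,p)$ and $y$ both lie in $[u_{\min},u_{\max}]$, the squared error is bounded, so symmetrization bounds $\bbE\sup\abs{\hat R - R}$ by twice the expected empirical Rademacher complexity of the loss class. To remove the square I would exploit the normalization $M(\bfu;\bfw,p) = u_{(d)}\,M(\bfr;\bfw,p)$ introduced before the statement and regard the per-sample loss as $\psi_i(s) = (u_{(d)i}\,s - y_i)^2$ acting on $s = M(\bfr_i;\bfw,p) \in [0,1]$. A direct derivative bound gives $\abs{\psi_i'(s)} \le 2u_{(d)i}\,\abs{u_{(d)i}s - y_i} \le 2u_{\max}(u_{\max}-u_{\min}) = 2\xi$ uniformly over samples, so the (per-coordinate) Ledoux--Talagrand contraction inequality bounds the Rademacher complexity of the loss class by $4\xi$ times that of the $[0,1]$-valued class $\calS_{\bfw,d}$ (resp.\ $\calS_d$). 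This is precisely why routing the argument through the normalized classes is convenient and why $\xi$ enters as a single scale factor.

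It remains to bound $\frakhatR_n(\calS_{\bfw,d})$ (resp.\ $\frakhatR_n(\calS_d)$) by the pseudo-dimension. I would convert the pseudo-dimension into an $\ell_2$ covering-number bound of Haussler--Pollard type, $N(\epsilon) \lesssim (C/\epsilon)^{\pdim}$ for $[0,1]$-valued functions, take a cover at a scale of order $1/\sqrt n$, apply Massart's finite-class lemma to the cover, and add back the covering radius. This produces the two summands $\sqrt{(2\log 2 + 2\,\pdim\,\log n)/n}$ and $c/\sqrt n$ that appear inside the parentheses; substituting $\pdim = 1$ and $\pdim < 8d(\log_2 d + 1)$ then yields the stated $\log n$ coefficients $2$ and $16(d\log_2 d + 1)$. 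Multiplying through the accumulated constants ($2$ from the ERM reduction, $2$ from symmetrization, and $4\xi$ from contraction) gives the $16\xi$ prefactor. Finally, since replacing a single sample changes $\hat R$ by $\calO(1/n)$, a McDiarmid bounded-differences inequality, together with a union bound over the concentration events, upgrades the in-expectation bound to hold with probability $1-\delta$ and contributes the additive term $6\sqrt{\log(4/\delta)/(2n)}$.

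The step I expect to be the main obstacle is the complexity control, for two reasons. The dependence of $M(\bfu;\bfw,p)$ on the parameters $(\bfw,p)$ is strongly nonlinear, so there is no direct Lipschitz-in-parameters route to a Rademacher bound; this is exactly what the pseudo-dimension estimates of \cref{lemma:log_power_mean_pseudo_dim} are for. Moreover, the genuine loss functions are not $[0,1]$-valued because of the sample-dependent rescaling by $u_{(d)i}$, so care is needed to absorb that rescaling into the uniform Lipschitz constant $2\xi$ at the contraction step and to verify that the pseudo-dimension is unchanged when passing between $\calM$ and the normalized class $\calS$ --- equivalently, that shattering a set with per-point thresholds $t_i$ is invariant under multiplying the $i$-th function value by the positive constant $u_{(d)i}$.
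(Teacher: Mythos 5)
Your proposal is correct and follows essentially the same route as the paper's own proof: an ERM reduction to a two-sided uniform deviation, symmetrization plus a McDiarmid-type concentration term, Talagrand contraction to strip the squared loss, and the pseudo-dimension of \cref{lemma:log_power_mean_pseudo_dim} converted into a covering-number bound on the $[0,1]$-valued class $\calS_{d}$ (this is exactly \cref{lemma:rad_cardinal}), yielding the $\sqrt{(2 \log 2 + 2\,\pdim \log n)/n} + c/\sqrt{n}$ term. The only difference is organizational: the paper contracts $\ell_2 \circ \calM_{d}$ with Lipschitz constant $2(u_{\max}-u_{\min})$ and then separately factors $u_{\max}$ out of $\frakhatR(\calM_{d})$, whereas you absorb the per-sample rescaling $u_{(d)i}$ into sample-dependent quadratics with the uniform Lipschitz constant $2\xi$ in a single contraction step; both give the same $16\xi$ prefactor.
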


For the complete proof, see \cref{apdx:proof_cardinal_pac}. Below, we provide a proof sketch. 
\begin{proofsketch}
We first use the pseudo-dimensions found above to bound the Rademacher complexity of $\calM_{d}$ and $\calM_{\bfw, d}$ in \cref{lemma:rad_cardinal}. Since \( M(\bfu_i; \bfw, p) \in [u_{\min}, u_{\max}]\) and \( y_i \in [u_{\min}, u_{\max}]\), the $\ell_2$ loss function in this case has domain \([u_{\min} - u_{\max}, u_{\max} - u_{\min}]\). It is Lipschitz continuous on this domain with Lipschitz constant \( 2 \xi \). Using \cref{lemma:rad_cardinal} and Talagrand's contraction lemma, we obtain the bounds
\[
\frakhatR(\ell \circ \calM_{\bfw, d}) \leq 2 (u_{\max} - u_{\min}) \frakhatR(\calM_{\bfw, d}) \quad \text{and} \quad \frakhatR(\ell \circ \calM_{d}) \leq 2 (u_{\max} - u_{\min}) \frakhatR(\calM_{d}).
\]
These Rademacher complexity bounds are then used to obtain the uniform convergence bounds above.
\end{proofsketch}
These bounds are distribution-free, with the only assumption being that all utilities and social welfare values are in the range $[u_{\min}, u_{\max}]$. They also imply an $\calO(1)$ and $\calO(d \log d)$ dependence of sample complexity on $d$ for known and unknown weights respectively. Moreover, we observe the dependence of the upper bound on $u_{\max} - u_{\min}$ for the $\ell_2$ loss. We note that when $u_{\max} = u_{\min} = u_0$, all utilities and social welfare function values are also $u_0$. In this case, the Rademacher complexity bound is also zero, which is expected.

Computationally, $M(\bfu; \bfw, p)$ is non-convex in $\bfw$ and $p$, which means that the $\ell_2$ loss is also non-convex. However, we observe that from \cref{prop:increasing_p_w} \ref{lem:quasilinear_c}, $M(\bfu; \bfw, p)$ is quasilinear w.r.t. $\bfw$ with fixed $p$, which makes the $\ell_2$ loss function quasi-convex for all $(\bfu, y)$\footnote{A detailed explanation of the quasi-convexity of $\ell_2$ loss is provided in \cref{apdx:cardinal_quasiconvexity}.}. We use this fact to construct a practical algorithm.

A shortcoming of this setting is that decision-makers are required to provide a social welfare value for each action. A more natural setting might be when decision-makers only provide their preferences between actions\,---\,potentially just their \emph{revealed} preferences, i.e., the choices they have made in the past\,---\,and we address this case next. 

\section{Pairwise Preference Between Actions}
\label{section:pairwise_preferences}
For this setting, we assume an underlying distribution $\calD: [u_{\min}, u_{\max}]^d \times [u_{\min}, u_{\max}]^d \times \set{\pm 1}.$ We obtain i.i.d. samples $\set{((\bfu_i, \bfv_i), y_i)}_{i = 1}^{n} \sim \calD^n$, where $(\bfu_i, \bfv_i)$ are the utilities for the $i$-th pair of actions, and $y_i$ is a comparison between their (latent) social choice values. We encode the comparison function as $C: [u_{\min}, u_{\max}]^d \times [u_{\min}, u_{\max}]^d \to \set{\pm 1}$, with
\begin{align*}
C((\bfu, \bfv); \bfw, p) = \sign{\log M(\bfu; \bfw, p) - \log M(\bfv; \bfw, p)}.
\end{align*}
We denote the family of above functions by $\calC_{\bfw, d} = \set{C((\bfu, \bfv); \bfw, p): p \in \bbR}$ when the weights are known, and $\calC_{d} = \set{C((\bfu, \bfv); \bfw, p): p \in \bbR, \bfw \in \Delta_{d - 1}}$ when the weights are unknown. We consider learning with $0-1$ loss over $C((\bfu_i, \bfv_i); \bfw, p)$ and $y_i$. The true risk in this case is
\begin{align*}
R(\bfw, p) &= \bbE_{((\bfu, \bfv), y) \sim \calD} \sqbrac{\frac{(1 + y \cdot C((\bfu, \bfv); \bfw, p))}{2}}.
\end{align*}
To provide convergence guarantees for the above setting, we bound the VC dimension of the comparison-based function classes mentioned above
\begin{lemma}
\label{lemma:vc_ordinal}
\begin{enumerate}[label = (\alph*)]
\item \label{lemma:vc_ordinal_part_a} If $\bfw$ is known, then $\VC(\calC_{\bfw, d}) < 2(\log_2 d + 1)$.
\vspace{-0.7em}
\item \label{lemma:vc_ordinal_part_b} If $\bfw$ is unknown, then $\VC(\calC_{d}) < 8(d \log_2 d + 1)$.
\vspace{-0.7em}
\item \label{lemma:vc_ordinal_part_c} (Lower bounds): $\VC(\calC_{d})\geq  \log_2 d + 1$, and $\VC(\calC_{\bfw, d}) \geq d - 1$
\end{enumerate}
\end{lemma}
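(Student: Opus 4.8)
The three parts require different machinery: one–dimensional root counting for (a), an arrangement/sweep argument in the full parameter space for (b), and explicit shattered constructions for the matching lower bounds in (c). I treat them in turn.

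\textbf{Part (a).} The plan is to reduce a single comparison to the sign of one univariate exponential sum and count its sign changes. Fix a pair $(\bfu,\bfv)$. Since $\log$ is increasing and $M>0$, for $p\neq 0$ one has $\sign{\log M(\bfu;\bfw,p)-\log M(\bfv;\bfw,p)}=\sign{p\,h(p)}$, where $h(p)=\sum_{i=1}^d w_i\paren{u_i^p-v_i^p}$; the $p=0$ (geometric–mean) case follows by continuity. Now $h(p)=\sum_i w_i e^{p\log u_i}-\sum_i w_i e^{p\log v_i}$ is a real exponential sum with at most $2d$ distinct frequencies, so by the classical zero bound it has at most $2d-1$ real zeros. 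Because $h(0)=\sum_i w_i-\sum_i w_i=0$ always, at most $2d-2$ of these are nonzero, and hence $p\mapsto\sign{p\,h(p)}$ changes sign at most $2d-1$ times. Thus every classifier in $\calC_{\bfw,d}$, viewed as a function of $p$ on a fixed pair, is piecewise constant with at most $2d-1$ breakpoints. For $m$ pairs there are at most $m(2d-1)$ breakpoints in total, partitioning $\bbR$ into at most $m(2d-1)+1$ intervals on each of which the label vector is constant, so at most $m(2d-1)+1$ dichotomies are realizable. Shattering $m$ points requires $2^m\le m(2d-1)+1$, and a short calculation shows this fails once $m\ge 2(\log_2 d+1)$, giving (a).

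\textbf{Part (b).} For unknown weights I would count sign patterns in the full parameter space $\Theta=\Delta_{d-1}\times\bbR$. The key structural fact, from \cref{prop:increasing_p_w}\ref{lem:quasilinear_c}, is that for fixed $p$ the comparison is a linear threshold in $\bfw$: $\sign{\log M(\bfu;\bfw,p)-\log M(\bfv;\bfw,p)}=\sign{p}\cdot\sign{\dotprod{\bfw}{\mathbf{a}(p)}}$ with $\mathbf{a}(p)=(u_i^p-v_i^p)_{i=1}^d$. Hence, restricting to $p>0$ (and symmetrically $p<0$), the $m$ sample pairs induce an arrangement of $m$ affine hyperplanes in the $(d-1)$–dimensional simplex, which has at most $\sum_{i=0}^{d-1}\binom{m}{i}$ cells. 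The plan is then to sweep $p$ and bound how often this arrangement changes combinatorially: a change can occur only where $d$ of the hyperplanes have a common point, i.e.\ at a zero of one of the $\binom{m}{d}$ determinants formed from the hyperplane coefficients, each of which is again an exponential sum in $p$ (with at most $(d+1)^d$ frequencies after expansion) and hence has finitely many zeros. This bounds the number of distinct arrangements by $\binom{m}{d}(d+1)^d+1$. Multiplying the per–arrangement cell count by the number of arrangements bounds the growth function by a quantity of the form $m^{cd}(d+1)^{cd}$ for a constant $c$; setting this $\ge 2^m$ and taking logarithms (the $(d+1)^d$ factor contributes only $d\log_2 d$) yields $m=\calO(d\log d)$, and tracking constants gives $\VC(\calC_d)<8(d\log_2 d+1)$. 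The same reduction underlies \cref{lemma:log_power_mean_pseudo_dim_part_b}, since for fixed $p$ the threshold $M(\bfr;\bfw,p)>t$ is likewise affine in $\bfw$.

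\textbf{Part (c).} For the $\Omega(d)$ lower bound, which applies when the weights are unknown, I fix $p=1$, so that $C((\bfu,\bfv);\bfw,1)=\sign{\dotprod{\bfw}{\bfu-\bfv}}$ is a homogeneous halfspace in $\bfw\in\Delta_{d-1}$. Taking $\bfu_j-\bfv_j\propto e_j-c\mathbf{1}$ for a small constant $c$ (realizable inside the box $[u_{\min},u_{\max}]^d$ by centering the two vectors and adding $\pm\tfrac{\delta}{2}(e_j-c\mathbf{1})$ for small $\delta$) makes $\dotprod{\bfw}{\bfu_j-\bfv_j}=w_j-c$, so $C=\sign{w_j-c}$; every dichotomy of the $d-1$ points $j\in[d-1]$ is then realized by placing each $w_j$ slightly above or below $c$ while keeping $\bfw$ in the simplex, proving $\VC(\calC_d)\ge d-1$. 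For the $\Omega(\log d)$ lower bound, which applies when the weights are fixed, I would exploit the $2d-1$ sign–change budget established in (a): construct $k=\lfloor\log_2 d\rfloor+1$ pairs whose univariate sign functions $p\mapsto\sign{p\,h_j(p)}$ realize, over a partition of the $p$–axis into at most $2^k\le 2d$ intervals, exactly the $k$ binary–digit functions, so that the joint sign vector runs through all $2^k$ patterns. The utility vectors are chosen with geometrically spaced coordinates, so that the dominant term of each power sum switches at prescribed values of $p$, placing the required crossings.

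\textbf{Main obstacle.} The hardest step is the $p$–sweep in (b): one must argue rigorously that the family of realizable $\bfw$–dichotomies is constant between consecutive critical values of $p$, and bound those critical values through the determinant exponential sums, taking care that the crude $(d+1)^d$ frequency count is harmless precisely because it enters only logarithmically. The second delicate point is the explicit construction for the logarithmic lower bound in (c): verifying that the geometrically spaced utility vectors really produce sign functions equal to the binary–digit functions (equivalently, that the prescribed crossings are simple and correctly ordered) is where the care lies, while the remaining verifications are routine.
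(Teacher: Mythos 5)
Parts (a) and (b) of your proposal take essentially the same route as the paper. For (a), the paper also reduces each comparison to the sign of the exponential sum $\sum_i w_i(u_i^p - v_i^p)$, invokes the classical zero bound (Jameson's Theorem 4.6, \cref{lemma::num_of_zeros}) to get at most $2d-1$ sign changes per pair, and runs the identical interval-counting argument $2^m \le m(2d-1)+1$. For (b), the paper likewise fixes $p$ to get halfspaces $\sign{\dotprod{\bfw}{\sign{p}(\bfu^p-\bfv^p)}}$, sweeps $p$, and bounds the critical values by zeros of $\binom{n}{d}$ determinants, each an exponential sum with $2^d d!$ terms (your $(d+1)^d$ frequency count is an equally serviceable bound). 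The step you flag as the main obstacle --- that the realizable set of $\bfw$-dichotomies is constant between critical values of $p$ --- is exactly the lemma the paper proves, via continuity of $m(p)=\max_{\bfw}\min_i \ell_i\dotprod{\bfw}{\bfh_i(p)}$ on the unit sphere and the intermediate value theorem; so your plan for (b) is completable and is the paper's plan. Your $d-1$ lower bound at $p=1$ also matches the paper, and your explicit $\bfu_j-\bfv_j\propto e_j - c\mathbf{1}$ shattering is if anything more self-contained than the paper's appeal to the VC dimension of positive homogeneous linear classifiers. (Both you and the paper's proof body attach $d-1$ to the unknown-weights class and $\log_2 d+1$ to known weights; the subscripts in the printed statement of part (c) are evidently swapped, since $\VC(\calC_{\bfw,d})\ge d-1$ would contradict part (a). Your reading is the consistent one.)

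The genuine gap is in your $\Omega(\log d)$ construction. You propose that the $k=\lfloor\log_2 d\rfloor+1$ pairs realize the binary-digit functions, so that the joint label vector enumerates all $2^k$ patterns in counting order. But then the pair encoding the least significant bit must flip at every one of the $2^k-1\approx 2d-1$ interval boundaries, and this collides with the sign-change budget you yourself established in (a): with equal (e.g.\ uniform) weights a single pair admits at most $d$ zeros, and even for general fixed weights $2d-1$ is only an upper bound --- the available constructive tool (Jameson's Theorem 3.4, quoted as \cref{lemma::exact_eq}) guarantees only $k<d$ crossings placed at prescribed locations. So the counting-order enumeration cannot be realized; at best it would shatter roughly $\log_2 d$ points, and only modulo an unproven claim about placing the maximal number of crossings. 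The paper's fix is precisely to replace counting order by a \emph{balanced Gray code} ordering of the patterns: successive patterns differ in a single bit, so the total number of flips is $2^k-1$ and a balanced code spreads them so that each pair needs at most $2^k/2\le d$ (in fact roughly $2^k/k$) crossings, which fits the per-pair budget and what \cref{lemma::exact_eq} can construct. Without this (or some other enumeration with few flips per coordinate), your part (c) lower bound of $\log_2 d + 1$ does not go through.
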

The detailed proof of the above lemma is provided in \cref{ap:vc_ordinal}. 

We find the asymptotically tight lower bound for the known weights case rather surprising, as it is \emph{a priori} unclear that the correct bound should be superconstant and scale with $d$. 

The finiteness of VC dimension guarantees PAC learnability, and we get uniform convergence bounds using the VC theorem.
\begin{theorem} 
\label{thm:preference_noiseless}
Given samples $\set{((\bfu_i, \bfv_i), y_i)}_{i = 1}^{n} \sim \calD^n$ and for 0-1 loss and any $\delta > 0$, with probability at least $1 - \delta$,
\vspace{-1em}
\begin{enumerate}[label = (\alph*)]
\item If $\bfw$ is known, then
\vspace{-0.7em}
\begin{align*}
&R(\bfw, \hat{p}) - \inf_{p \in \bbR} R(\bfw, p) \leq 16\sqrt{\frac{ 2(\log_2 d + 1) \log (n + 1) + \log (8 / \delta)}{n}}
\end{align*}
\item If $\bfw$ is unknown, then
\vspace{-0.7em}
\begin{align*}
&R(\hat{\bfw}, \hat{p}) - \inf_{(\bfw, p) \in \Delta_{d - 1} \times \bbR} R(\bfw, p) \leq 16\sqrt{\frac{ 8(d \log_2 d + 1) \log (n + 1) + \log (8 / \delta)}{n}}
\end{align*}
\end{enumerate}
\end{theorem}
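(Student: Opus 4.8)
The plan is to derive both excess-risk bounds as a direct consequence of the VC-dimension bounds in \cref{lemma:vc_ordinal} together with the classical VC uniform-convergence theorem, applied to the $0$--$1$ loss class induced by $\calC_{\bfw,d}$ and $\calC_{d}$. Because each comparison function takes values in $\set{\pm 1}$ and the $0$--$1$ loss is a fixed transformation of $y \cdot C((\bfu,\bfv);\bfw,p)$, the loss class inherits exactly the shattering behavior of the underlying concept class, so its growth function is controlled by the VC dimension directly; unlike the cardinal case, no Lipschitz/contraction argument is needed.

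First I would invoke the Sauer--Shelah lemma to convert the VC bounds into growth-function bounds. Writing $v_{\bfw} = \VC(\calC_{\bfw,d}) < 2(\log_2 d + 1)$ and $v = \VC(\calC_{d}) < 8(d\log_2 d + 1)$, the number of distinct labelings on $n$ points satisfies $\Pi(n) \le (n+1)^{v_{\bfw}} \le (n+1)^{2(\log_2 d + 1)}$ in the known-weights case, and $\Pi(n) \le (n+1)^{v} \le (n+1)^{8(d\log_2 d + 1)}$ in the unknown-weights case, using that $(n+1) \ge 1$ so the bound is monotone in the exponent. Next I would apply the standard symmetrization-based VC inequality $\bbP\paren{\sup_{h}\abs{R(h) - \hat{R}(h)} > \epsilon} \le 8\,\Pi(n)\,e^{-n\epsilon^2/32}$, where $\hat{R}$ denotes empirical risk and the supremum ranges over the relevant class. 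Setting the right-hand side equal to $\delta$ and solving for $\epsilon$ yields, with probability at least $1-\delta$, a uniform deviation bound of the form $\sup_h \abs{R(h)-\hat{R}(h)} \le \sqrt{32\paren{v_{\bullet}\log(n+1) + \log(8/\delta)}/n}$, which after substituting the VC upper bounds reproduces the quantities appearing inside the stated square roots.

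The final step is the usual ERM comparison. Let $\hat{h}$ denote the empirical risk minimizer (i.e.\ $\hat{p}$ in part (a) and $(\hat{\bfw},\hat{p})$ in part (b)), and let $h^*$ be any competing hypothesis. The decomposition $R(\hat{h}) - R(h^*) = \paren{R(\hat{h}) - \hat{R}(\hat{h})} + \paren{\hat{R}(\hat{h}) - \hat{R}(h^*)} + \paren{\hat{R}(h^*) - R(h^*)}$ has a nonpositive middle term by optimality of $\hat{h}$, so the left side is at most $2\sup_h\abs{R(h)-\hat{R}(h)}$. Crucially this right-hand side does not depend on $h^*$, so taking the supremum over $h^*$ turns the left side into $R(\hat{h}) - \inf_h R(h)$; this is precisely what lets me pass to the infimum over the non-compact range $p \in \bbR$ (and $\bfw \in \Delta_{d-1}$) without assuming a minimizer is attained. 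Combining with the uniform deviation bound and absorbing the factor of $2$ into the leading constant produces the claimed inequalities.

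I expect no single hard obstacle here, but two minor points warrant care: (i) confirming the empirical risk minimizer is well defined over an infinite class, which follows because $\hat{R}$ takes only finitely many values by Sauer--Shelah; and (ii) pinning down the numerical constant, since the exact leading factor (here $16$) depends on which form of the VC concentration inequality one quotes, and I would select the version whose constants reproduce the stated bound. All the conceptual difficulty resides in \cref{lemma:vc_ordinal}; given those VC bounds, \cref{thm:preference_noiseless} is a bookkeeping application of VC theory.
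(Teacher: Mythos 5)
Your proposal is correct and follows exactly the route the paper takes: the paper proves \cref{thm:preference_noiseless} by combining the VC-dimension bounds of \cref{lemma:vc_ordinal} with the classical VC uniform-convergence theorem and the standard ERM decomposition, which is precisely your argument (your constant $2\sqrt{32} = \sqrt{128} \le 16$ even leaves slack relative to the stated factor of $16$). Your two points of care --- passing to the infimum over the non-compact parameter range without assuming a minimizer, and the dependence of the leading constant on which form of the VC inequality is quoted --- are both handled correctly.
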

We note that unlike the bounds on $\ell_2$ loss of \cref{thm:cardinal_pac}, these bounds on 0-1 loss are independent of the range of utility values and only depend on $d$. They provide sample complexity bounds which depend on $d$ as $\calO(\log d)$ and $\calO(d \log d)$ for known and unknown weights respectively. Despite these PAC guarantees, empirical risk minimization can be particularly difficult in this case, since the loss function as well as the function class $\log M(\bfu; \bfw, p) - \log M(\bfv; \bfw, p)$ can be non-convex. To illustrate this non-convexity, we plot the value of the above function for two pairs of utility vectors with respect to $p$ in \cref{img:comparison_nonconvexity}, with $d = 6$ and $\bfw = \mathbf{1}_d / d$. However, the quasilinearity of $\log M(\bfu; \bfw, p) - \log M(\bfv; \bfw, p)$ with fixed $p$ can be used to design efficient algorithms.

\subsection{Convergence Bounds Under I.I.D Noise}
Decision making can be especially challenging if two actions are difficult to compare, and the preference data we obtain can potentially be noisy. We first consider each comparison to be mislabeled in an i.i.d. manner with known probability $\rho \in [0, 1/2)$. We make use of the framework developed by \citet{natarajan2013learning}, and we consider convergence guarantees under 0-1 loss. 

Specifically, the unbiased estimator of $\ell_{0-1}$ is 
\begin{align*}
\tilde{\ell}_{0-1}(t, y) = \frac{(1 - \rho) \ell_{0-1}(t, y) - \rho \ell_{0-1}(t, -y)}{1 - 2 \rho}.
\end{align*}

We conduct ERM with respect to $\tilde{\ell}_{0-1}$ to obtain $(\hat{\bfw}, \hat{p}) \in \Delta_{d - 1} \times \bbR$ (only learning $p$ if weights are known). We observe that $\ell_{0-1}(t, y) = (1 + ty) / 2$ is $1/2$-Lipschitz in $t$, $\forall t, y \in \set{\pm 1}$. Using Theorem 3 of \citet{natarajan2013learning}, we get the following convergence bounds:
\begin{theorem}
\label{thm:ordinal_iid_noise}
Given samples $\set{((\bfu_i, \bfv_i), y_i)}_{i = 1}^{n} \sim \calD^n$, for any $\delta > 0$ and for any $\rho \in [0, 1/2)$, with probability at least $1 - \delta$ with respect to 0-1 loss,
\begin{enumerate}[label = (\alph*)]
\item If $\bfw$ is known, then
\begin{align*}
&R(\hat{\bfw}, \hat{p}) - \inf_{p \in \bbR} R(\bfw, p) \leq \frac{8}{1 - 2 \rho} \sqrt{\frac{(\log_2 d + 1) \log (n + 1)}{n}} + 2 \sqrt{\frac{\log(1 / \delta)}{2n}}
\end{align*}
\item If $\bfw$ is unknown, then
\begin{align*}
&R(\hat{\bfw}, \hat{p}) - \inf_{(\bfw, p) \in \Delta_{d - 1} \times \bbR} R(\bfw, p) \leq \frac{16}{1 - 2 \rho} \sqrt{\frac{(d \log_2 d + 1) \log (n + 1)}{n}} + 2 \sqrt{\frac{\log(1 / \delta)}{2n}}
\end{align*}
\end{enumerate}
\end{theorem}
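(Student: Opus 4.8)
The plan is to treat this statement as essentially a corollary of the method of unbiased estimators of \citet{natarajan2013learning}, with the only problem-specific input being the VC bounds already established in \cref{lemma:vc_ordinal}. The starting observation is that the surrogate $\tilde{\ell}_{0-1}$ is constructed precisely so that, for a true label $y$ and the noisy label $\tilde{y}$ drawn under symmetric mislabeling probability $\rho$, one has $\bbE_{\tilde{y} \mid y}\sqbrac{\tilde{\ell}_{0-1}(t, \tilde{y})} = \ell_{0-1}(t, y)$. Consequently the population $\tilde{\ell}_{0-1}$-risk computed under the noisy distribution coincides with the clean $0$--$1$ risk $R(\bfw, p)$, so running ERM on $\tilde{\ell}_{0-1}$ with the observed (noisy) comparisons is an unbiased proxy for minimizing the clean risk. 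This reduces the task to controlling the uniform deviation between the empirical $\tilde{\ell}_{0-1}$-risk and $R$ over the relevant comparison class $\calF \in \set{\calC_{\bfw, d}, \calC_{d}}$.

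For that control I would invoke Theorem 3 of \citet{natarajan2013learning} directly. Its hypothesis requires the base loss to be Lipschitz in its first argument, which holds here since $\ell_{0-1}(t, y) = (1 + t y)/2$ is $1/2$-Lipschitz in $t$; the surrogate then inherits the scaling $1/(1 - 2\rho)$, giving the effective constant $L_\rho = \tfrac{2 \cdot (1/2)}{1 - 2\rho} = \tfrac{1}{1 - 2\rho}$. Their theorem yields, with probability $1 - \delta$, a uniform bound of the shape $2 L_\rho \frakhatR_n(\calF) + \sqrt{\log(1/\delta)/(2n)}$ on $\sup_{\calF}\abs{\hat{R}_{\tilde{\ell}} - R}$. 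Passing from this uniform deviation bound to an excess-risk bound for the ERM output doubles both the complexity term and the confidence term, producing $4 L_\rho \frakhatR_n(\calF) + 2\sqrt{\log(1/\delta)/(2n)}$, which already has the shape of the claimed bounds and supplies the $2\sqrt{\log(1/\delta)/(2n)}$ tail.

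The remaining step is to replace the Rademacher complexity by the VC dimension. Because $\calC_{\bfw, d}$ and $\calC_{d}$ are $\set{\pm 1}$-valued, Sauer--Shelah bounds their growth function by $(n + 1)^{\VC}$, and Massart's finite-class lemma then gives $\frakhatR_n(\calF) \le \sqrt{2\,\VC(\calF)\log(n + 1)/n}$. Substituting the two bounds from \cref{lemma:vc_ordinal}, namely $\VC(\calC_{\bfw, d}) < 2(\log_2 d + 1)$ and $\VC(\calC_{d}) < 8(d \log_2 d + 1)$, the square roots contribute prefactors $2$ and $4$ respectively; multiplying by $4 L_\rho = 4/(1 - 2\rho)$ then yields exactly the stated constants $8$ and $16$ in front of $\tfrac{1}{1 - 2\rho}\sqrt{\cdot}$.

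The main obstacle is not any single hard estimate but making the pieces compose correctly: verifying that the discrete, binary-prediction setting still fits the hypotheses of the cited theorem (in particular that the $1/2$-Lipschitzness of $\ell_{0-1}$ over $\set{\pm 1}$ is the relevant quantity, and that the surrogate's dependence on $\rho$ is exactly $1/(1 - 2\rho)$), and tracking the numerical constants through the contraction step and the VC-to-Rademacher conversion so that the factors $8$ and $16$ emerge precisely rather than up to an unspecified constant. Everything substantive about the function class itself has already been absorbed into \cref{lemma:vc_ordinal}, so beyond this bookkeeping the argument is largely mechanical.
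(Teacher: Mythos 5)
Your proposal is correct and takes essentially the same route as the paper: both invoke Theorem 3 of \citet{natarajan2013learning} with the $1/2$-Lipschitz loss $\ell_{0-1}(t,y) = (1+ty)/2$, and both convert the VC bounds of \cref{lemma:vc_ordinal} into Rademacher complexity bounds via the Sauer--Shelah/Massart argument before substituting into the excess-risk bound. The only difference is bookkeeping of the surrogate's Lipschitz constant: you use the generic $L_\rho = 2L/(1-2\rho) = 1/(1-2\rho)$, which reproduces the stated constants $8$ and $16$ exactly, whereas the paper uses the symmetric-noise value $L_\rho = 1/(2(1-2\rho))$ and thus actually derives constants a factor of $2$ smaller than those stated in the theorem; both are valid upper bounds.
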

A detailed proof of the above theorem is provided in \cref{apdx:proof_ordinal_iid_noise}. 

We note that although ERM is conducted with respect to $\tilde{\ell}_{0-1}$ on the noisy distribution, the risks are defined on the underlying noiseless distribution. This gives $\mathcal{O}(\log d / (1 - 2 \rho)^2)$ and $\mathcal{O}(d \log d / (1 - 2 \rho)^2)$ sample complexities for the known and unknown weights cases respectively. We note that when $\rho = 0$, the above bounds reduce to the noiseless bounds in \cref{thm:preference_noiseless}. Since the noise level $\rho$ is usually not known to us, it can be estimated using cross-validation as suggested by \citet{natarajan2013learning}. 

However, conducting ERM on $\tilde{\ell}_{0-1}$ might be prohibitively difficult due to the non-convex nature of the function. An i.i.d noise model might also be inappropriate in certain settings; we next consider a more natural noise model. 

\section{Pairwise Preference With Logistic Noise}
\label{section:logistic_noise}
 Intuitively, we expect that two actions would be harder to compare if their social welfare values are closer to each other. We formalize this intuition in the form of a noise model inspired by the BTL noise model \cite{Bradley1952RankAO, luce2005individual}. Let $\bfw^*$ and $p^*$ be the true power mean parameters, and let $\tau^* \in [0, \tau_{\max}]$ be a temperature parameter. For an action pair $(\bfu, \bfv)$, we assume that the probability of $\bfu$ being preferred to $\bfv$ is
\begin{align}
\label{eq:logistic_noise_prob}
\begin{split}
\bbP&\paren{y = 1 \vert (\bfu, \bfv); \bfw^*, p^*, \tau^*} = \frac{1}{1 + \exp\paren{-\tau^* \paren{\log M(\bfu; \bfw^*, p^*) - \log M(\bfv; \bfw^*, p^*}}}
\end{split}
\end{align}
We see that a larger difference between the log power means of $\bfu$ and $\bfv$ translates to a higher probability of $\bfu$ being preferred. If $\bfu$ and $\bfv$ lie on the same level set of $\log M(\cdot; \bfw^*, p^*)$, the probability becomes 0.5, which matches the intuition of both actions being equally preferred. We also note the dependence of the probability on $\tau^*$: a higher $\tau^*$ corresponds to more confidence in the preferences, with $\tau^* = 0$ meaning indifference for all pairs of actions. The mislabeling probability is also invariant to scaling of $\bfu$ and $\bfv$.

Our learning task now becomes estimating $\bfw$, $p$ and $\tau$ given data. We denote the function family in this case by \[
\calT_{\bfw, d} = \set{\tau \paren{\log M(\cdot; \bfw, p) - \log M(\cdot; \bfw, p)} \vert \tau, p}
\] when the weights are known, and \[
\calT_{d} = \set{\tau \paren{\log M(\cdot; \bfw, p) - \log M(\cdot; \bfw, p)} \vert \tau, \bfw, p}
\] when the weights are unknown. A natural loss function to consider in this case is negative log likelihood, and we consider PAC learnability with this loss. Using the framework developed in \cref{section:cardinal_social_welfare}, we obtain the following PAC bounds:
\begin{theorem} Given samples $\set{((\bfu_i, \bfv_i), y_i)}_{i = 1}^{n} \sim \calD^n$ and for negative log likelihood loss, for all $\delta > 0$, with probability at least $1 - \delta$,
\label{thm:logistic_noise}
\vspace{-1em}
\begin{enumerate}[label = (\alph*)]
\item If $\bfw$ is known, then
\vspace{-0.7em}
\begin{align*}
&R(\bfw, \hat{p}) - \inf_{p \in \bbR} R(\bfw, p) \leq 16 \tau_{\max} \kappa \paren{\sqrt{\frac{2 \log 2 + 2\log n}{n}} + \frac{c}{\sqrt{n}}} + 6  \sqrt{\frac{\log(4 / \delta)}{2n}}
\end{align*}
\item If $\bfw$ is unknown, then
\vspace{-0.7em}
\begin{align*}
R(\hat{\bfw}, \hat{p}) - \inf_{(\bfw, p) \in \Delta_{d - 1} \times \bbR} R(\bfw, p) &\leq 16 \tau_{\max} \kappa \sqrt{\frac{2 \log 2 + 16(d \log_2 d + 1) \log n}{n}} \\
&+ 16 \tau_{\max} \kappa \frac{2c}{\sqrt{n}} + 3 \sqrt{\frac{\log(4 / \delta)}{2n}}
\end{align*}
\end{enumerate}
where $\kappa = \log(u_{\max} / u_{\min})$.
\end{theorem}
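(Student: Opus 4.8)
The plan is to reproduce, almost verbatim, the pipeline behind \cref{thm:cardinal_pac}: bound the pseudo-dimensions of the new classes $\calT_{\bfw, d}$ and $\calT_d$, convert them into Rademacher complexity bounds, push the Rademacher bound through the loss using Talagrand's contraction lemma, and close with a symmetrization plus McDiarmid uniform-convergence argument. The only structural change is that the scale factor $\xi$ of the cardinal case is replaced by $\tau_{\max}\kappa$; the $2\log 2 + 2D\log n$ entropy term and the $c/\sqrt{n}$ chaining remainder should then transfer unchanged once the correct complexity measure $D$ and the correct range are identified. Throughout write $\phi(\bfu, \bfv; \bfw, p) = \log M(\bfu; \bfw, p) - \log M(\bfv; \bfw, p)$.

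The first step is to peel the temperature $\tau$ off from the shape parameters so that it does not inflate the entropy term. Every member of $\calT_{\bfw, d}$ is of the form $\tau \phi$ with $\tau \in [0, \tau_{\max}]$, and $\tau\phi = \tau_{\max}\paren{\tfrac{\tau}{\tau_{\max}}\phi + (1 - \tfrac{\tau}{\tau_{\max}})\cdot 0}$ lies in $\tau_{\max}\,\mathrm{conv}\paren{\set{\phi(\cdot; \bfw, p) : p}\cup\set{0}}$. Since Rademacher complexity is invariant under taking the convex hull and is unaffected (up to a harmless $\calO(1/\sqrt n)$ term) by adjoining the zero function, this gives $\frakhatR(\calT_{\bfw, d}) \leq \tau_{\max}\,\frakhatR(\set{\phi(\cdot; \bfw, p) : p})$, and likewise $\frakhatR(\calT_{d}) \leq \tau_{\max}\,\frakhatR(\set{\phi(\cdot; \bfw, p) : p, \bfw})$. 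It remains to bound the Rademacher complexity of the unscaled difference class, which I would do through its pseudo-dimension. Here homogeneity of the power mean is the crucial ingredient: because $M(c\bfv; \bfw, p) = c\,M(\bfv; \bfw, p)$, the subgraph indicator satisfies $\sign{\phi(\bfu, \bfv; \bfw, p) - r} = C((\bfu, e^{r}\bfv); \bfw, p)$, so thresholding the difference class at level $r$ is exactly a member of $\calC_{\bfw, d}$ evaluated on the rescaled pair $(\bfu, e^{r}\bfv)$. The rescaling depends only on the fixed threshold $r$, \emph{not} on the parameters $p$ or $\bfw$, so pseudo-shattering of the difference class reduces to shattering of the transformed points by $\calC_{\bfw, d}$ (resp.\ $\calC_d$); hence $\pdim(\set{\phi(\cdot; \bfw, p)}) \leq \VC(\calC_{\bfw, d})$ and the unknown-weight analogue is $\leq \VC(\calC_d)$, and \cref{lemma:vc_ordinal} supplies the $\calO(\log d)$ and $\calO(d\log d)$ bounds. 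Feeding these pseudo-dimensions, together with the range bound $\abs{\phi} \leq \log(u_{\max}/u_{\min}) = \kappa$, into the pseudo-dimension-to-Rademacher estimate (the analogue of \cref{lemma:rad_cardinal}) yields Rademacher bounds of order $\tau_{\max}\kappa\sqrt{D\log n/n}$.

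The second step is the loss. The negative log-likelihood of the model in \cref{eq:logistic_noise_prob} is $\ell(t, y) = \log\paren{1 + e^{-yt}}$ with $t = \tau\phi$ and $y \in \set{\pm 1}$; its derivative in $t$ has magnitude $e^{-yt}/(1 + e^{-yt}) \leq 1$, so $\ell$ is $1$-Lipschitz in its first argument, and on the effective domain $\abs{t}\leq \tau_{\max}\kappa$ it is bounded by $\tau_{\max}\kappa + \log 2$. Talagrand's contraction lemma then gives $\frakhatR(\ell\circ\calT_{\bfw, d}) \leq \frakhatR(\calT_{\bfw, d})$ (and likewise for $\calT_d$), while the boundedness of $\ell$ furnishes the bounded-differences constant for McDiarmid's inequality. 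Combining the symmetrization bound with McDiarmid and converting uniform convergence into an excess-risk bound for the ERM solution $\hat p$ (resp.\ $(\hat\bfw, \hat p)$) produces the two stated inequalities, with $\tau_{\max}\kappa$ occupying the role $\xi$ played in the cardinal theorem.

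I expect the pseudo-dimension step to be the main obstacle, and specifically the disentangling of the temperature $\tau$ from the shape parameters without paying for it in the entropy term. The convex-hull scaling reduction is exactly what keeps $\tau$ out of the complexity count, leaving it to act only by rescaling the range to $\tau_{\max}\kappa$, and the homogeneity identity is what collapses the pseudo-dimension of the difference class onto the already-established VC bound for $\calC$. A secondary point deserving care is that the effective range of $t$ is finite \emph{only} because $\tau \leq \tau_{\max}$ and $\abs{\phi}\leq\kappa$; absent the a priori cap $\tau_{\max}$ the logistic loss would be neither bounded nor usable with a uniform Lipschitz constant, so the dependence of the final bound on both $\tau_{\max}$ and $\kappa$ is intrinsic rather than an artifact of the analysis.
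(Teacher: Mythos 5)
Your skeleton (peel off $\tau$, bound the complexity of the parameter-indexed difference class, contract through the $1$-Lipschitz negative log-likelihood, finish with McDiarmid and the standard ERM decomposition) matches the paper's, and your homogeneity identity $\sign{\phi(\bfu, \bfv; \bfw, p) - r} = C\paren{(\bfu, e^{r}\bfv); \bfw, p}$ is a correct and rather elegant way to bound the pseudo-dimension of the difference class: pseudo-shattering of $\set{(\bfu_i, \bfv_i)}$ with witnesses $r_i$ is literally VC-shattering of the rescaled pairs $(\bfu_i, e^{r_i}\bfv_i)$ by the comparison class. For part (b) this bottoms out at $\VC(\calC_d) \leq 8(d\log_2 d + 1)$, which is exactly the quantity the paper's own route uses, so your argument does recover the unknown-weights bound. (One small point: the rescaled vectors $e^{r_i}\bfv_i$ can leave $[u_{\min}, u_{\max}]^d$, so you need the observation that the root-counting proof of \cref{lemma:vc_ordinal} applies to arbitrary positive vectors; this is true but should be stated.)

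The genuine gap is part (a). Your reduction gives $\pdim\paren{\set{\phi(\cdot; \bfw, p) : p}} \leq \VC(\calC_{\bfw, d}) \leq 2(\log_2 d + 1)$, so for known weights your entropy term is of order $\log d \cdot \log n$, i.e., you prove part (a) only with an extra $\log d$ factor inside the square root, whereas the stated bound, $\sqrt{(2\log 2 + 2\log n)/n}$, is dimension-free. This loss is not slack you can remove by tightening your estimate: by the paper's own lower-bound construction (the Gray-code argument behind \cref{lemma:vc_ordinal}), the sign class of $\phi$ at threshold $r = 0$ already shatters $\log_2 d + 1$ points for a suitable fixed $\bfw$, so the pseudo-dimension of the difference class genuinely scales as $\Theta(\log d)$; any route that measures the complexity of the difference class as a whole must pay this. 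The paper avoids it by never treating the difference as a single class: it splits $\sup_{\tau, \bfw, p} \sum_i \epsilon_i \tau \paren{\log M(\bfu_i; \bfw, p) - \log M(\bfv_i; \bfw, p)}$ into two suprema by subadditivity, affinely renormalizes each $\log M$ term into a $[0,1]$-valued class, and then exploits that each \emph{individual} $\log M(\cdot; \bfw, p)$ is monotone in $p$ for fixed $\bfw$ (\cref{prop:increasing_p_w}), so each half has pseudo-dimension exactly $1$ by \cref{lemma:log_power_mean_pseudo_dim}. Monotonicity in $p$ is destroyed by taking differences --- that is precisely why $\VC(\calC_{\bfw,d})$ is superconstant --- so to obtain the stated known-weights bound you must split before counting, paying a factor $2$ from subadditivity instead of a $\log d$ in the entropy term. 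Your handling of $\tau$ (the convex-hull peeling) is fine and is equivalent to the paper's device of pulling the nonnegative scalar bound $\tau \leq \tau_{\max}$ out of the absolute supremum; no repair is needed there.
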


We derive this result in detail in \cref{apdx:proof_logistic_noise}.

This gives us sample complexity bounds of $\calO(1)$ and $\calO(d \log d)$ with respect to $d$ for the known and unknown weights cases respectively, thus establishing PAC learnability. An important distinction between \cref{thm:ordinal_iid_noise} and the above theorem is that \cref{thm:ordinal_iid_noise} bounds risk with respect to 0-1 loss, while the above theorem bounds risk with respect to logistic loss which is continuous and hence easier to control. Moreover, we estimate the noise level $\tau$ in the logistic case along with $\bfw$ and $p$, whereas \cref{thm:ordinal_iid_noise} is concerned with estimating $\bfw$ and $p$. 

As with the previous cases, non-convexity in this setting also makes global optimization with respect to $\bfw$ and $p$ (and hence ERM) difficult. We observe that logistic loss is quasilinear in $\bfw$ with fixed $p$\footnote{A detailed explanation of this fact is provided in \cref{apdx:ordinal_quasilinearity}}, and this observation can be used to construct an effective algorithm.

\section{Empirical Results}
\label{sec:empirical}
We conduct several simulations on semi-synthetic data to gain additional insight into sample complexity and demonstrate an empirically effective algorithm. The implementation also serves to demonstrate the practicability of our approach, including the availability of individual utility functions. 

\textbf{Data.} The dataset we rely on (which is not publicly available) comes from the work of \citet{lee2019webuildai} with a US-based nonprofit that operates an on-demand donation transportation service supported by volunteers. WeBuildAI is a participatory framework that enables stakeholders, including donors, volunteers, recipient organizations, and nonprofit staff, to collaboratively design algorithmic policies for allocating donations. Donors provide food donations, volunteers transport the donations, recipient organizations receive and distribute the food, and dispatchers (nonprofit staff) manage the allocation and logistics. The ``actions'' are hundreds of recipient organizations that may receive an incoming donation. 

As part of this framework, \citet{lee2019webuildai} learned a (verifiably realistic) utility function over the actions for each of 18 stakeholders from the different groups based on 8 features: travel time between donors and recipients, recipient organization size, USDA-defined food access levels in recipient neighborhoods, median household income, poverty rates, the number of weeks since the last donation, the total number of donations received in the last three months, and the type of donation (common or uncommon).

In our simulations, we use the values of these stakeholder utility functions learned by \citet{lee2019webuildai} as the utility vectors. We fix a $p^*$ and weight vector $\bfw^*$ to generate the social welfare values $M(\bfu; \bfw, p)$. We use noisy versions of these social welfare values in the cardinal case, whereas noisy pairwise comparisons between random pairs of utility vectors are used in the ordinal case.

\textbf{Algorithm.} As noted in previous sections, $\ell_2$ and logistic losses are quasiconvex with respect to $\bfw$ for single samples when $p$ is fixed. Although the sum of quasiconvex functions is not guaranteed to be quasiconvex, we empirically observe that gradient descent on the loss function applied to the data can still lead to convergence to a minimum which has empirical risk comparable to that of the true parameters. As our simulations show, this minimum increasingly resembles $\bfw^*$ (the real weight) with decreasing noise. Thus, our algorithm consists of performing a grid search on $p$ and conducting gradient descent on $\bfw$ for each $p$. We provide more details about the algorithm in \cref{apdx:algorithm}.

\begin{figure}
\centering
\begin{subfigure}{0.32\textwidth}
    \includegraphics[width = \textwidth]{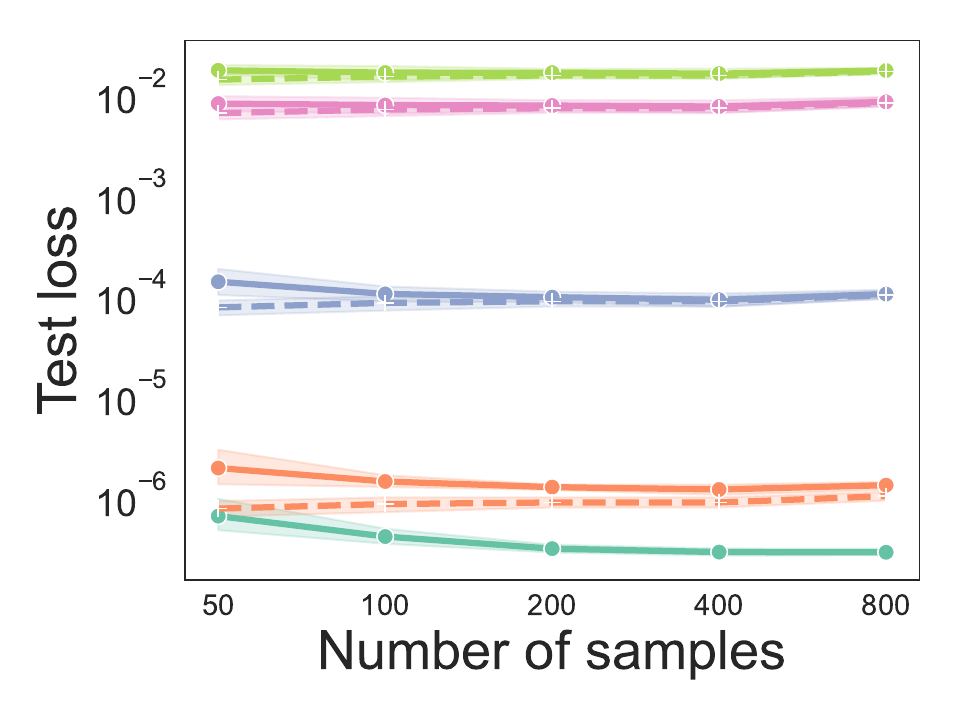}
\caption{Test loss}
\label{img:food_rescue_cardinal_test_loss}
\end{subfigure}
\begin{subfigure}{0.32\textwidth}
\includegraphics[width = \textwidth]{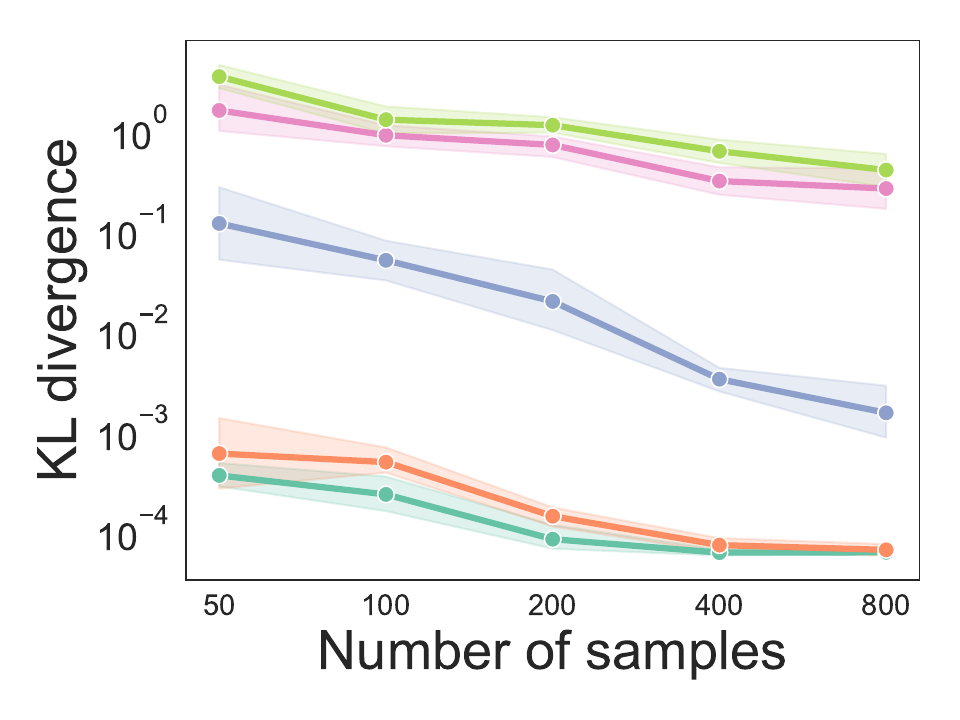}
\caption{$KL(\bfw^* \Vert \hat{\bfw})$}
\label{img:food_rescue_cardinal_weight_KL}
\end{subfigure}
\begin{subfigure}{0.32\textwidth}
\includegraphics[width = \textwidth]{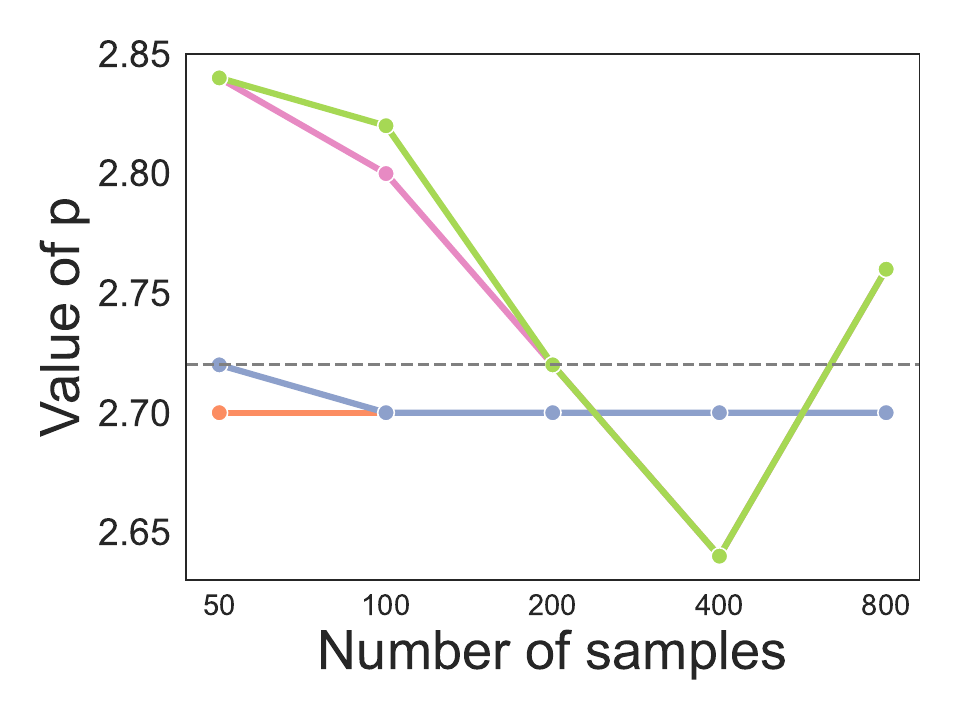}
\caption{Learnt $p$}
\label{img:food_rescue_cardinal_p}
\end{subfigure}
\begin{subfigure}{0.6\textwidth}
\includegraphics[width = \textwidth]{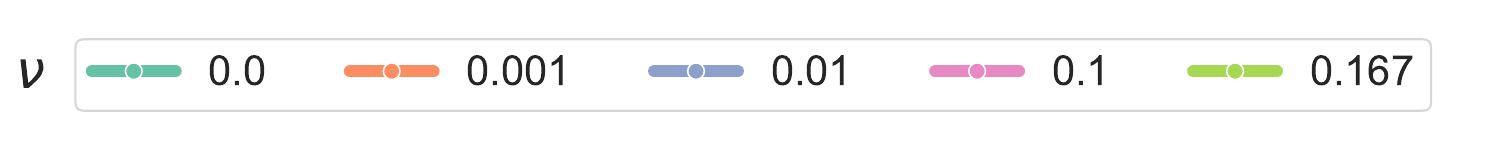}
\end{subfigure}
\caption{Results for cardinal case with number of samples. Different lines show results for different values of added noise $\nu$. Solid lines correspond to values for learned parameters, whereas dotted lines correspond to values for real parameters. }
\label{img:food_rescue_cardinal}
\end{figure}

\begin{figure}
\centering
\begin{subfigure}{0.32\textwidth}
\includegraphics[width = \textwidth]{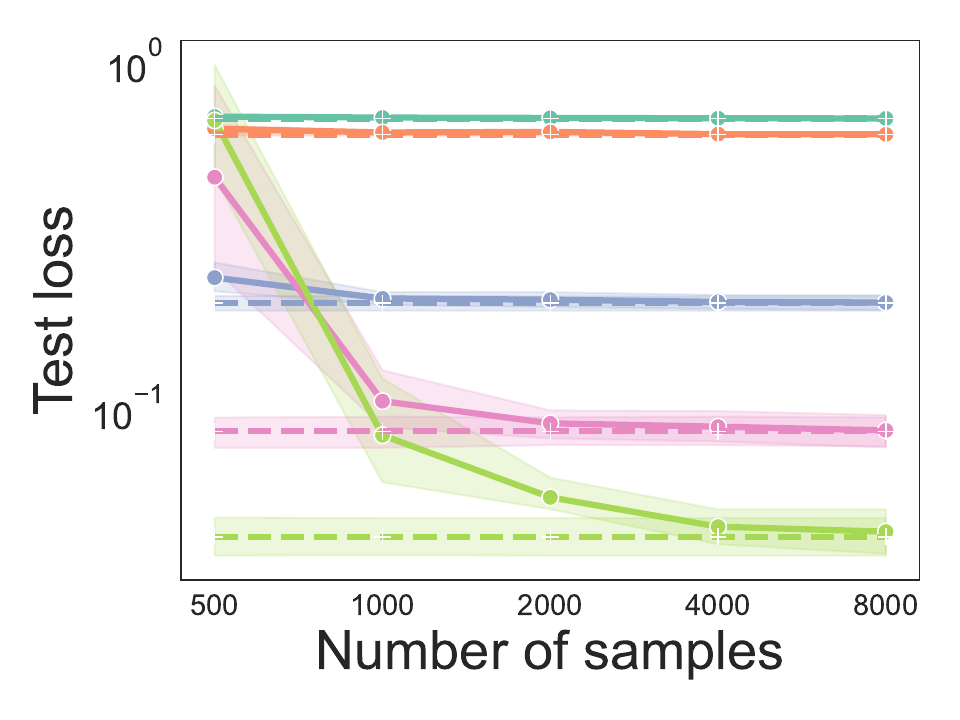}
\caption{Test loss}
\label{img:food_rescue_ordinal_test_loss}
\end{subfigure}
\begin{subfigure}{0.32\textwidth}
\includegraphics[width = \textwidth]{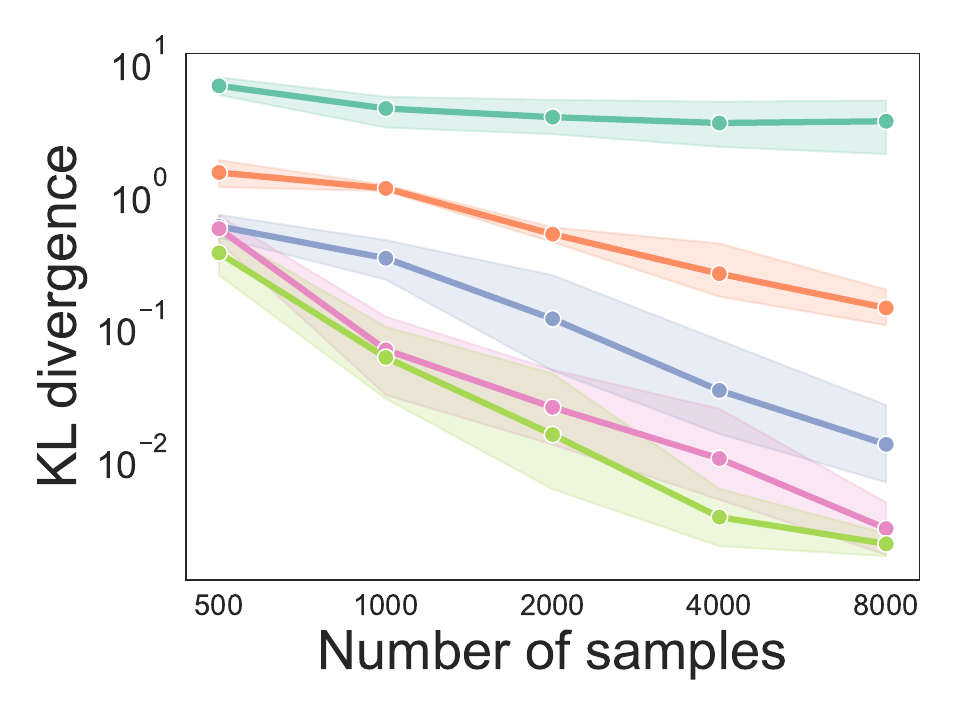}
\caption{$KL(\bfw^* \Vert \hat{\bfw})$}
\label{img:food_rescue_ordinal_weight_KL}
\end{subfigure}
\begin{subfigure}{0.32\textwidth}
\includegraphics[width = \textwidth]{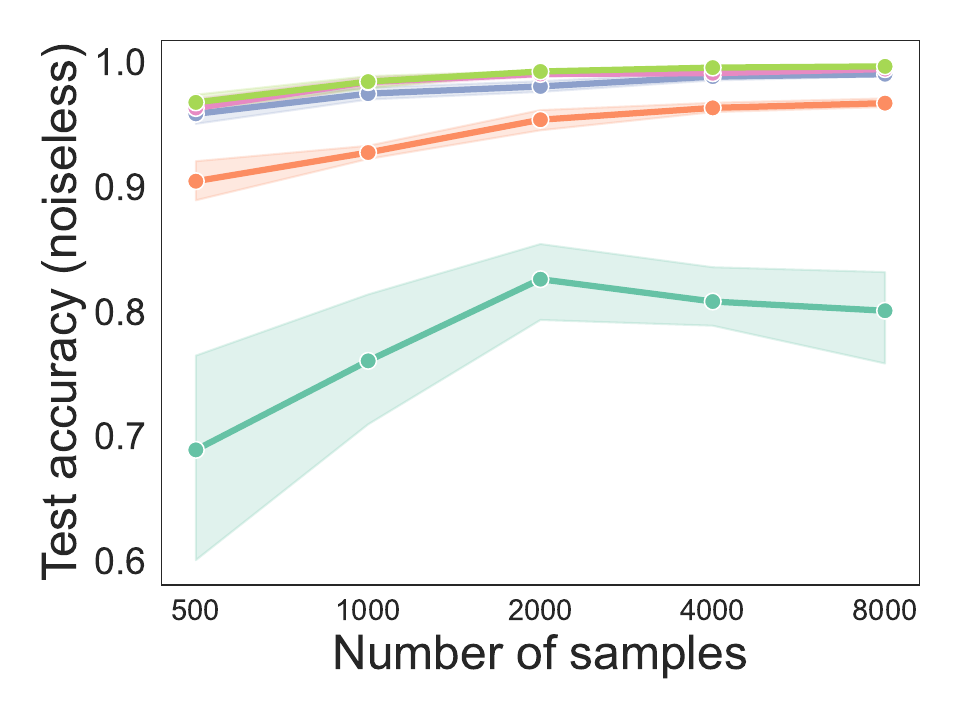}
\caption{Noiseless test accuracy}
\label{img:food_rescue_ordinal_correct_test_acc}
\end{subfigure}
\begin{subfigure}{0.6\textwidth}
\includegraphics[width = \textwidth]{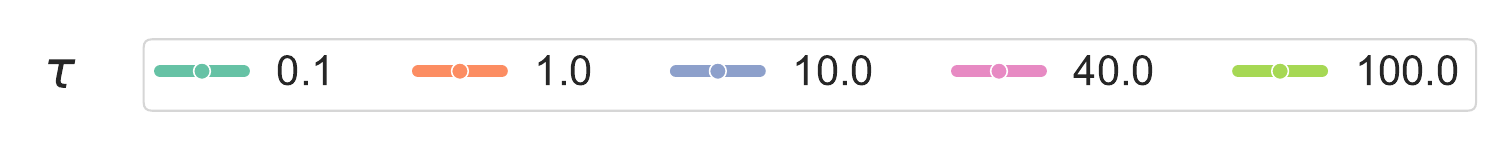}
\end{subfigure}
\caption{Results for ordinal case with number of samples. Different lines show results for different values of noise level $\tau$. Solid lines correspond to values for learned parameters, whereas dotted lines correspond to values for real parameters. }
\label{img:food_rescue_ordinal}
\vspace{-1em}
\end{figure}

\textbf{Cardinal case.} We consider $p^* = 2.72$ and a random weight $\bfw^*$. We then add Gaussian noise with standard deviation $\paren{u_{i(d)} - u_{i(1)}} \cdot \nu$ to each sample, where $\nu$ corresponds to the noise level. The Gaussian noise is clipped to stay within $\sqbrac{u_{i(1)}, u_{i(d)}}$. Finally, we learn $p$ and $\bfw$ using our algorithm, and we present the results in \cref{img:food_rescue_cardinal}.

In \cref{img:food_rescue_cardinal_test_loss}, we observe that the test loss for learned parameters decreases with decreasing noise and increasing number of samples. We also observe that the test loss for learned parameters closely matches that for real parameters in \cref{img:food_rescue_cardinal_test_loss}.
In \cref{img:food_rescue_cardinal_weight_KL}, we observe that KL divergence between the true and learnt weights decreases uniformly with decreasing noise and increasing number of samples. This supports the fact that our algorithm is indeed able to find the correct minimum. We also plot the trend of mean learned $p$ in \cref{img:food_rescue_cardinal_p}, and we observe that the learned $p$ increasingly resembles the real $p^*$ with lower noise and greater number of samples. Plots for train loss and loss on noiseless test data are provided in \cref{apdx:food_rescue_more}.

\textbf{Ordinal case.} We consider $p^* = -1.62$ and a random weight $\bfw^*$. We compare each sample in the considered training data with 10 other randomly chosen samples, with the comparisons being noised according to the logistic noise model in Equation \eqref{eq:logistic_noise_prob}. We then learn $\bfw$ and $\tau$ for each $p$ in the chosen grid and then choose the best $p$. Our results are shown in \cref{img:food_rescue_ordinal}. 

In \cref{img:food_rescue_ordinal_test_loss} we observe that the test loss for learned parameters matches that for real parameters for small $\tau^*$ and a large number of training samples. The relative deviation between test losses progressively increases for smaller numbers of samples and smaller $\tau^*$. We note that small $\tau^*$ corresponds to more noise in the comparisons, which results in higher losses. However, the deviation between learned loss and true loss is smaller, as we are also estimating the noise parameter, which is easier to estimate for small $\tau^*$, since the logistic function has a larger gradient.

We observe a uniform decrease in KL divergence between $\bfw$ and $\bfw^*$ for a larger number of samples and smaller $\tau^*$, again pointing to the effectiveness of the algorithm. We also observe that test accuracy on noiseless data increases with more samples and higher $\tau^*$. Interestingly, for $\tau^* = 0.1$ and $\tau^* = 1$, the test accuracy on noiseless data (\cref{img:food_rescue_ordinal_correct_test_acc}) is significantly higher than that on (noisy) test data, another indicator of effective ERM being conducted by the algorithm.

In \cref{img:food_rescue_ordinal_p} in \cref{apdx:food_rescue_more}, we observe greater variation in learned $p$ compared to the cardinal case. A possible reason behind this is that changes in $p$ result in smaller changes in losses for negative $p$ than for positive $p$. This hypothesis is supported by simulations for the ordinal case conducted for $p = 1.62$, with results presented in \cref{apdx:food_rescue_ordinal_positive_p}. In \cref{img:food_rescue_ordinal_positive_p_p}, we observe that learned $p$ is much more consistent with the real $p$ as $\tau^*$ decreases.

We also conduct simulations on fully synthetic data to study the effect of $d$, and we present the results in \cref{apdx:synthetic_expts}. We verify the theoretical $\calO(d \log d)$ scaling of error with unknown weights for the ordinal case in \cref{img:expt_logistic_dep}.

\vspace{-0.5em}
\section{Discussion}
\label{sec:discussion}
\vspace{-0.5em}
Our work has (at least) several limitations, which can inspire future work. First, as seen in \cref{sec:empirical}, we are able to gain access to realistic utility vectors, in this case ones based on models that were learned from pairwise comparisons. Utilities are also routinely estimated for other economically-motivated algorithms\,---\,say, Stackelberg security games~\cite{Tam12}. 

However, these estimates are of course not completely accurate. It is an interesting direction of future work to extend our results to the setting where the utility vectors need to be estimated, either by an outside expert, or using input from the individuals themselves.

Although our experiments demonstrate convergence of the algorithm to the correct minimum, rigorous theoretical analysis about the nature of minima for the $\ell_2$ and logistic loss functions is still needed and could lead to algorithmic improvements. One issue is that scaling the algorithm to the national scale\,--\, $d = 10^8$, say, can be prohibitively expensive.

Finally, our work only applies to weighted power mean functions. While we have argued that this family is both expressive and natural, it would be exciting to obtain results for even broader, potentially non-parametric families of social welfare functions.

The ability to learn social welfare functions can enable us to understand a decision maker's priorities and ideas of fairness, based on past decisions they have made. This has direct societal impact as these notions can be used to both understand biases and inform the design of improved fairness metrics. A second potential application is to imitate a successful decision maker's policy choices in future dilemmas or in other domains. This may pose some ethical questions if the learning model is misspecified; however, the restriction of the function class to weighted power means, which is inspired by natural social choice theory axioms, mitigates this risk.

\section*{Acknowledgements}
This research was partially supported by the National Science Foundation under grants IIS-2147187, IIS-2229881, and CCF-2007080; by the Office of Naval Research under grants N00014-20-1-2488, N00014-24-1-2704, and N00014-22-1-2181; and by the USDA under grant 2021-67021-35329.
\newpage

\bibliographystyle{plainnat}
\bibliography{references}

\newpage
\appendix
\section{Deferred Proofs}
\label{apdx:proofs}

\subsection{Proof of \texorpdfstring{\cref{prop:increasing_p_w}}{prop increasing p w}}
\label{apdx:proof_increasing_p_w}
\begin{proof}[Proof of \cref{prop:increasing_p_w} \ref{lem:monotone_a}]
Let \( p > 0 \). Define \( t_i = \frac{u_i}{u_{\max}} \) for all \( i \in [n] \).
 Since $t_i \leq 1$ for all $i \in [n]$ and given that $\sum_{i = 1}^{n} w_i = 1$, it follows that $\sum_{i = 1}^{n} w_i t_i^p \leq \sum_{i = 1}^{n} w_i = 1$.
Therefore, $\log(w_i t_i^p) \leq 0$ for each $i$. Given $p > q > 0$, we obtain
\begin{align*}
0 &\leq \log\left(\sum_{i = 1}^{d} w_i t_i^p\right) \left(p^{-1} - q^{-1}\right) \\
&= p^{-1} \log\left(\sum_{i = 1}^{d} w_i t_i^p\right) - q^{-1} \log\left(\sum_{i = 1}^{d} w_i t_i^p\right) \\
&\leq p^{-1} \log\left(\sum_{i = 1}^{d} w_i t_i^p\right) - q^{-1} \log\left(\sum_{i = 1}^{d} w_i t_i^q\right) \\
&= \left(u_{\max} + p^{-1} \log\sum_{i = 1}^{d} w_i t_i^p\right) - \left(u_{\max} + q^{-1} \log\sum_{i = 1}^{d} w_i t_i^q\right) \\
&= \log M(\mathbf{u}; \mathbf{w}, p) - \log M(\mathbf{u}; \mathbf{w}, q).
\end{align*}
This derivation similarly holds for the case $0 > p > q$, demonstrating the monotonicity of $\log M(\mathbf{u}; \mathbf{w}, p)$ with respect to $p$. The continuity of $\log M(\mathbf{u}; \mathbf{w}, p)$ with respect to $p$ at $p=0$ ensures the monotonicity for all $p \in \mathbb{R}$. Since $\log$ is a strictly increasing function, this guarantees the monotonicity of $M(\bfu; \bfw, p)$.
\end{proof}

\begin{proof}[Proof of \cref{prop:increasing_p_w} \ref{lem:monotone_b}]
Since $\sum_{i = 1}^{d} w_i = 1$, we express $w_d = 1 - \sum_{i = 1}^{d - 1} w_i$ and consider $d - 1$ variables. 

For $p \neq 0$, we have
\begin{align*}
\log M(\mathbf{u}; \mathbf{w}, p) &= p^{-1} \log\left(\sum_{i = 1}^{d - 1} w_i u_i^p + u_d^p \left(1 - \sum_{i = 1}^{d - 1} w_i\right)\right) \\
\implies \frac{\partial \log M(\mathbf{u}; \mathbf{w}, p)}{\partial w_i} &= p^{-1} \left(\frac{u_i^p - u_d^p}{\sum_{i = 1}^{d - 1} w_i u_i^p + u_d^p \left(1 - \sum_{i = 1}^{d - 1} w_i\right)}\right) \\
&= \left(\frac{u_i^p - u_d^p}{\sum_{i = 1}^{d} w_i u_i^p}\right) p^{-1}.
\end{align*}
\(\sum_{i = 1}^{d} w_i u_i^p\) is positive since it is a positive weighted sum of utilities. We aim to show that
\[
p^{-1} (u_i^p - u_d^p)> 0.
\]
Suppose that \(u_i > u_d\). If \(p > 0\), then \(u_i^p - u_d^p > 0\). Conversely, if \(p < 0\), then \(u_i^p - u_d^p < 0\), but since \(p^{-1}\) is also negative, the product remains positive. Thus, if \(u_i > u_d\), the log-norm increases with \(w_i\). A similar argument shows that the log-norm decreases if $u_i < u_d$. 

For $p = 0$, we have
\begin{align*}
\frac{\partial \log M(\mathbf{u}; \mathbf{w}, 0)}{\partial w_i} &= \frac{\partial}{\partial w_i} \left(\sum_{i = 1}^{d - 1} w_i \log u_i + \left(1 - \sum_{i = 1}^{d - 1}w_i\right) \log u_d\right) \\
&= \log \left(\frac{u_i}{u_d}\right) \\
&= \lim_{p \to 0} \frac{\partial \log M(\mathbf{u}; \mathbf{w}, p)}{\partial w_i}
\end{align*}
This indicates that for $u_i > u_d$, the derivative is positive, implying an increase, and negative for $u_i < u_d$, implying a decrease. Thus, the function is monotonic for all $w_i \in [d - 1]$. Since $\log$ is a strictly increasing function, this guarantees the monotonicity of $M(\bfu; \bfw, p)$.
\end{proof}

\begin{proof}[Proof of \cref{prop:increasing_p_w} \ref{lem:quasilinear_c}]
We prove that $\log M(\bfu; \bfw, p) - \log M(\bfv; \bfw, p)$ is quasilinear. The proof for $M(\bfu; \bfw, p)$ follows by setting $\bfv = \mathbf{1}_d$. 

As noted in \cite{Boyd_Vandenberghe_2004}, the ratio of two linear functions is quasilinear when the denominator is strictly positive. Given that $\dotprod{\bfw}{\bfv^p} > 0$ for all $\bfw \in \Delta_{d - 1}$, it follows that
\begin{align*}
f(\bfw) &= \frac{\dotprod{\bfw}{\bfu^p}}{\dotprod{\bfw}{\bfv^p}}
\end{align*}
is a quasilinear function. Furthermore, since $f(\bfw) > 0$ for all $\bfw \in \Delta_{d - 1}$, and for any $x > 0$, $x^{1 / p}$ is monotone for $p \in \bbR \setminus \set{0}$, the expression $M(\bfu; \bfw, p) / M(\bfv; \bfw, p) = f(\bfw)^{1 / p}$ is also quasilinear. Because $g(x) = \log(x)$ is monotone, quasilinearity is preserved, which implies that $\log M(\bfu; \bfw, p) - \log M(\bfv; \bfw, p)$ is quasilinear as well.
\end{proof}

\subsection{Properties of Loss Functions}
\subsubsection{Quasiconvexity of \texorpdfstring{$\ell_2$}{L2} Loss}
\label{apdx:cardinal_quasiconvexity}

Since $M(\bfu; \bfw, p)$ is quasilinear by \cref{prop:increasing_p_w} \ref{lem:quasilinear_c}, it follows that $f(\bfw) = M(\bfu; \bfw, p) - y$ is also quasilinear. For $\bfw_1, \bfw_2 \in \Delta_{d - 1}$, we therefore have
\begin{align*}
\min \set{f(\bfw_1), f(\bfw_2)} &\leq f(\lambda \bfw_1 + (1 - \lambda) \bfw_2) \leq \max \set{f(\bfw_1), f(\bfw_2)}, \\
\implies f(\lambda \bfw_1 + (1 - \lambda) \bfw_2)^2 &\leq \max \set{f(\bfw_1)^2, f(\bfw_2)^2}.
\end{align*}
Thus, $f(\bfw)^2 = \paren{M(\bfu; \bfw, p) - y}^2$ is quasiconvex.

\subsubsection{Quasilinearity of Logistic Loss}
\label{apdx:ordinal_quasilinearity}

By \cref{prop:increasing_p_w} \ref{lem:quasilinear_c}, we know that $\log M(\bfu; \bfw, p) - \log M(\bfv; \bfw, p)$ is quasilinear. We consider two cases for the logistic loss. For \( y = 1 \), since \( - \log \sigma(x) = \log(1 + \exp(-x)) \) is a monotonic function, it preserves quasilinearity. For \( y = 0 \), we have \( - \log (1 - \sigma(x)) = \log (1 + \exp(-x)) + x \), which is also monotonic and therefore preserves quasilinearity. 

Using these properties, we conclude that the logistic loss term \( -y \log \sigma(x) - (1 - y) \log (1 - \sigma(x)) \) is a quasilinear function.

\subsection{Proof of \texorpdfstring{\cref{lemma:log_power_mean_pseudo_dim}}{lemma log power mean pseudo dim}}
\label{apdx:proof_log_power_mean_pseudo_dim}
We differentiate between two representations of Rademacher complexity:
\begin{align*}
\frakhatR(\calF) &= \frac{1}{n} \bbE_{\epsilon} \sqbrac{\sup_{f \in \calF} \sum_{i = 1}^{n} \epsilon_i f(x_i)}, \\
\frakhatR_{abs}(\calF) &= \frac{1}{n} \bbE_{\epsilon} \sqbrac{\sup_{f \in \calF} \abs{\sum_{i = 1}^{n} \epsilon_i f(x_i)}}. \\
\end{align*}
It is clear that $\frakhatR(\calF) \leq \frakhatR_{abs}(\calF)$. We now list a few results related to Pollard's pseudo-dimension.

\begin{definition}[Pseudo-shattering]
Let $\calH$ be a set of real valued functions from input space $\cal X$. We say $C = (x_1, \ldots, x_m)$ is pseudo-shattered if there exists a vector $r = (r_1, \ldots, r_m)$ such that for all $b \in \set{\pm 1}^{m} = (b_1, \ldots, b_m)$, there exists $h_b \in \calH$ such that $\sign{h_b(x_i) - r_i} = b_i$.
\end{definition}

\begin{definition}[Pseudo-dimension ] \label{def::pseudo-dimension}
The pseudo-dimension $\pdim(\calH)$ is the cardinality of the largest set pseudo-shattered by $\calH$.
\end{definition}
The following lemma connects pseudo-dimensions to VC dimensions:
\begin{lemma}[\citet{mohri2018foundations}]
For every $h \in \calH$, define the binary function $B_h(x, r) = \sign{h(x) - r}$. Define $B_{\calH} = \set{B_h: h \in \calH}$. Then, $VC(B_{\calH}) = \pdim({\calH})$  
\end{lemma}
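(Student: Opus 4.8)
The plan is to prove the equality $\VC(B_{\calH}) = \pdim(\calH)$ by establishing the two inequalities separately. The bridge in both directions is the observation that the pseudo-shattering condition $\sign{h(x_i) - r_i} = b_i$ is \emph{identical} to the VC-shattering condition $B_h(x_i, r_i) = b_i$ for the pair $(x_i, r_i)$. The entire content of the lemma is then the bookkeeping that translates a pseudo-shattered set of points (together with a chosen witness threshold vector) into a shattered set of pairs, and conversely.

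First I would show $\pdim(\calH) \le \VC(B_{\calH})$. Let $C = \set{x_1, \dots, x_m}$ be pseudo-shattered by $\calH$, witnessed by thresholds $r_1, \dots, r_m$, so that for every sign pattern $b \in \set{\pm 1}^m$ there is some $h_b \in \calH$ with $\sign{h_b(x_i) - r_i} = b_i$ for all $i$. Form the set of pairs $\set{(x_1, r_1), \dots, (x_m, r_m)} \subseteq \calX \times \bbR$. For each $b$, the same function $h_b$ gives $B_{h_b}(x_i, r_i) = \sign{h_b(x_i) - r_i} = b_i$, so this set of $m$ pairs is shattered by $B_{\calH}$. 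Taking $C$ to be a largest pseudo-shattered set yields $\VC(B_{\calH}) \ge m = \pdim(\calH)$.

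For the reverse inequality $\VC(B_{\calH}) \le \pdim(\calH)$, I take a set of pairs $\set{(x_1, r_1), \dots, (x_m, r_m)}$ shattered by $B_{\calH}$ and argue that its first coordinates $x_1, \dots, x_m$ are necessarily distinct. Indeed, if $x_i = x_j$ for $i \ne j$, order them so that $r_i \le r_j$; then the pattern with $b_i = -1$ and $b_j = +1$ is unrealizable, since it would force $h(x_i) < r_i \le r_j < h(x_j) = h(x_i)$, a contradiction. This contradicts shattering, so the $x_i$ are distinct and $\set{x_1, \dots, x_m}$ is a legitimate candidate set; using $r = (r_1, \dots, r_m)$ as the witness threshold vector, the shattering of the pairs is exactly the assertion that $\set{x_1, \dots, x_m}$ is pseudo-shattered by $\calH$. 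Hence $\pdim(\calH) \ge m = \VC(B_{\calH})$, and combining the two inequalities gives the claim.

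The main obstacle — really the only step requiring any care — is the distinctness argument in the second direction: one must rule out a shattered set of pairs that repeats an $x$-coordinate with differing thresholds, which is precisely what the monotonicity of $r \mapsto \sign{h(x) - r}$ forbids. Everything else is a direct rewriting of one shattering condition as the other.
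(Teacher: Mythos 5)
Your proof is correct: both directions are sound, and the distinctness argument for repeated $x$-coordinates (ruling out a shattered set of pairs $(x_i, r_i), (x_j, r_j)$ with $x_i = x_j$ via the monotonicity of $r \mapsto \sign{h(x) - r}$) is exactly the one point that requires care. Note, however, that the paper does not prove this lemma at all---it is imported as a cited background result from \citet{mohri2018foundations}---so there is no in-paper proof to compare against; your two-inequality argument, translating a pseudo-shattering witness into a shattered set of pairs and conversely, is the standard textbook proof of this equivalence.
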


The following lemma bounds the Rademacher complexity using pseudo-dimension and covering numbers.
\begin{lemma}[\citet{mohri2018foundations}]
    \label{prop:covering_pseudo_dim_bound}
For $\calF \subseteq [0, 1]^{\calX}$ with $\pdim(\calF) \leq d$,
\begin{align*}
\calN(\epsilon, \calF, d_n) \leq \paren{\frac{c}{\epsilon}}^{2d},
\end{align*}
where $d_n(f, g) = \paren{\frac{1}{n} \sum_{i = 1}^{n} \paren{f(x_i) - g(x_i)}^2}^{1/2}$.
\end{lemma}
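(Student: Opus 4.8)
The plan is to reduce the continuous $L_2$-covering bound to a combinatorial statement about the binarized class $B_{\calF}$, for which the pseudo-dimension hypothesis translates, via the preceding lemma, into the VC-dimension bound $\VC(B_{\calF}) = \pdim(\calF) \le d$, and then to invoke Sauer's lemma together with a discretization of the range $[0,1]$. The two ingredients I need to combine are a passage from $d_n$ to the empirical $L_1$ metric (which is what forces the exponent to be $2d$ rather than $d$) and an $n$-free combinatorial covering estimate in that $L_1$ metric.

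First I would pass from the $L_2$ empirical metric to the $L_1$ empirical metric. Since every $f \in \calF$ takes values in $[0,1]$, for any $f, g$ we have $\abs{f(x_i) - g(x_i)} \le 1$, so $\paren{f(x_i) - g(x_i)}^2 \le \abs{f(x_i) - g(x_i)}$ and hence $d_n(f,g)^2 \le d_{n,1}(f,g)$, where $d_{n,1}(f,g) = \frac{1}{n}\sum_{i=1}^n \abs{f(x_i) - g(x_i)}$. Consequently any $\epsilon^2$-cover in $d_{n,1}$ is an $\epsilon$-cover in $d_n$, giving $\calN(\epsilon, \calF, d_n) \le \calN(\epsilon^2, \calF, d_{n,1})$. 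This substitution $\delta = \epsilon^2$ is exactly what turns an $L_1$ estimate of shape $(c'/\delta)^d$ into the claimed $(c/\epsilon)^{2d}$, so the exponent $2d$ is an artifact of working in $L_2$ with bounded functions.

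Next I would establish the $L_1$ estimate combinatorially. Discretize the range into levels $t_k = k\delta$ for $k = 1, \ldots, K$ with $K = \lceil 1/\delta\rceil$, and associate to each $f$ the binary signature $\paren{B_f(x_i, t_k)}_{i \in [n],\, k \in [K]}$, where $B_f(x,r) = \sign{f(x) - r}$. Two functions with identical signatures agree, at every sample point, on which sub-interval of width $\delta$ their value lies in, so they are within $\delta$ in $d_{n,1}$; hence one representative per realized signature yields a $\delta$-cover, and $\calN(\delta,\calF,d_{n,1})$ is at most the number of distinct signatures. Since restricting the threshold coordinate to the finite grid cannot increase $\VC(B_{\calF}) \le d$, Sauer's lemma bounds the number of signatures by $\sum_{j=0}^d \binom{nK}{j} = O\paren{(nK/d)^d}$.

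The main obstacle is that this naive count still carries a factor of $n$ through $nK$, whereas the target bound must be independent of the sample size. Removing this dependence is exactly the content of Haussler's refinement of the Dudley--Pollard bound, and I would obtain it by a probabilistic subsampling (symmetrization) argument: given any $\delta$-separated family in $d_{n,1}$, randomly extract a subset of $m = O(d/\delta)$ of the $n$ points and argue, via a concentration estimate on the empirical $L_1$ distances, that with positive probability all pairwise separations survive on the subsample; applying Sauer on the $m$ extracted points then yields a packing bound of the form $(c'/\delta)^d$ with no residual $n$. Finally, the covering number is at most the packing number at the same scale, so combining this with the $L_2$-to-$L_1$ reduction and setting $c = (c')^{1/2}$ delivers $\calN(\epsilon,\calF,d_n) \le \paren{c/\epsilon}^{2d}$.
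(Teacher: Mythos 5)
The paper never proves this lemma: it is imported as a known result (due to Dudley, Pollard, and Haussler, via the textbook citation), so your proposal must stand on its own. Your skeleton is the canonical derivation, and the first ingredient is exactly right: since $\abs{f(x_i) - g(x_i)} \leq 1$, you get $d_n(f,g)^2 \leq d_{n,1}(f,g)$, hence $\calN(\epsilon, \calF, d_n) \leq \calN(\epsilon^2, \calF, d_{n,1})$, and an $L_1$ bound of the form $\paren{c'/\delta}^{d}$ evaluated at $\delta = \epsilon^2$ gives $\paren{c/\epsilon}^{2d}$ with $c = (c')^{1/2}$ --- this is indeed the source of the exponent $2d$. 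The signature/discretization count and the observation that restricting thresholds to a grid cannot increase $\VC(B_{\calF}) = \pdim(\calF) \leq d$ are also correct.

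The genuine gap is in your proposed proof of the $n$-free $L_1$ estimate. Demanding that \emph{all} pairwise separations of a $\delta$-packing of size $M$ survive on a random subsample forces a union bound over $\binom{M}{2}$ pairs: a multiplicative Chernoff bound for $[0,1]$-valued averages with mean at least $\delta$ gives failure probability $e^{-\Omega(m\delta)}$ per pair, so you need $m \gtrsim \delta^{-1} \log M$ rather than $m = O(d/\delta)$ --- the subsample size depends on the very quantity $M$ you are bounding. Solving the resulting self-referential inequality (Sauer applied to the $mK$ point--threshold pairs with $K = \Theta(1/\delta)$ grid levels) yields $M \leq \paren{C \delta^{-2} \log(1/\delta)}^{d}$, i.e.\ base $\delta^{-2}$ with a logarithmic factor; after the substitution $\delta = \epsilon^2$ this is roughly $\paren{c/\epsilon}^{4d}$ up to logarithms, strictly weaker than the claimed $\paren{c/\epsilon}^{2d}$, and the log factor cannot be absorbed into a constant $c$. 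What you have reconstructed is the Dudley--Pollard bound; Haussler's theorem, which you correctly identify as the needed refinement, is \emph{not} proved by this subsampling scheme. Its actual proof never asks that all separations survive: it bounds the edge density of the unit-distance (1-inclusion) graph of the quantized class by the pseudo-dimension and runs a random-ordering deletion count, giving the packing bound $e(d+1)(2e/\delta)^{d}$ with no residual $n$ and no log factor. Your argument is therefore complete only if you invoke Haussler's packing theorem as a black box --- which is legitimate here, since the paper itself only cites the lemma --- but as a self-contained proof, the subsampling step fails to deliver the stated constant-base bound.
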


We also have the following covering number bound for Rademacher complexity: 
\begin{lemma}[\citet{mohri2018foundations}]
    \label{prop:rademacher_covering}
For $\calF \subseteq [0, 1]^{\calX}$,
\begin{align*}
\hat{\mathfrak{R}}_{abs}(\calF) \leq \inf_{\epsilon > 0} \paren{\sqrt{\frac{2 \log 2\calN(\epsilon, \calF, d_n)}{n}} + \epsilon}.
\end{align*}
\end{lemma}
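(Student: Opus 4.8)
The plan is to establish this single-scale covering bound by a discretization argument combined with a maximal inequality for Rademacher sums over a finite set. Fix any $\epsilon > 0$ and let $G \subseteq \calF$ be a minimal $\epsilon$-cover of $\calF$ in the empirical metric $d_n$, so that $\abs{G} = \calN(\epsilon, \calF, d_n)$ and every $f \in \calF$ admits some $g_f \in G$ with $d_n(f, g_f) \leq \epsilon$. For a fixed realization of the Rademacher signs $\epsilon_1, \ldots, \epsilon_n$, I would split the Rademacher sum along this cover as $\sum_{i=1}^n \epsilon_i f(x_i) = \sum_{i=1}^n \epsilon_i g_f(x_i) + \sum_{i=1}^n \epsilon_i (f(x_i) - g_f(x_i))$, where the first term ranges over the finite set $G$ and the second is a residual controlled by the cover radius.

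For the residual, Cauchy--Schwarz gives $\abs{\sum_{i=1}^n \epsilon_i (f(x_i) - g_f(x_i))} \leq \sqrt{\sum_{i=1}^n \epsilon_i^2}\,\sqrt{\sum_{i=1}^n (f(x_i) - g_f(x_i))^2} = \sqrt{n}\cdot\sqrt{n}\, d_n(f, g_f) \leq n\epsilon$, using $\epsilon_i^2 = 1$ and the definition of $d_n$. Since $g_f \in G$, the first term is at most $\max_{g \in G} \abs{\sum_i \epsilon_i g(x_i)}$, and crucially this no longer depends on $f$; hence $\sup_{f \in \calF} \abs{\sum_i \epsilon_i f(x_i)} \leq \max_{g \in G} \abs{\sum_i \epsilon_i g(x_i)} + n\epsilon$ holds pointwise in the signs. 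Taking the expectation $\bbE_{\epsilon}$ and dividing by $n$ reduces the problem to bounding $\frac{1}{n}\bbE_{\epsilon}\sqbrac{\max_{g \in G} \abs{\sum_i \epsilon_i g(x_i)}}$.

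For this last quantity I would invoke Massart's finite-class lemma. Identifying each $g \in G$ with the vector $(g(x_1), \ldots, g(x_n))$ and adjoining the negated set $-G$ to turn the absolute value into a plain maximum over the finite family $G \cup (-G)$ of cardinality $2\calN(\epsilon, \calF, d_n)$, each such vector has Euclidean norm at most $\sqrt{n}$ because $g(x_i) \in [0,1]$. Massart's lemma then gives $\bbE_{\epsilon}\sqbrac{\max_{g \in G}\abs{\sum_i \epsilon_i g(x_i)}} \leq \sqrt{n}\,\sqrt{2 \log\paren{2\calN(\epsilon, \calF, d_n)}}$, so after dividing by $n$ and adding back the residual I obtain $\frakhatR_{abs}(\calF) \leq \sqrt{\frac{2\log 2\calN(\epsilon, \calF, d_n)}{n}} + \epsilon$; since $\epsilon > 0$ was arbitrary, taking the infimum over $\epsilon$ yields the claim. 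The only genuinely nontrivial ingredient is Massart's maximal inequality itself, whose proof rests on the sub-Gaussian moment generating function bound $\bbE\sqbrac{\exp\paren{t \sum_i \epsilon_i a_i}} \leq \exp\paren{t^2 \norm{a}{2}^2 / 2}$ for a fixed vector $a$, followed by Jensen's inequality applied to the maximum and optimization over $t$; this is exactly where the factor $\sqrt{2 \log(\cdot)}$ and its constant originate, so pinning down that constant is the main point to be careful about.
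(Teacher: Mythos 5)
Your proof is correct and is precisely the standard derivation behind this cited lemma (the paper itself gives no proof, deferring to Mohri et al.): a single-scale $\epsilon$-cover, Cauchy--Schwarz turning the approximation residual into $n\epsilon$, and Massart's finite-class lemma applied to $G \cup (-G)$ of cardinality $2\calN(\epsilon, \calF, d_n)$ and radius $\sqrt{n}$, which is exactly where the term $\sqrt{2\log(2\calN(\epsilon, \calF, d_n))/n}$ and the factor $2$ inside the logarithm originate. The only point worth flagging is that your norm bound $\norm{(g(x_1), \ldots, g(x_n))}{2} \leq \sqrt{n}$ relies on the cover centers lying in $\calF$ (so $g(x_i) \in [0,1]$); if the covering-number convention allows external centers, one clips them to $[0,1]^n$, which does not increase $d_n$-distances, so the argument goes through unchanged.
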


We now turn to bounding the complexity for the unknown and known weights cases:
\begin{proof}[Proof of \cref{lemma:log_power_mean_pseudo_dim}  \ref{lemma:log_power_mean_pseudo_dim_part_a}]
The function class is:
\begin{align*}
\calM_{\bfw, d} = \set{M(\bfr; \bfw, p) \vert p \in \bbR}.
\end{align*}
Moreover, from \cref{prop:increasing_p_w}, it follows that for a fixed $\mathbf{u} \in \mathbb{R}^{d}$, $M(\bfu; \bfw, p)$ is a non-decreasing function with respect to $p$. Consequently, there exists a $p^* \in \mathbb{R} \cup \{\pm \infty\}$ such that for any $y \in (u_{\min}, u_{\max})$, we have $M(\mathbf{u}; \mathbf{w}, p) < y$ for all $p < p^*$, and $M(\mathbf{u}; \mathbf{w}, p) \geq y$ for all $p \geq p^*$. This implies that $B_{M}(\mathbf{u}, y) = \mathrm{sign}(M(\mathbf{u}; \mathbf{w}, p) - y)$ changes its sign exactly once as $p$ increases.

We note that for $B_{M}(x, y)$, one point can be shattered (by choosing $p < p^*$ and $p > p^*$). However, for two points $\bfu$ and $\bfv$, the number of times a sign change occurs with increasing $p$ for either $\bfu$ or $\bfv$ is at most twice, meaning that only 3 labels can be achieved. Thus, 2 points cannot be shattered.
\end{proof}

\begin{proof}[Proof of \cref{lemma:log_power_mean_pseudo_dim} \ref{lemma:log_power_mean_pseudo_dim_part_b}]
The function class is:
\begin{align*}
\calM_{d} = \set{M(\cdot; \bfw, p) \vert p \in \bbR}.
\end{align*}
We note that 
\begin{align*}
B_{M}(\bfu, y) &= \sign{M(\bfu; \bfw, p) - y} \\
&= \sign{\log M(\bfu; \bfw, p) - \log y} \\
&= \sign{\log M(\bfu; \bfw, p) - \log M(y \cdot \mathbf{1}_d; \bfw, p)}
\end{align*}
which, we observe, is exactly the expression in the noiseless comparison-based setup for the unknown weights case. We show in \cref{lemma:vc_ordinal} \ref{lemma:vc_ordinal_part_b} that the VC dimension for this expression is upper bounded by $8(d \log_2 d + 1)$. Thus, our result is proved.
\end{proof}

\subsection{Rademacher Complexity Bound for the Cardinal Case}
\begin{lemma}
\label{lemma:rad_cardinal}
\begin{enumerate}[label = (\alph*)]
    \item If $\bfw$ is known, then
    \[
    \frakhatR(\calM_{\bfw, d}) \leq u_{\max} \left(\sqrt{\frac{2 \log 2 + 2 \log n}{n}} + \frac{c}{\sqrt{n}}\right).
    \]

    \item If $\bfw$ is unknown, then
    \[
    \frakhatR(\calM_{\bfw, d}) \leq u_{\max} \left(\sqrt{\frac{2 \log 2 + 16(2 \log_2 d + 1) \log n}{n}} + \frac{c}{\sqrt{n}}\right),
    \]
\end{enumerate}
where $c > 0$ is a constant.
\end{lemma}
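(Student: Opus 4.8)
The plan is to reduce the Rademacher complexity of the unnormalized class $\calM_{\bfw,d}$ (resp.\ $\calM_d$), whose functions range over $[u_{\min},u_{\max}]$, to that of a $[0,1]$-valued class so that the covering-number and pseudo-dimension tools of \cref{prop:covering_pseudo_dim_bound,prop:rademacher_covering} apply. Because Rademacher complexity is homogeneous under a uniform positive rescaling, $\frakhatR(\calM_{\bfw,d}) = u_{\max}\,\frakhatR\paren{\tfrac{1}{u_{\max}}\calM_{\bfw,d}}$, it suffices to control the complexity of $\tfrac{1}{u_{\max}}\calM_{\bfw,d}$, whose functions lie in $[u_{\min}/u_{\max},1]\subseteq[0,1]$. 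The reason this class is tractable is that its pseudo-dimension is exactly that of $\calS_{\bfw,d}$: rescaling every function by the constant $1/u_{\max}$ leaves the pseudo-dimension unchanged, and $\pdim(\calM_{\bfw,d}) = \pdim(\calS_{\bfw,d})$ since $M(\bfu_i;\bfw,p) = u_{i(d)} M(\bfr_i;\bfw,p)$ with $u_{i(d)}>0$, so the sign pattern $\sign{M(\bfu_i;\bfw,p)-r_i}$ is reproduced by $\sign{M(\bfr_i;\bfw,p)-r_i/u_{i(d)}}$ and conversely. Thus the normalized class inherits $\pdim = 1$ in the known-weight case and $\pdim < 8(d\log_2 d + 1)$ in the unknown-weight case from \cref{lemma:log_power_mean_pseudo_dim}.

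With these pseudo-dimensions, I would apply \cref{prop:covering_pseudo_dim_bound} to get $\calN\paren{\epsilon, \tfrac{1}{u_{\max}}\calM_{\bfw,d}, d_n} \le (c/\epsilon)^{2\,\pdim}$, and then combine \cref{prop:rademacher_covering} with the elementary inequality $\frakhatR \le \frakhatR_{abs}$ to obtain
\[
\frakhatR\paren{\tfrac{1}{u_{\max}}\calM_{\bfw,d}} \le \inf_{\epsilon>0}\paren{\sqrt{\frac{2\log 2 + 4\,\pdim\,\log(c/\epsilon)}{n}} + \epsilon}.
\]
Taking $\epsilon = c/\sqrt{n}$ makes $\log(c/\epsilon) = \tfrac12\log n$, which collapses the numerator to $2\log 2 + 2\,\pdim\,\log n$ and contributes the additive $c/\sqrt n$ term. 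Substituting $\pdim = 1$ gives the known-weight numerator $2\log 2 + 2\log n$, while $\pdim < 8(d\log_2 d + 1)$ gives $2\,\pdim < 16(d\log_2 d + 1)$ and hence the unknown-weight numerator $2\log 2 + 16(d\log_2 d + 1)\log n$. Multiplying back through by $u_{\max}$ then recovers both displayed bounds.

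Once the reduction is in place the remaining steps are essentially bookkeeping, so the main obstacle is the first paragraph: arguing that normalizing by the \emph{uniform} constant $u_{\max}$ (needed so the $[0,1]$-valued hypothesis of \cref{prop:covering_pseudo_dim_bound} is met) leaves the pseudo-dimension intact, and that this is legitimate even though the class $\calS_{\bfw,d}$ is defined through the \emph{per-sample} normalization by $u_{i(d)}$. Establishing the chain $\pdim\paren{\tfrac{1}{u_{\max}}\calM_{\bfw,d}} = \pdim(\calM_{\bfw,d}) = \pdim(\calS_{\bfw,d})$ is where care is required; after that, only the choice of $\epsilon$ and the tracking of the universal constant $c$ remain, and I expect no further difficulty.
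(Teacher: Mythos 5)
Your proposal follows essentially the same route as the paper's own proof: factor $u_{\max}$ out of the empirical Rademacher complexity, pass to a $[0,1]$-valued normalized class, and chain \cref{prop:covering_pseudo_dim_bound} and \cref{prop:rademacher_covering} (together with $\frakhatR \leq \frakhatR_{abs}$) using the pseudo-dimension bounds of \cref{lemma:log_power_mean_pseudo_dim} and the choice $\epsilon = c/\sqrt{n}$. If anything, your explicit argument that the pseudo-dimension is preserved under both the uniform $1/u_{\max}$ rescaling and the per-sample $u_{i(d)}$ normalization defining $\calS_{\bfw,d}$ is more careful than the paper's proof, which silently conflates these two normalizations when it writes $M(\bfu_i;\bfw,p) = u_{\max}\, M(\bfr_i;\bfw,p)$.
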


\vspace{-1em}
\label{apdx:proof_rad_cardinal}
\begin{proof}
We prove the result for the case of unknown weights; the result for known weights follows by replacing the pseudo-dimension bound of $\calS_{d}$ with that of $\calS_{\bfw, d}$ from \cref{lemma:log_power_mean_pseudo_dim}. Let \( d_p \) denote the pseudo-dimension in the unknown weights case.
\begin{align*}
\frakhatR(\calM_{d}) &= \bbE_{\epsilon} \left[\frac{1}{n} \sup_{(\bfw, p) \in \Delta_{d - 1} \times \bbR} \sum_{i = 1}^{n} \epsilon_i M(\bfu_i; \bfw, p)\right] \\
&= \bbE_{\epsilon} \left[\frac{1}{n} \sup_{(\bfw, p) \in \Delta_{d - 1} \times \bbR} \sum_{i = 1}^{n} \epsilon_i u_{\max} \cdot M(\bfr_i; \bfw, p)\right] \\
&= u_{\max} \bbE_{\epsilon} \left[\frac{1}{n} \sup_{(\bfw, p) \in \Delta_{d - 1} \times \bbR} \sum_{i = 1}^{n} \epsilon_i M(\bfr_i; \bfw, p)\right] \\
&\leq u_{\max} \bbE_{\epsilon} \left[\frac{1}{n} \sup_{(\bfw, p) \in \Delta_{d - 1} \times \bbR} \sum_{i = 1}^{n} \left|\epsilon_i M(\bfr_i; \bfw, p)\right|\right] \\
&= u_{\max} \frakhatR_{\text{abs}}(\calS_{d}).
\end{align*}

From \cref{prop:covering_pseudo_dim_bound} and \cref{prop:rademacher_covering}, and given that $\log M(\bfr; \bfw, p) \in [0, 1]$, we have
\begin{align*}
\calN(\epsilon, \calS_{d}, d_n) &\leq \left(\frac{c}{\epsilon}\right)^2 d_p, \\
\frakhatR_{\text{abs}}(\calS_{d}) &\leq \inf_{\epsilon > 0} \left(\sqrt{\frac{2 \log 2 + 4d_p \log (c / \epsilon)}{n}} + \epsilon\right) \\
&\leq \sqrt{\frac{2 \log 2 + 2d_p \log n}{n}} + \frac{c}{\sqrt{n}} \tag{setting $\epsilon = c / \sqrt{n}$}.
\end{align*}
thus giving
\begin{align*}
\frakhatR(\calM_d) = u_{\max} \left(\sqrt{\frac{2 \log 2 + 2d_p \log n}{n}} + \frac{c}{\sqrt{n}}\right).
\end{align*}

Substituting $d_p = 8(d \log_2 d + 1)$ yields the required bound.
\end{proof}

We observe that the above lemma provides bounds of $\calO(\sqrt{\log(n) / n})$ for known weights and $\calO(\sqrt{d \log(d) \log(n) / n})$ for unknown weights on the Rademacher complexity. Notably, these bounds depend on $u_{\max}$, indicating that the complexity of the function class grows as the maximum utility value increases.

\subsection{Proof for \texorpdfstring{\cref{thm:cardinal_pac}}{thm cardinal pac}}
\label{apdx:proof_cardinal_pac}
\begin{proof}
We prove the result for the case of unknown weights; the result for known weights follows a similar argument. For the $\ell_2$ loss, the function class is defined as
\begin{align*}
\calL_{2} &= \set{\ell_2(M(\bfu; \bfw, p), y) = (y - M(\bfu; \bfw, p))^2 \vert (\bfw, p) \in \Delta_{d - 1} \times \bbR}.
\end{align*}
Since both $M(\bfu; \bfw, p)$ and $y_i$ lie within the range $[u_{i(1)}, u_{i(d)}] \subseteq [u_{\min}, u_{\max}]$, we have $y - M(\bfu; \bfw, p) \in [u_{\min} - u_{\max}, u_{\max} - u_{\min}]$. Over the bounded range $[-\gamma, \gamma]$, the function $\ell_2(t, y) = (t - y)^2$ is $2 \gamma$-Lipschitz continuous with respect to $t$. Thus, using Talagrand's contraction lemma and \cref{lemma:rad_cardinal}, we get
\begin{align*}
\frakhatR(\calL_{2}) &= \frakhatR(\ell_2 \circ \calM_{d}) \leq 2 (u_{\max} - u_{\min}) \frakhatR(\calM_{d}).
\end{align*}
We then use the uniform convergence bounds for Rademacher complexity to obtain
\begin{align*}
\sup_{(\bfw, p) \in \Delta_{d - 1} \times \bbR} \abs{\hat{R}_n(\bfw, p) - R(\bfw, p)} &\leq 8 (u_{\max} - u_{\min}) \frakhatR(\calM_{d}) + 3 \sqrt{\frac{\log(4 / \delta)}{2n}} = \epsilon.
\end{align*}
Thus,
\begin{align*}
R(\hat{\bfw}, \hat{p}) - R(\bfw, p) &= \paren{\hat{R}_n(\hat{\bfw}, \hat{p}) - \hat{R}_n(\bfw, p)} + \paren{\hat{R}_n(\bfw, p) - R(\bfw, p)} + \paren{R(\hat{\bfw}, \hat{p}) - \hat{R}_n(\hat{\bfw}, \hat{p})} \\
&\leq 0 + \epsilon + \epsilon = 2 \epsilon \\
&= 16 (u_{\max} - u_{\min}) \frakhatR(\calM_{d}) + 6 \sqrt{\frac{\log(4 / \delta)}{2n}}.
\end{align*}
Finally, substituting $\frakhatR(\calM_{d})$ from \cref{lemma:rad_cardinal} provides the required bounds.
\end{proof}

\subsection{Proof of \texorpdfstring{\cref{lemma:vc_ordinal}}{lemma vc ordinal}}
\label{ap:vc_ordinal}

First, we state a lemma from \citet{jameson2006counting}:
\begin{lemma}[\citet{jameson2006counting}, Theorem 4.6]
\label{lemma::num_of_zeros}
   Let $f: \mathbb{R} \to \mathbb{R}$ be defined as $f(p) = \sum_{i=1}^n a_i \exp(b_i x)$, where $b_1 > b_2 > \ldots > b_n$ and $\sum_{i=1}^n a_i = 0$. Define $A_j := \sum_{i=1}^j a_i$ and denote by $S(A_j)$ the number of sign changes in the sequence $\{A_i\}_{i=1}^j$. Then, the number of unique zeros of $f$ is at most $S(A_n) + 1$.
\end{lemma}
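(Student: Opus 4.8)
The plan is to prove this by recognising it as a refinement of Descartes' rule of signs for exponential sums (equivalently, for generalised/Dirichlet polynomials), in which the hypothesis $\sum_{i=1}^n a_i = 0$ and the partial-sum sequence $\{A_j\}$ play the roles one expects in Laguerre's extension of that rule. The first observation I would make is that the constraint $\sum_{i=1}^n a_i = 0$ is exactly the statement $f(0) = \sum_{i=1}^n a_i e^{b_i \cdot 0} = A_n = 0$, so that $x = 0$ is always a zero of $f$; this will ultimately be the source of the additive $+1$. Second, after the substitution $s = e^x$ the function becomes the generalised polynomial $P(s) = \sum_{i=1}^n a_i s^{b_i}$ on $(0,\infty)$, whose positive zeros are in bijection with the real zeros of $f$, with $s = 1$ a forced zero. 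Third, and crucially, summation by parts together with $A_0 = A_n = 0$ rewrites $f$ with the partial sums as coefficients against a telescoped basis:
\[
f(x) = \sum_{i=1}^{n-1} A_i\,\varphi_i(x), \qquad \varphi_i(x) = e^{b_i x} - e^{b_{i+1} x},
\]
where every $\varphi_i$ vanishes at $x = 0$ and is negative on $(-\infty,0)$ and positive on $(0,\infty)$ (since $b_i > b_{i+1}$). This is the representation in which the sign changes of $(A_1,\dots,A_{n-1})$ — that is, $S(A_n)$, the trailing $A_n = 0$ contributing no change — become the relevant count.

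The engine of the proof is the classical fact that the exponentials $e^{b_1 x},\dots,e^{b_n x}$, ordered by their exponents, form a Descartes system: the generalised Vandermonde collocation matrices $[e^{b_{i} x_j}]$ are totally positive, which yields the variation-diminishing (Descartes) bound that a linear combination has at most as many real zeros as its coefficient sequence has sign changes. I would apply this principle to the telescoped representation above, so that the coefficient sequence is $(A_1,\dots,A_{n-1})$ and the number of zeros away from the forced root is controlled by $S(A_n)$; adding the guaranteed zero at $x = 0$ then gives $S(A_n) + 1$. Concretely, I expect to run an induction on $n$ using Rolle's theorem on the normalisation $g(x) = e^{-b_n x} f(x)$, which shares the zeros of $f$, checking the base cases $n = 1$ (where $f \equiv 0$) and $n = 2$ (where $f = a_1(e^{b_1 x} - e^{b_2 x})$ has the single zero $x = 0$ and $S(A_2) = 0$), and propagating the partial-sum sign-change count through the telescoping identity at each step.

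The main obstacle I anticipate is twofold. First, one must verify that the telescoped family $\{\varphi_i\}$ inherits the sign-regularity needed for a genuine sign-change (as opposed to merely a total) zero bound; this is where the total positivity of the generalised Vandermonde matrices, or an equivalent induction-with-Rolle bookkeeping, must be made precise, since the naive differentiation reduction $g \mapsto g'$ turns the partial-sum coefficients $A_i$ back into multiples of the original $a_i$ and so does not by itself preserve the partial-sum structure. Second, pinning down the exact additive constant requires a parity/endpoint argument: the Descartes bound controls zeros only up to an even defect, so I would use the sign of $f$ as $x \to \pm\infty$, governed by the extreme terms $a_1 e^{b_1 x}$ and $a_n e^{b_n x}$, to confirm that the forced root at the origin contributes exactly one to the count rather than being absorbed into, or doubling, a sign change. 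Careful handling of vanishing coefficients $a_i$ (hence repeated or skipped exponents) and of zeros of higher multiplicity — recalling that the statement counts \emph{distinct} zeros — is the remaining bookkeeping I would need to discharge.
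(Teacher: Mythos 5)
The paper itself offers no proof of this lemma---it is imported verbatim from \citet{jameson2006counting} (Theorem 4.6)---so the comparison is with the classical Laguerre-style argument behind that citation, which your plan correctly shadows in outline: Abel summation turns the partial sums $A_i$ into the coefficients, the constraint $A_n=0$ forces the zero at $x=0$ that supplies the $+1$, and a variation-diminishing bound should control the remaining zeros by $S(A_n)$. However, the decisive step is exactly the one you leave open, and the tool you invoke will not discharge it as stated. Total positivity of the collocation matrices $[e^{b_i x_j}]$ gives the variation-diminishing property for combinations of \emph{pure} exponentials with coefficients $(a_1,\ldots,a_n)$---that is, the plain Descartes bound $S(a)$, which is weaker than the claim---not for the telescoped family $\varphi_i(x)=e^{b_i x}-e^{b_{i+1}x}$. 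Indeed $\{\varphi_i\}$ cannot be a Descartes/Chebyshev system on $\mathbb{R}$ at all: every $\varphi_i$ vanishes at $x=0$, so any collocation matrix whose nodes include the origin is singular. You also correctly observe that the Rolle reduction applied to $e^{-b_n x}f$ regenerates coefficients proportional to the original $a_i$ rather than the partial sums, but you stop there without a substitute, so the engine of the proof is missing.

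The missing idea that closes the argument is an integral representation: since $e^{b_i x}-e^{b_{i+1}x}=x\int_{b_{i+1}}^{b_i}e^{tx}\,dt$, Abel summation gives $f(x)=x\int_{b_n}^{b_1}h(t)e^{tx}\,dt$, where $h$ is the step function equal to $A_i$ on $(b_{i+1},b_i)$; the sign changes of $h$ are precisely those of $(A_1,\ldots,A_{n-1})$, i.e.\ $S(A_n)$, the trailing $A_n=0$ contributing nothing. Laguerre's rule for such integrals is then proved by exactly the Rolle induction you wanted, run on $h$ rather than on the coefficient vector: if $h$ changes sign at $c$, then $\frac{d}{dx}\bigl[e^{-cx}\int h(t)e^{tx}\,dt\bigr]=\int(t-c)h(t)e^{(t-c)x}\,dt$, and $(t-c)h(t)$ has one fewer sign change, while a fixed-sign nonzero $h$ yields a zero-free integral; induction gives at most $S(A_n)$ zeros for the integral factor, and the explicit factor $x$ contributes the $+1$. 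Two of your anticipated obstacles then evaporate: no parity or endpoint analysis is needed, because Descartes/Laguerre bounds are one-sided upper bounds (parity enters only for exactness or lower bounds), and multiplicity is harmless since the lemma counts distinct zeros and Rolle already separates those. Vanishing $a_i$ are also benign---the $b_i$ are strictly decreasing by hypothesis, and zero values of $A_i$ are simply skipped in the sign-change count---with the only degenerate case being all $A_i=0$, i.e.\ $f\equiv 0$, which is vacuous.
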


Consider the function  $f(p) = \sum_{i=1}^d w_i u_i^p - \sum_{i=1}^d w_i v_i^p$ for $\bfu, \bfv \in \mathbb{R}^d$ with disjoint entries:
\begin{align*}
    f(p) = \sum_{i=1}^d w_i u_i^p - \sum_{i=1}^d w_i v_i^p = \sum_{i=1}^d w_i \exp (p \log u_i) - \sum_{i=1}^d w_i \exp (p \log v_i).
\end{align*}
Applying \cref{lemma::num_of_zeros}, if $w_i = w$ for all $i$, the sequence $\{A_j\}$, consisting of sums of $w$ or $-w$, can have at most $d-1$ sign changes. A sign change at index $k$ implies $A_{k-1} = 0$, and the next sign change cannot occur before index $k+2$. Therefore, $f(p)$ has at most $d$ zeros in this case. In the general case, where $w_i \neq w_j$ for some $i \neq j$, a sign change in $\{A_j\}$ can occur at any index except the first and the last. Thus, $f(p)$ can have at most $2d-1$ roots, as sign changes are possible at all intermediate indices. We conclude that $M_p(\bfu; \bfw, p) - M_p(\bfv; \bfw, p)$, defined over $\mathbb{R} \cup \set{\pm \infty}$, can change sign as a function of $p$ at most $d-1$ times if $w_i = \frac{1}{d}$, and up to $2d-1$ times in the general case.
\begin{lemma}\label{lemma::num_of_zeros_lower}
Let \( r, q: \mathbb{R} \to \mathbb{R} \) be two polynomials such that \( c := r(x) - q(x) \) is a constant for all \( x \in \mathbb{R} \), and the sets of roots \( \{x_1, \ldots, x_d\} \) and \( \{y_1, \ldots, y_d\} \) of \( r \) and \( q \) respectively are disjoint, positive and of size $d$. Then, for \( k = 0,1, \ldots, d-1 \), the \( k \)-th power sums of the roots \( x_1, \ldots, x_d \) and \( y_1, \ldots, y_d \) are equal, i.e.:
\[
\sum_{i=1}^d x_i^k = \sum_{i=1}^d y_i^k.
\]
\end{lemma}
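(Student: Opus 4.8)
The plan is to reduce the claim to Vieta's formulas together with Newton's identities. The starting observation is that the hypothesis $r(x) - q(x) = c$ for a constant $c$ forces $r$ and $q$ to agree in every coefficient except the constant term. Since each has exactly $d$ roots, both are polynomials of degree exactly $d$, and equality of the top coefficient shows they share the same leading coefficient; normalizing it to $1$ (or simply comparing coefficients directly), I would write $r(x) = \prod_{i=1}^d (x - x_i)$ and $q(x) = \prod_{i=1}^d (x - y_i)$. Matching the coefficients of $x^{d-1}, x^{d-2}, \ldots, x^1$ then yields, via Vieta's formulas, the equalities of elementary symmetric polynomials $e_j(x_1,\ldots,x_d) = e_j(y_1,\ldots,y_d)$ for $j = 1, \ldots, d-1$. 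Only the constant terms differ, i.e.\ only $e_d$ (up to sign) may disagree between the two root sets.

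Next I would invoke Newton's identities, which express each power sum $p_k = \sum_{i=1}^d x_i^k$ as a fixed universal polynomial in the elementary symmetric functions $e_1, \ldots, e_k$. Concretely, for $1 \le k \le d$ the identity reads $p_k = e_1 p_{k-1} - e_2 p_{k-2} + \cdots + (-1)^{k-1} k\, e_k$. Running this recursively from the base case $p_0 = d$, each $p_k$ is determined by $e_1, \ldots, e_k$ alone. For every $k \le d-1$, the largest elementary symmetric polynomial appearing is $e_k$ with $k \le d-1$, so $e_d$ never enters the recursion; since $e_1, \ldots, e_{d-1}$ coincide for the two root sets, an induction on $k$ gives $p_k(x) = p_k(y)$ for all $k = 1, \ldots, d-1$. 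The case $k=0$ is immediate, as both power sums equal $d$, completing the claim.

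The one point that needs care — and the crux of the argument — is precisely that the discrepancy in $e_d$ does not leak into the power sums of index at most $d-1$; this is guaranteed by the structure of Newton's identities, where the index-$k$ relation only ever references $e_1, \ldots, e_k$. The positivity and disjointness of the roots, while included in the hypotheses (they are what the surrounding VC-dimension argument needs, since the roots are realized as utilities and one wants distinct sign-change locations), play no essential role in the power-sum identity itself; they serve here only to ensure that $\{x_i\}$ and $\{y_i\}$ are genuine size-$d$ sets so that $r$ and $q$ have degree exactly $d$ and Vieta applies cleanly. I would note this explicitly rather than relying on those hypotheses in the computation.
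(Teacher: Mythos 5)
Your proof is correct and follows essentially the same route as the paper's: matching the non-constant coefficients via Vieta's formulas and then running Newton's identities to transfer equality of elementary symmetric polynomials to equality of power sums. If anything, your write-up is more careful than the paper's own, which loosely asserts $e_k(\mathbf{x}) = e_k(\mathbf{y})$ "for all $k \le d$" — your explicit observation that only $e_1, \ldots, e_{d-1}$ coincide, and that $e_d$ never enters the Newton recursion for $p_k$ with $k \le d-1$, is precisely the point on which the argument hinges.
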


\begin{proof}
Given that \( r - q = c \), both polynomials have the same non-constant coefficients. According to Vieta's formulas, the \( k \)-th elementary symmetric polynomial of \( \mathbf{x} \), \( e_k(\mathbf{x}) \), is the sum of all products of \( k \) distinct \( x_i \)'s, and similarly \( e_k(\mathbf{y}) \) for \( \mathbf{y} \). If \( r - q = c \), it implies that the symmetric polynomials derived from the roots of \( r \) and \( q \) are equal, \( e_k(\mathbf{x}) = e_k(\mathbf{y}) \).

Newton's identities relate the elementary symmetric polynomials and the power sums as follows:
\[
ke_k(\mathbf{x}) = \sum_{i=1}^{k} (-1)^{i-1} e_{k-i}(\mathbf{x}) p_i(\mathbf{x}), \quad ke_k(\mathbf{y}) = \sum_{i=1}^{k} (-1)^{i-1} e_{k-i}(\mathbf{y}) p_i(\mathbf{y}).
\]
where $p_i(\mathbf{x})$ is the $i$-power sums of the roots. Given that \( e_k(\mathbf{x}) = e_k(\mathbf{y}) \), we can equate the right-hand sides of the above identities to obtain the power sums \( p_i(\mathbf{x}) \) and \( p_i(\mathbf{y}) \). This yields \( p_k(\mathbf{x}) = p_k(\mathbf{y}) \) for each \( k \), due to the recursive nature of Newton's identities and the fact that the elementary symmetric polynomials of \( \mathbf{x} \) and \( \mathbf{y} \) are equal for all \( k \leq d \).

Hence, \( p_k(\mathbf{x}) = p_k(\mathbf{y}) \) for all \( k = 0, \ldots, d-1 \), which concludes the proof.
\end{proof}
Given $f(p)$ as defined above, \cref{lemma::num_of_zeros_lower} implies there exists disjoint $\bfu, \bfv \in \mathbb{R}^d$ such that $M_p(\bfu; \bfw, p) = M_p(\bfv; \bfw, p)$, for $d-1$ unique values of $p$ and for $w_i = 1/d$. Moreover, suppose that for a set $\{ p_i \}_{i \in [d]}$ there exist $\bfu, \bfv \in \mathbb{R}^d$ and $\bfw \in \Delta_{d - 1}$ such that $M_p(\bfu; \bfw, p_i) = M_p(\bfv; \bfw, p_i)$. Then, for any $\lambda > 0$, there exist $\bfu', \bfv' \in \mathbb{R}^d$ such that $M_p(\bfu'; \bfw, \lambda p_i) = M_p(\bfv'; \bfw, \lambda p_i)$.

\begin{lemma}[\citet{jameson2006counting}, Theorem 3.4] \label{lemma::exact_eq}
    For any $k < d$ and $p_1 < \ldots < p_{k} \in \mathbb{R}$, there exist $\mathbf{u}, \mathbf{v} \in \mathbb{R}^d_{+}$ and $\mathbf{w} \in \Delta_{d - 1}$ such that $M_p(\mathbf{u}; \mathbf{w}, p_i) = M_p(\mathbf{v}; \mathbf{w}, p_i)$ for each $i \leq k$. Furthermore, the difference $M_p(\mathbf{u}; \mathbf{w}, p_i) - M_p(\mathbf{v}; \mathbf{w}, p_i)$ does not change sign within any interval $(p_i, p_{i+1})$.
\end{lemma}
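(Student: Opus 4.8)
The plan is to treat $g(p) := M_p(\bfu; \bfw, p) - M_p(\bfv; \bfw, p) = \sum_{i=1}^d w_i e^{p \log u_i} - \sum_{i=1}^d w_i e^{p \log v_i}$ as a generalized (exponential) polynomial in $p$ and to choose its parameters so that its only sign changes occur at the prescribed nodes. The key structural fact is that $\{p \mapsto e^{\beta p}\}_{\beta \in \bbR}$ forms a Chebyshev system, so a combination of $m$ such exponentials has at most $m-1$ real zeros; this is exactly \cref{lemma::num_of_zeros}, which I will lean on for the sign claim. Consequently, once I exhibit $(\bfu, \bfv, \bfw)$ with $g(p_i) = 0$ for all $i \le k$ using as few distinct bases as possible, the ``no sign change on $(p_i, p_{i+1})$'' statement is immediate: if $g$ is assembled from only $k+1$ distinct nontrivial bases, then \cref{lemma::num_of_zeros} caps its zero count at exactly the prescribed number, so there is no additional zero inside any open interval between consecutive nodes and $g$ must be single-signed there.

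For the existence step I would decouple the shared weight vector by first reducing to $\bfv = \mathbf{1}_d$, so that $M_p(\bfv; \bfw, p) \equiv 1$ and the requirement collapses to $\sum_{i=1}^d w_i u_i^{p_m} = 1$ for each $m \le k$. Writing $\gamma(u) = (u^{p_1}, \dots, u^{p_k}) \in \bbR^k$ for the associated moment curve, this says precisely that the target $\gamma(1) = (1, \dots, 1)$ lies in the convex hull of $\{\gamma(u_i)\}_{i=1}^d$ with convex weights $\bfw$. Since $\gamma(1)$ sits on the moment curve itself and $k < d$, I can select bases $u_i$ straddling $1$ whose hull contains $\gamma(1)$ in its relative interior; total positivity of the generalized Vandermonde matrix $[u_i^{p_m}]$ then guarantees that the interpolation system has full rank and a strictly positive solution $\bfw$. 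I normalize $\bfw$ onto $\Delta_{d-1}$ (a scaling that does not move the zeros of $g$) and pad to length $d$ using coincident pairs $u_i = v_i$, which contribute nothing to $g$; for the extremal value $k = d-1$ one keeps both $\bfu$ and $\bfv$ nontrivial to retain enough distinct bases.

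\textbf{The main obstacle} is the zero that $g$ unavoidably carries at $p = 0$: because $\sum_i w_i u_i^0 = \sum_i w_i v_i^0 = 1$, we always have $g(0) = 0$. This is harmless when $0 \le p_1$ or $0 \ge p_k$, but if $0$ falls strictly inside some $(p_i, p_{i+1})$, a simple zero there would produce a forbidden sign change. I would resolve this by spending one of the spare degrees of freedom that $k < d$ provides to additionally impose $g'(0) = 0$, making $0$ an even-order, non-crossing zero; the budget of \cref{lemma::num_of_zeros} still leaves exactly the $k$ simple crossing zeros at the $p_i$. A secondary technical point is ensuring strict positivity of every weight, equivalently placing $\gamma(1)$ in the relative interior rather than on the boundary of the hull, which I expect to follow from a small generic perturbation of the straddling bases.

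Finally, I would point out that a fully explicit special case is already available without Chebyshev machinery: \cref{lemma::num_of_zeros_lower} produces, via Newton's identities, equal-weight vectors agreeing at $p = 1, \dots, d-1$, and the substitution $u_i \mapsto u_i^{\lambda}$ rescales these crossings to any arithmetic progression. The arbitrary-node statement genuinely requires the exponential-sum argument above, since the power-sum symmetric-function identities are available only at integer exponents.
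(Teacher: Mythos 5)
First, a note on the comparison itself: the paper does not prove this statement at all — it is imported wholesale from \citet{jameson2006counting} (Theorem~3.4), with only the surrounding machinery proved in the appendix — so your attempt has to stand on its own. It does not: the existence step fails. The reduction to $\bfv = \mathbf{1}_d$ is unsalvageable, because then $M(\bfv;\bfw,p)\equiv 1$ and the quantity whose crossings you must engineer is $M(\bfu;\bfw,p)-1$, which by the paper's own \cref{prop:increasing_p_w}\ref{lem:monotone_a} is nondecreasing in $p$, and strictly increasing unless every $u_i$ carrying positive weight takes a common value. A monotone function crosses zero at most once, so the system $\sum_i w_i u_i^{p_m}=1$ for $k\ge 2$ distinct nodes has only the trivial solution $u_i\equiv 1$, for which the difference vanishes identically and is useless for the Gray-code shattering argument this lemma feeds. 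Equivalently, your convex-hull step is false: $\gamma(1)$ is an extreme point of the convex hull of the curve $\{\gamma(u):u>0\}$ — the impossibility of writing it as a nontrivial convex combination of other curve points \emph{is} the strict power-mean inequality — and total positivity of the generalized Vandermonde matrix buys you invertibility of the interpolation system, not nonnegativity of its solution. No generic perturbation fixes this.

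Second, your sign bookkeeping at $p=0$ is backwards, because $g(p)=\sum_i w_i u_i^p-\sum_i w_i v_i^p$ is not the power-mean difference: since $x\mapsto x^{1/p}$ is order-reversing for $p<0$, one has $\sign{M(\bfu;\bfw,p)-M(\bfv;\bfw,p)} = \sign{p}\,\sign{g(p)}$ for $p\neq 0$. Hence the forced simple zero of $g$ at $0$ is \emph{always} harmless — the flip of $\sign{g}$ is cancelled by the flip of $\sign{p}$ — whereas your proposed fix, imposing $g'(0)=0$ so that $0$ becomes an even-order zero of $g$, is precisely what would create a crossing of the power-mean difference at $p=0$ (it forces the two geometric means to agree), i.e., the forbidden event when $0$ lies inside some $(p_i,p_{i+1})$. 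A correct argument keeps both $\bfu$ and $\bfv$ nontrivial: use the Chebyshev-system property of real exponentials to build $\sum_{j=1}^{k+2}c_j e^{\beta_j p}$ whose zero set is exactly $\{0,p_1,\dots,p_k\}$ with all zeros simple (vanishing at $0$ forces $\sum_j c_j=0$), then realize it in the form $\sum_i w_i u_i^p-\sum_i w_i v_i^p$ by a common refinement of the positive and negative coefficient multisets, which needs at most $k+1\le d$ shared weights; the $\sign{p}$ factor then turns the $k+1$ crossings of $g$ into crossings of the power-mean difference at exactly $p_1,\dots,p_k$. This is in essence Jameson's construction. Your closing fallback via \cref{lemma::num_of_zeros_lower} and the rescaling $u_i\mapsto u_i^{\lambda}$ is sound as far as it goes, but, as you yourself note, it only reaches nodes in arithmetic progression, which cannot produce the interleaved crossing patterns that the Gray-code lower bound requires.
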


 We now proceed to the proof of \cref{lemma:vc_ordinal}, which bounds the VC dimensions of the function classes $\calC_{\bfw, d}$ and $\calC_{d}$.

\begin{proof}[Proof of \cref{lemma:vc_ordinal}  \ref{lemma:vc_ordinal_part_a}]\label{apdx:proof_vc_ordinal}
As there are at most $2d - 1$ roots to $f(p)$, there can be at most $2d - 1$ sign changes as $p$ varies from $-\infty$ to $\infty$. Consequently, the hypothesis class defined by all $p$ (denoted as $\mathcal{M}_{\mathbf{w}, d}$) is a subset of the hypothesis class that consists of at most $2d - 1$ sign changes on the real line. This larger hypothesis class is denoted by $\mathcal{H}_{d}$, and we have $\text{VC}(\mathcal{M}_{\mathbf{w}, d}) \leq \text{VC}(\mathcal{H}_{d})$.

Let us consider $m$ samples $\{\mathbf{u}_i, \mathbf{v}_i\}_{i = 1}^{m}$. For each sample, sign changes occur at most $2d - 1$ times, and hence the total number of changes in labeling over the entire real line is bounded by $(2d - 1)m$ (as $p$ changes, each change in labeling corresponds to a change in sign for at least one of the samples). This implies that the total number of possible labelings is $(2d - 1)m + 1$.

If the set of $m$ samples is shattered, the upper bound derived above should be at least as large as the total number of labelings possible. We thus have:
\begin{align*}
(2d - 1)m + 1 &\geq 2^m.
\end{align*}

We can show that $m = 2(\lceil \log_{2} d \rceil + 1)$ points cannot be shattered. Consider
\begin{align*}
2^{m} - (2d - 1) m - 1 &= 2^{2(\lceil \log_{2} d \rceil + 1)} - 2(2d - 1) (\lceil \log_{2} d \rceil + 1) - 1 \\
&\geq 2^{2(\log_{2} d + 1)} - 4d \lceil \log_2{d} \rceil + 2 \lceil \log_{2} d \rceil - 4d + 1 \\
&= 4d^2 - 4d \lceil \log_2{d} \rceil + 2 \lceil \log_{2} d \rceil - 4d + 1 \\
&= 4d (d - \lceil \log_{2} d \rceil) - 4d + 2 \lceil \log_{2} d \rceil + 1 \\
&\geq 4d - 4d + 2 \lceil \log_{2} d \rceil + 1, \quad \text{since } d - \lceil \log_{2} d \rceil \geq 1 \ \forall d \in \bbN \\
&= \lceil 2 \log_{2} d \rceil + 1 > 0.
\end{align*}
Thus, $m > 2(\log_{2} d + 1)$ points cannot be shattered, meaning that $VC(\calH_d) < 2(\log_{2} d + 1)$.
\end{proof}

We now bound the VC dimension for the unknown weight case. Consider $p \neq 0$. In this case, a hypothesis $C((\bfu, \bfv); \bfw, p)$ can be expressed as
\begin{align*}
\sign{\frac{\log\!\left(\sum_{i=1}^{d} w_i u_i^p\right) - \log\!\left(\sum_{i=1}^{d} w_i v_i^p\right)}{p}} &= \sign{p} \sign{\log\!\left(\sum_{i=1}^{d} w_i u_i^p\right) - \log\!\left(\sum_{i=1}^{d} w_i v_i^p\right)} \\
&= \sign{p} \sign{\!\left(\sum_{i=1}^{d} w_i u_i^p\right) - \left(\sum_{i=1}^{d} w_i v_i^p\right)}  \\
&= \sign{p} \sign{\dotprod{\bfw}{\bfu^p - \bfv^p}} \\
&= \sign{\dotprod{\bfw}{\sign{p} \left(\bfu^p - \bfv^p\right)}}.
\end{align*}
where $\bfu^p = \begin{pmatrix} u_1^p & \cdots & u_d^p\end{pmatrix}^T$. Thus, for a fixed $p$, the set of viable $\bfw$'s spans a halfspace. We note that each component of $\sign{p} \paren{\bfu^p -\bfv^p}$ is continuous, which means that $\dotprod{\bfw}{\sign{p} \paren{\bfu^p - \bfv^p}}$ is a continuous function in $\bfw$ and $p$. 

For $n > d$ samples $\set{((\bfu_i, \bfv_i), y_i}_{i = 1}^{n}$, we define $\bfh_i(p) = \sign{p} \paren{\bfu^p - \bfv^p}, i \in [n], p \neq 0$. For a fixed $p$, we note that the set of possible labelings for $\bfw \in \Delta_{d - 1}$ is a subset of the set of possible labelings for $\bfw \in \bbR^{d}$, which in turn is the set of labelings generated by $n$ hyperplanes. Since this problem has VC dimension $d$, the number of possible labelings for a fixed $p$ is upper bounded by $(n + 1)^d$. Let $\calB(p)$ denote the set of possible labelings for hyperplanes defined by $\set{\bfh_i(p)}_{i = 1}^{n}$ for a particular $p$.

\begin{lemma}
Let $p_1$ and $p_2$ have the same sign, with a labeling $\ell \in \set{\pm 1}^{n}$ such that $\ell \not\in \calB(p_1)$ but $\ell \in \calB(p_2)$. Then, there is a $p \in [p_1, p_2]$ such that there is a set of $d$ linearly dependent vectors $\bfh_{(1)}(p), \ldots, \bfh_{(d)}(p)$.
\end{lemma}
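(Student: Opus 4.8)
The plan is to identify a single transition value $p^\star \in [p_1, p_2]$ at which the labeling $\ell$ switches from unachievable to achievable, and to show that at this value some $d$ of the vectors $\bfh_i(p^\star)$ must be linearly dependent. Since $p_1$ and $p_2$ share a sign, $\sign{p}$ is constant on $[p_1,p_2]$ and each $\bfh_i(p) = \sign{p}\paren{\bfu_i^p - \bfv_i^p}$ is continuous there; assume without loss of generality that $p_1 < p_2$. The first step is to reformulate membership in $\calB(p)$ via a theorem of the alternative: $\ell \in \calB(p)$ iff the strict system $\ell_i \dotprod{\bfw}{\bfh_i(p)} > 0$ (for all $i$) is feasible in $\bfw \in \bbR^d$, which by Gordan's theorem holds iff there is \emph{no} nonzero $\bfy \ge 0$ with $\sum_i y_i \ell_i \bfh_i(p) = 0$, i.e.\ iff $\mathbf{0} \notin \mathrm{conv}\set{\ell_i \bfh_i(p) : i \in [n]}$.

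Next I would locate the transition. Because feasibility is witnessed by finitely many \emph{strict} inequalities, any witness $\bfw$ at a point $p^\star$ remains a witness for all nearby $p$ by continuity of the $\bfh_i$; hence $\set{p : \ell \in \calB(p)}$ is open in $[p_1,p_2]$ and its complement is closed. Setting $p^\star = \sup\set{p \in [p_1,p_2] : \ell \notin \calB(p)}$, closedness gives $\ell \notin \calB(p^\star)$ while $\ell \in \calB(p)$ for every $p \in (p^\star, p_2]$; in particular there is a sequence $p_k \downarrow p^\star$ with $\mathbf{0} \notin \mathrm{conv}\set{\ell_i \bfh_i(p_k)}$, whereas $\mathbf{0} \in \mathrm{conv}\set{\ell_i \bfh_i(p^\star)}$.

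The crux is to convert this ``just barely'' membership into a linear dependence. Since interior membership of $\mathbf{0}$ in a convex hull of finitely many points is stable under small perturbations of those points, and $\mathbf{0}$ escapes the hull along $p_k \downarrow p^\star$, the origin must lie on the \emph{boundary} of $\mathrm{conv}\set{\ell_i \bfh_i(p^\star)}$. A supporting hyperplane at $\mathbf{0}$, with normal $\bfw^\star \neq 0$, then cuts out a proper face of dimension at most $d-1$ containing $\mathbf{0}$; applying Carath\'eodory within the affine hull of this face writes $\mathbf{0} = \sum_{i \in J}\lambda_i \ell_i \bfh_i(p^\star)$ with $\lambda_i > 0$ and $\abs{J} \le d$. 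This convex combination is itself a nontrivial linear relation, so $\set{\bfh_i(p^\star) : i \in J}$ is linearly dependent; since $\abs{J} \le d < n$, I can enlarge $J$ to a set of exactly $d$ indices, and a superset of a dependent set is dependent, yielding the desired $\bfh_{(1)}(p^\star), \ldots, \bfh_{(d)}(p^\star)$.

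The main obstacle I anticipate is precisely this last conversion: ensuring the origin lands on the boundary rather than the interior of the hull, and controlling the number of active points to be at most $d$. The boundary claim rests on the perturbation-stability of interior membership, and the cardinality bound rests on passing to a face (rather than using Carath\'eodory's generic $d+1$ bound directly, which would be too weak to guarantee a dependent $d$-subset). Everything else\,---\,continuity of the $\bfh_i$, the Gordan reformulation, and the openness of the feasible set\,---\,is routine.
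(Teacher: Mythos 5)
Your proof is correct, and it takes a route that is dual to the paper's rather than identical to it. The paper argues in the primal space of weight vectors: it introduces $m(p) = \max_{\bfw} \min_{i} \ell_i \dotprod{\bfw}{\bfh_i(p)}$ (maximum over unit vectors $\bfw$), observes that $\ell \in \calB(p)$ exactly when $m(p) \geq 0$, obtains a $p$ with $m(p) = 0$ from the intermediate value theorem, and then derives the linear dependence by contradiction: if every $d$-subset of $\set{\bfh_i(p)}_{i=1}^n$ were independent, then at most $d - 1$ of the vectors would lie on the hyperplane orthogonal to the maximizing $\bfw^*$, and solving $\bfH \bfx_0 = \mathbf{1}_k$ for those vectors yields a perturbation $\bfw^* + t \bfx_0$ with all inner products strictly positive, contradicting $m(p) = 0$. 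Your argument replaces both steps by their convex-geometric duals: Gordan's theorem identifies strict feasibility with $\mathbf{0} \notin \mathrm{conv}\set{\ell_i \bfh_i(p)}$, your sup construction finds the same transition point ($\mathbf{0}$ lying on the boundary of the hull is precisely the condition $m(p^\star) = 0$), and Carath\'eodory applied inside the supporting hyperplane replaces the paper's perturbation step, producing the explicit certificate $\mathbf{0} = \sum_{i \in J} \lambda_i \ell_i \bfh_i(p^\star)$ with $\abs{J} \leq d$. Your version buys modularity and robustness---the delicate step is delegated to standard facts (stability of interior membership under perturbation, faces of polytopes, Carath\'eodory), and it outputs the dependent set together with explicit coefficients, whereas the paper's perturbation chain must be verified by hand (and as printed is loose about the signs of the denominators $\dotprod{\bfx_0}{\bfh_i(p)}$ and about the labels $\ell_i$); the paper's version buys self-containedness, using nothing beyond continuity, the intermediate value theorem, and elementary linear algebra. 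One caveat: your Gordan reformulation implicitly takes $\ell \in \calB(p)$ to mean strict feasibility $\ell_i \dotprod{\bfw}{\bfh_i(p)} > 0$ for all $i$, whereas the paper's proof reads membership non-strictly ($\geq 0$); the paper's definition of $\calB(p)$ is informal enough to permit either reading, and under the non-strict one your argument survives with the open/closed roles of $\set{p : \ell \in \calB(p)}$ and its complement exchanged---continuity again forces $\mathbf{0}$ onto the boundary of $\mathrm{conv}\set{\ell_i \bfh_i(p^\star)}$ at the transition point, after which your face-plus-Carath\'eodory step goes through unchanged.
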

\begin{proof}
Let $\ell$ be the labeling which is in $\calB(p_2)$ but not in $\calB(p_1)$. Since this labeling is not in $\calB(p_1)$, for each $\bfw$, there is some hyperplane $\bfh_i(p_1)$ such that $\ell_i \dotprod{\bfw}{\bfh_i(p_1)} < 0$. Since this labeling is in $\calB(p_2)$, there is some $\bfw$ such that $\ell_i \dotprod{\bfw}{\bfh_i(p_2)} \geq 0$ for every $i \in [n]$.

Let $\mathfrak{B} \subset \bbR^{d}$ denote the unit hypersphere around the origin. Since the labelings are invariant to the scale of $\bfw$, the set of possible labelings for $\bfw \in \bbR^{d}$ is exactly the set of possible labelings for $\bfw \in \mathfrak{B}$

Consider the quantity
\[
m(p) = \max_{\bfw \in \mathfrak{B}} \min_{i} \ell_i \dotprod{\bfw}{\bfh_i(p)}.
\]
We observe that if $m(p) < 0$, for each $\bfw$ there is some $i \in [n]$ such that $\ell_i \dotprod{\bfw}{\bfh_i(p)} < 0$, i.e., the labeling is not attained at any point. On the other hand, if $m(p) \geq 0$, there is some $\bfw$ such that the labeling is attained at $\bfw$. Since $\ell_i \dotprod{\bfw}{\bfh_i(p)}$ is a continuous function in $\bfw$ and $p$ for all $i \in [n]$, $\min_{i} \ell_i \dotprod{\bfw}{\bfh_i(p)}$ is also a continuous function in $\bfw$ and $p$. Thus, $m(p)$ is also a continuous function in $p$.

Using this fact and the intermediate value theorem, there should be some $p \in [p_1, p_2]$ such that $m(p) = 0$. Let $\bfw^* \in \mathfrak{B}$ be a vector at which $m(p) = 0$ is attained. We now show that at this $p$, at least $d$ of the $n$ vectors $\set{\bfh_i(p)}_{i = 1}^{n}$ are linearly dependent. 

Suppose this were not the case, i.e., any set of $d$ vectors in the set is linearly independent. This means that at most $d - 1$ of the vectors lie on the hyperplane $\set{\bfx: \dotprod{\bfw}{\bfx} = 0}$. Let $\bfh_{(1)}(p), \ldots, \bfh_{(k)}(p)$ denote these vectors, with $k \leq d - 1$. Since $m(p) = 0$, there should be at least one such vector. Let $\bfH \in \bbR^{k \times d}$ be the matrix with these vectors as the rows.

If these $k$ vectors are linearly dependent, we can add any of the remaining $n - k$ vectors to get a set of $d$ linearly dependent vectors. Let us consider the case where they are not linearly dependent. Consider $\set{\bfx: \bfH \bfx = \mathbf{1}_k}$. This is an underdetermined set of linear equations, and the set should be non-empty (because of linear independence of the $k$ vectors).  Let $\bfx_0$ be one of the vectors in this set.

Let $t > \max_{i \in [n]} -\frac{\dotprod{\bfw}{\bfh_i(p)}}{\dotprod{\bfx_0}{\bfh_i(p)}}$. We have
\begin{align*}
\dotprod{\bfw + t \bfx_0}{h_i(p)} &= \dotprod{\bfw}{h_i(p)} + t \dotprod{\bfx_0}{h_i(p)} \\
&> \dotprod{\bfw}{h_i(p)} - \max_{j \in [n]} \frac{\dotprod{\bfw}{\bfh_j(p)}}{\dotprod{\bfx_0}{\bfh_j(p)}} \dotprod{\bfx_0}{\bfh_i(p)} \\
&\geq \dotprod{\bfw}{h_i(p)} - \frac{\dotprod{\bfw}{\bfh_i(p)}}{\dotprod{\bfx_0}{\bfh_i(p)}} \dotprod{\bfx_0}{\bfh_i(p)} \\
&= 0.
\end{align*}
Thus, $\bfw + t \bfx_0$ is a point such that $\dotprod{\bfw + t \bfx_0}{\bfh_i(p)} > 0$ for all $i \in [n]$. This means that $m(p) > 0$, which is a contradiction. Intuitively, this means that if any $d$ vectors in $\set{h_i(p)}_{i = 1}^{n}$ are linearly independent, then $m(p) > 0$. Thus, there should be a set of $d$ linearly dependent vectors $h_{(1)}(p), \ldots, h_{(d)}(p)$.
\end{proof}

From the above lemma, we observe that any change in the set of labelings is accompanied by a $p$ which gives $d$ linearly dependent vectors.

\begin{proof}[Proof of \cref{lemma:vc_ordinal}  \ref{lemma:vc_ordinal_part_b}]
Using the lemma above, to bound the number of possible labelings, we first bound the number of $p$'s such that there are $d$ linearly dependent vectors.

Consider a set of $d$ vectors $h_1(p), \ldots, h_d(p)$. As the vectors are linearly dependent, the determinant of the matrix constructed using these vectors should be zero. We should thus have
\begin{align*}
\begin{vmatrix}
\sign{p} \paren{u_{11}^p - v_{11}^p} & \cdots & \sign{p} \paren{u_{1d}^p - v_{1d}^{p}} \\
\vdots & \ddots & \vdots \\
\sign{p} \paren{u_{d1}^p - v_{11}^p} & \cdots & \sign{p} \paren{u_{1d}^p - v_{dd}^{p}} \\
\end{vmatrix} &= 0.
\end{align*}
Upon expanding the determinant, we get an equation of the form $\sum_{i = 1}^{m} a_i u_i^p$ which has $2^d \cdot d!$ terms. From an earlier lemma, we know that this equation should have at most $2^d \cdot d! - 1$ roots. Upon adding the original configuration, we get $2^d \cdot d!$ possible configurations. The choice of the $d$ vectors can be made in $\binom{n}{d}$ ways, and hence we have a bound on the possible changes as $2^d d! \binom{n}{d}$. In the worst case, we assume that all the labelings are changed, and we thus get an upper bound on the changes as
\begin{align*}
(n + 1)^d 2^d d! \binom{n}{d}.
\end{align*}

In the beginning of the proof, we had carefully set aside $p = 0$. We now observe that $p = 0$ is a root of the above system of equations. Thus, we are implicitly considering any possible changes at $p = 0$ as well.

We can show that $n = 8(\lceil d \log_{2} d \rceil + 1)$ points cannot be shattered.
\begin{align*}
(n + 1)^d 2^d d! \binom{n}{d} &= (n + 1)^d 2^d \cdot n(n - 1) \ldots (n - d + 1) \\
&> (n + 1) 2^d n^{2d - 1} \\
&> 2^{d - 1} n^{2d}.
\end{align*}
We now show that for every $d \in \mathbb{N}$, the inequality $2^{d - 1} n^{2d} \leq 2^n$ holds, i.e.  $n - d - 2d \log_2 n + 1 > 0$ for $n = 8(\lceil d \log_{2} d \rceil + 1)$. For $d \in \{1, 2\}$, this statement can be verified directly. Therefore, it suffices to show that $f(x) := 8(x \log_2 x + 1) - 2x \log_2 [8(x \log_2 x + 1)] - x + 1 > 0$ for any $x \geq 3$.
\begin{align*}
f(x) &> 8x \log_2 x - 2x \log_2(16x \log_2 x) - x + 9, \quad \text{for } x \log_2 x > 1, \\
&= 6x \log_2 x - 2x \log_2(\log_2 x) - 9x + 9 \\
&> 4x \log_2 x - 9x + 9 = g(x).
\end{align*}
We note that $g(3) = 12 \log_2 3 - 18 > 1 > 0$. Moreover, $g'(x) = 4\log_2 x + 4 / \log 2 - 9$, and we note that $g'(2) = 4 / \log 2 - 5 > 0.77 > 0$, with $g'(x)$ being an increasing function. Thus, $f(x) > g(x) > 0$ for all $x \geq 3$. This means that $f(x) > 0$, which proves our bound. This implies that the VC dimension is bounded above by $8(d \log_2 d + 1)$
\end{proof}

\begin{proof}[Proof of \cref{lemma:vc_ordinal}  \ref{lemma:vc_ordinal_part_c}]
Take $m = \log_2(d) + 1$. Let $\sigma_1, \ldots, \sigma_{2^m} \in \{ \pm 1 \}^m$ be a Gray code ordering of the set $\{ \pm 1 \}^m$, such that two successive values have a Hamming distance of $1$, and that the number of changes in different bit positions is at most $\frac{2^m}{2} \leq d$ (for the existence of such an ordering, see \cite{bhat1996balanced}). For $i < 2^m$, denote by $s_i \in [m]$ the bit position in which $\sigma_i$ and $\sigma_{i+1}$ differ. By using \cref{lemma::exact_eq} $d$ times, there exists a sample $\{\mathbf{u}_i, \mathbf{v}_i\}_{i = 1}^{m}$ and $p_1 < \ldots < p_{2^{m-1}}$, where $p_i$ satisfies $\|\mathbf{u}_j\|_{p_i} = \|\mathbf{v}_j\|_{p_i}$ if $s_i = j$. Furthermore, define $p_0 = -\infty$, and note that each interval $(p_i, p_{i+1})$ for $0 \leq i < 2^m$ corresponds to a unique combination of labels over $\{\mathbf{u}_i, \mathbf{v}_i\}_{i = 1}^{m}$.

When the weights are unknown, we observe that $p = 1$ is similar to the case of linear classification with the constraint of positive weights and no bias term. This is because
\begin{align*}
C((\bfu, \bfv); \bfw, 1) &= \sign{\log M(\bfu; \bfw, 1) - \log M(\bfv, 1)}\\
&= \sign{M(\bfu; \bfw, 1) - M(\bfv; \bfw, 1)}  \\
&= \sign{\sum_{i = 1}^{d} w_i (u_i - v_i)}.
\end{align*}
Since linear classification with no bias term has a VC dimension of $d - 1$, this is a lower bound for the VC dimension of $\calC_{\bfw, d}$.
\end{proof}

\subsection{Proof of \texorpdfstring{\cref{thm:ordinal_iid_noise}}{thm ordinal iid noise}}
\label{apdx:proof_ordinal_iid_noise}
\begin{proof}
We prove the result for unknown weights, with the known weights result following similar steps. We consider the function class $\calC_{d}$ as in  \cref{section:pairwise_preferences}, with $\ell_{0-1}$ loss being $\ell_{0-1}(t, y) = (1 + ty) / 2, t, y \in \set{\pm 1}$. We observe that $\ell_{0-1}$ is $1/2$ Lipschitz w.r.t. $t$. Thus, by applying Theorem 3 of \citet{natarajan2013learning}, we observe that w.r.t. $\ell_{0-1}$ on the \textit{noiseless} data distribution,
\begin{align}
\label{eq:iid_noise_true_dist_bound}
R(\hat{\bfw}, \hat{p}) - R(\bfw, p) \leq 4 L_{\rho} \frakhatR(\calC_{d}) + 2 \sqrt{\frac{\log(1 / \delta)}{2n}}
\end{align}
where $L_{\rho} = (1 + \abs{\rho_{+1} - \rho_{-1}}) L / (1 - \rho_{+1} - \rho_{-1})$. Here, $\rho_{+1}$ and $\rho_{-1}$ are defined as the probability of mislabeling true positive and true negative examples, which in our case are the same value, $\rho$. Thus, $L_{\rho} = 1 / (2(1 - 2 \rho))$ in our case. We obtain $\frakhatR(\calC_{d})$ using the VC bound on Rademacher complexity:
\begin{align*}
\frakhatR(\calC_{d}) \leq \sqrt{\frac{16(d \log_2 d + 1) \log(n + 1)}{n}}.
\end{align*}
Substituting it in \cref{eq:iid_noise_true_dist_bound} concludes our proof.
\end{proof}

\subsection{Proof of \texorpdfstring{\cref{thm:logistic_noise}}{thm logistic noise}}
\label{apdx:proof_logistic_noise}
\begin{proof}
We prove the result for the case of unknown weights; the known weights case follows similar steps. First, we establish a bound on the Rademacher complexity of 
\[
\calT_{\bfw, d} = \left\{\tau \left(\log M(\cdot; \bfw, p) - \log M(\cdot; \bfw, p)\right) \mid \tau, p \right\}.
\]
Define 
\[
\log M(\bfu_i; \bfw, p) = \log u_{i(1)} + \log\left(\frac{u_{i(d)}}{u_{i(1)}}\right) \log M(\bfr_i; \bfw, q),
\]
and
\[
\log M(\bfv_i; \bfw, p) = \log v_{i(1)} + \log\left(\frac{v_{i(d)}}{v_{i(1)}}\right) \log M(\bfr'_i; \bfw, q).
\]

\begin{align*}
\frakhatR(\calT_{d}) &= \frac{1}{n} \bbE_{\epsilon} \left[\sup_{\tau, \bfw, p} \sum_{i = 1}^{n} \epsilon_i \tau \left(\log M(\bfu_i; \bfw, p) - \log M(\bfv_i; \bfw, p)\right)\right] \\
&\leq \frac{1}{n} \bbE_{\epsilon} \left[\sup_{\tau, \bfw, p} \sum_{i = 1}^{n} \epsilon_i \tau \log M(\bfu_i; \bfw, p)\right] 
    + \frac{1}{n} \bbE_{\epsilon} \left[\sup_{\tau, \bfw, p} \sum_{i = 1}^{n} (-\epsilon_i) \tau \log M(\bfv_i; \bfw, p)\right] \\
&\leq \frac{1}{n} \bbE_{\epsilon} \left[\sup_{\tau, \bfw, p} \sum_{i = 1}^{n} \epsilon_i \tau \log M(\bfu_i; \bfw, p)\right] 
    + \frac{1}{n} \bbE_{\epsilon} \left[\sup_{\tau, \bfw, p} \sum_{i = 1}^{n} \epsilon_i \tau \log M(\bfv_i; \bfw, p)\right] \\
&\leq \frac{1}{n} \bbE_{\epsilon} \left[\sup_{\tau, \bfw, p} \sum_{i = 1}^{n} \epsilon_i \tau \left(\log u_{i(1)} + \log\left(\frac{u_{i(d)}}{u_{i(1)}}\right) \log M(\bfr_i; \bfw, q)\right)\right] \\
&\quad + \frac{1}{n} \bbE_{\epsilon} \left[\sup_{\tau, \bfw, p} \sum_{i = 1}^{n} \epsilon_i \tau \left(\log v_{i(1)} + \log\left(\frac{v_{i(d)}}{v_{i(1)}}\right) \log M(\bfr'_i; \bfw, q)\right)\right] \\
&= \frac{1}{n} \bbE_{\epsilon} \left[\sup_{\tau, \bfw, p} \sum_{i = 1}^{n} \epsilon_i \tau \log\left(\frac{u_{i(d)}}{u_{i(1)}}\right) \log M(\bfr_i; \bfw, q)\right] \\
&\quad + \frac{1}{n} \bbE_{\epsilon} \left[\sup_{\tau, \bfw, p} \sum_{i = 1}^{n} \epsilon_i \tau \log\left(\frac{v_{i(d)}}{v_{i(1)}}\right) \log M(\bfr'_i; \bfw, q)\right] \\
&\leq \frac{\kappa}{n} \bbE_{\epsilon} \left[\sup_{\tau, \bfw, p} \left|\sum_{i = 1}^{n} \epsilon_i \tau \log M(\bfr_i; \bfw, q)\right|\right] 
    + \frac{\kappa}{n} \bbE_{\epsilon} \left[\sup_{\tau, \bfw, p} \left|\sum_{i = 1}^{n} \epsilon_i \tau \log M(\bfr'_i; \bfw, q)\right|\right] \\
&\leq \frac{\tau_{\max} \kappa}{n} \bbE_{\epsilon} \left[\sup_{\tau, \bfw, p} \left|\sum_{i = 1}^{n} \epsilon_i \log M(\bfr_i; \bfw, q)\right|\right] 
    + \frac{\tau_{\max} \kappa}{n} \bbE_{\epsilon} \left[\sup_{\tau, \bfw, p} \left|\sum_{i = 1}^{n} \epsilon_i \log M(\bfr'_i; \bfw, q)\right|\right] \\
&= 2 \tau_{\max} \kappa \frakhatR_{\text{abs}}(\calS_{d}).
\end{align*}

We now apply the bound on $\frakhatR_{\text{abs}}(\calS_{d})$ from \cref{apdx:proof_rad_cardinal} to obtain the following bound on $\frakhatR(\calT_{d})$:
\[
\frakhatR(\calT_{d}) \leq 2 \tau_{\max} \kappa \left(\sqrt{\frac{2 \log 2 + 16 (d \log_2 d + 1) \log n}{n}} + \frac{c}{\sqrt{n}}\right).
\]
Finally, we use the uniform convergence bounds derived from Rademacher complexity to establish the following PAC bound:
\begin{align*}
R(\hat{\bfw}, \hat{p}) - R(\bfw, p) &= \left(\hat{R}_n(\hat{\bfw}, \hat{p}) - \hat{R}_n(\bfw, p)\right) + \left(\hat{R}_n(\bfw, p) - R(\bfw, p)\right) + \left(R(\hat{\bfw}, \hat{p}) - \hat{R}_{n}(\hat{\bfw}, \hat{p})\right) \\
&\leq 0 + \epsilon + \epsilon = 2 \epsilon \\
&= 8 \kappa \frakhatR(\calT_{d}) + 6 \sqrt{\frac{\log(4 / \delta)}{2n}}.
\end{align*}
\end{proof}

\section{Algorithm}
\label{apdx:algorithm}
Our algorithm can be broken into two nested steps. The first step consists of choosing $p$, and the second step involves conducting gradient descent on $\bfw$ (and possibly $\tau$) to obtain their empirically optimal values, $\hat{\bfw}$ and $\hat{p}$. In our experiments we choose $p$ using grid search. However, optimization over $p$ can also be done using other methods like simulated annealing. We minimize the $\ell_2$ loss in the cardinal case with weighted power mean and the logistic loss in the ordinal case with log weighted power mean. The algorithm's pseudocode is presented in \cref{alg:ERM}.

\begin{algorithm}
\caption{ERM algorithm for weighted power mean-based optimization}
\label{alg:ERM}
\begin{algorithmic}
\Require $\calD = \set{(\bfx_i, y_i)}_{i = 1}^{n}$
\State $\hat{\bfw} \gets \mathbf{1} / d$
\State $v_{\text{best}} \gets 0$
\State $\hat{p} \gets 0$

\For {$p \in \sqbrac{p_{\text{lower}}, p_{\text{lower}} + \epsilon, \ldots, p_{\text{upper}} - \epsilon, p_{\text{upper}}}$}
    \State $v \gets \arg\min_{\bfw} \frac{1}{n} \sum_{i = 1}^{n} \ell(M(u_i; \bfw, p), y_i)$
    \State $\tilde{\bfw} \gets \arg\min_{\bfw} \frac{1}{n} \sum_{i = 1}^{n} \ell(M(u_i; \bfw, p), y_i)$
    \If {$v < v_{\text{best}}$}
        \State $\hat{\bfw} \gets \tilde{\bfw}$
        \State $v_{\text{best}} \gets v$
    \EndIf
\EndFor
\State {\bf Return} $\hat{\bfw}$, $v_{\text{best}}$
\end{algorithmic}
\end{algorithm}

Note that for the ordinal case, we would optimize over $\tau$ along with $\bfw$. For our experiments, we set $p_{\text{lower}} = -3.5$ and $p_{\text{upper}} = 3.5$. We use a grid resolution of $\epsilon = 0.1$. Since the function is not convex, we use several tricks to ensure quick convergence:
\begin{itemize}
\item While we use Algorithm 1 from \cite{condat2016fast} for projection onto the simplex, it can potentially be time consuming. Thus, we project the gradient $\nabla_{\bfw} \ell$ itself on the unit simplex and use it for gradient descent, with the simplex projection algorithm being used only when some weights become too small/negative.
\item To prevent the algorithm from taking excessively large steps, we use the learning rate to clip the norm of the gradient. More specifically, if $g_t$ is the gradient and $\lambda$ is the learning rate, we use the update 
\begin{align*}
g_{t + 1} = g_t - \min\set{\lambda, \norm{g_t}{2}} \cdot \frac{g_t}{\norm{g_t}{2}}.
\end{align*}
\item If the optimal value hasn't improved in a certain number of iterations, the algorithm may be oscillating above the minimum. We thus halve the learning rate to encourage better convergence.
\item If the learning rate becomes too small, the steps taken would be too small to change the loss significantly. Thus, we terminate the algorithm. We also terminate the algorithm if the range of the past few losses is too small.
\item We also conduct gradient descent parallely starting from $d + 1$ points. The $d + 1$ points correspond to points close to the vertices of the simplex (corresponding to almost one-hot vectors) and the centroid of the simplex. This was done since convergence was observed to be slow for certain weights. At each step, $v_{\text{best}}$ is updated according to the point giving the minimum loss.
\end{itemize}

We ran the experiments on an NVIDIA RTX A5000 GPU. The algorithm with the above settings takes about 30 minutes to check for all 71 values of $p$.

\section{Semi-synthetic Experiments: Further Information}
\label{apdx:food_rescue_more}

Here we provide additional results and details from our semi-synthetic experiments. \Cref{img:apdx_food_rescue_cardinal} shows cardinal case results with varying sample sizes and noise levels. \Cref{img:apdx_food_rescue_ordinal} provides ordinal case results across different sample sizes and values of $\tau$. \Cref{apdx:food_rescue_ordinal_positive_p} displays ordinal case results for $p = 1.62$ across varying values of $\tau$.

\begin{figure}[htbp]
\centering
\begin{subfigure}{0.45\textwidth}
\includegraphics[width = \textwidth]{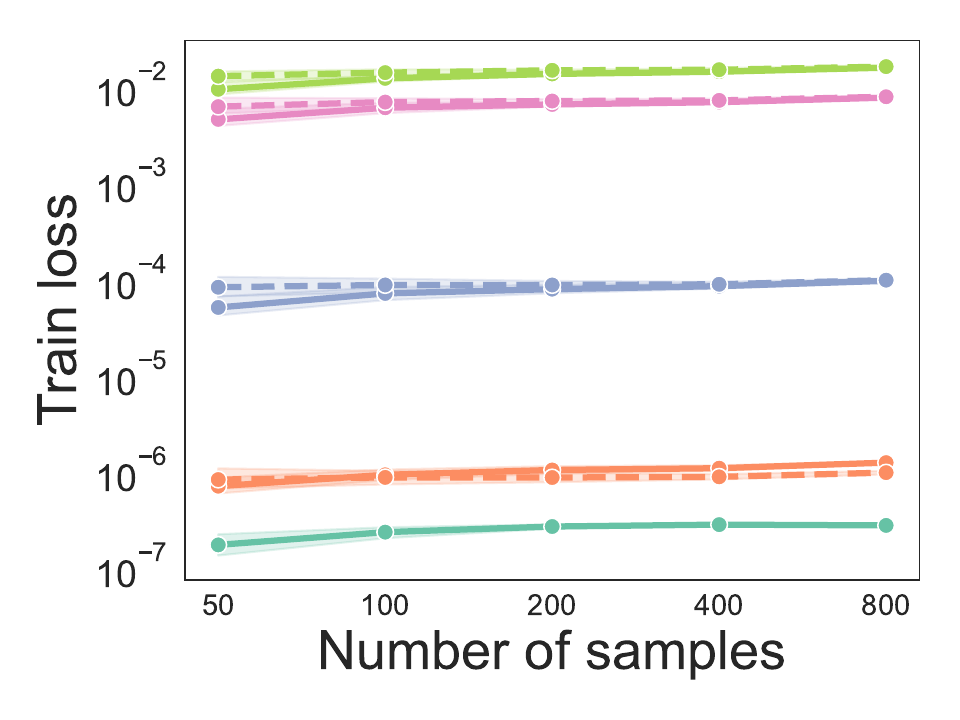}
\caption{Train loss}
\label{img:food_rescue_cardinal_train_loss}
\end{subfigure}
\begin{subfigure}{0.45\textwidth}
\includegraphics[width = \textwidth]{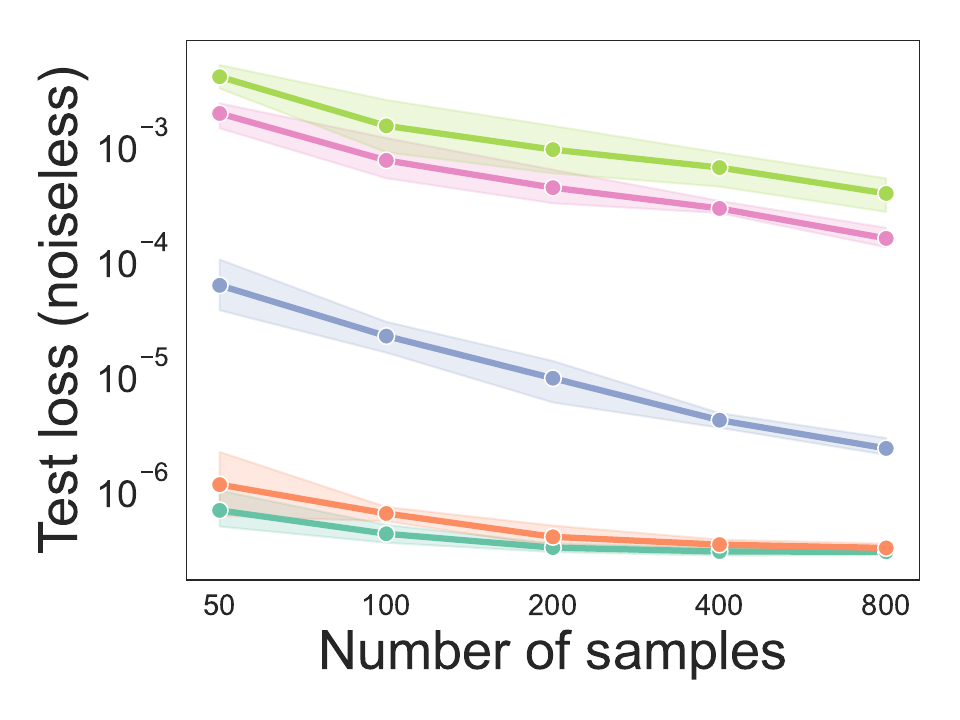}
\caption{Noiseless test loss}
\label{img:food_rescue_cardinal_noiseless_test_loss}
\end{subfigure}
\begin{subfigure}{0.6\textwidth}
\includegraphics[width = \textwidth]{plots/img_food_rescue/cardinal/legend.pdf}
\end{subfigure}
\caption{More results for cardinal case with number of samples. Different lines show results for different values of added noise. Solid lines correspond to values for learnt parameters, whereas dotted lines correspond to values for real parameters. }
\label{img:apdx_food_rescue_cardinal}
\end{figure}

\begin{figure}[htbp]
\centering
\begin{subfigure}{0.45\textwidth}
\includegraphics[width = \textwidth]{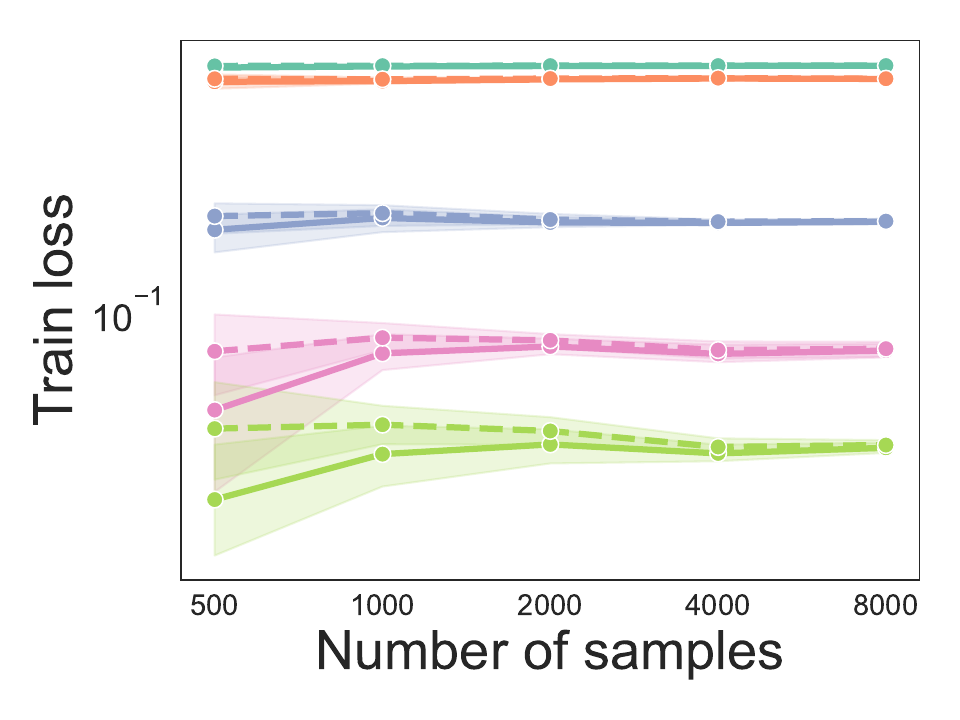}
\caption{Train loss}
\label{img:food_rescue_ordinal_train_loss}
\end{subfigure}
\begin{subfigure}{0.45\textwidth}
\includegraphics[width = \textwidth]{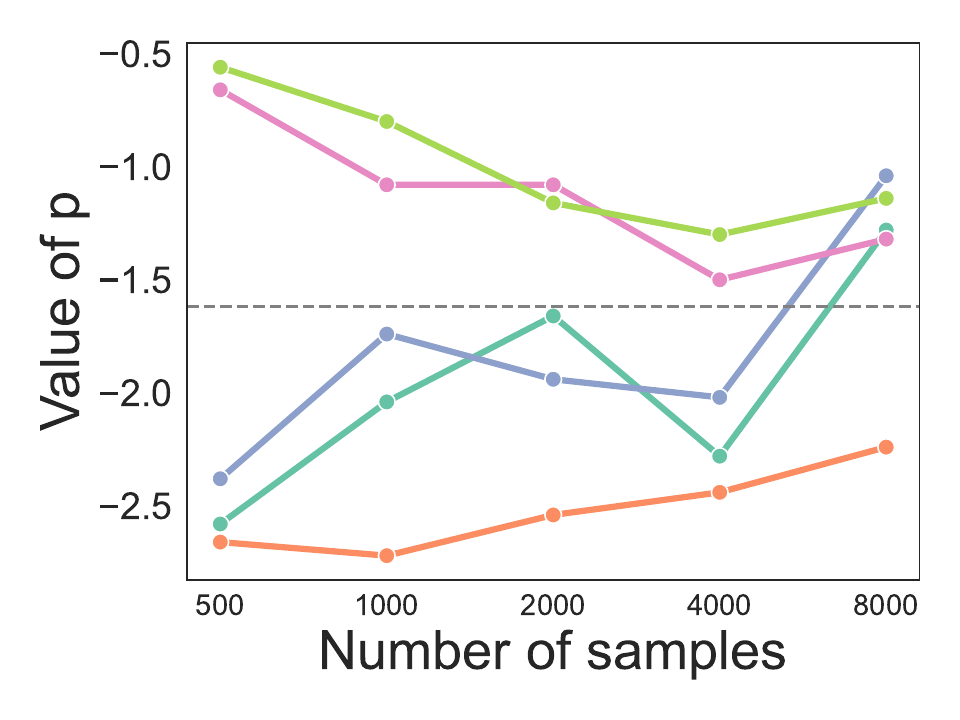}
\caption{Learnt $p$}
\label{img:food_rescue_ordinal_p}
\end{subfigure}
\begin{subfigure}{0.45\textwidth}
\includegraphics[width = \textwidth]{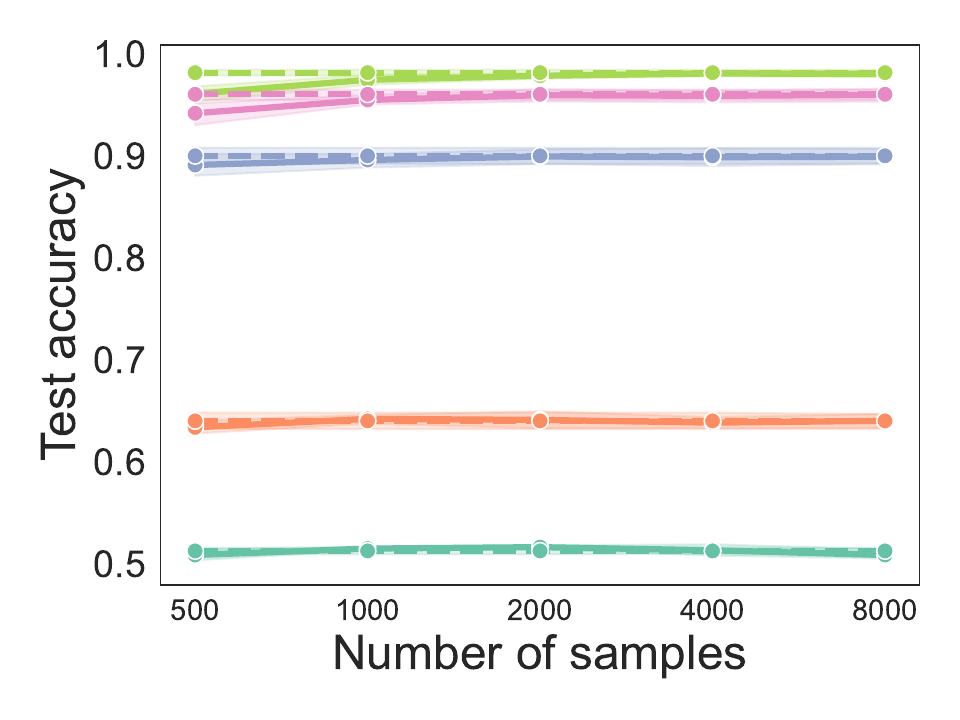}
\caption{Test accuracy}
\label{img:food_rescue_ordinal_test_acc}
\end{subfigure}
\begin{subfigure}{0.45\textwidth}
\includegraphics[width = \textwidth]{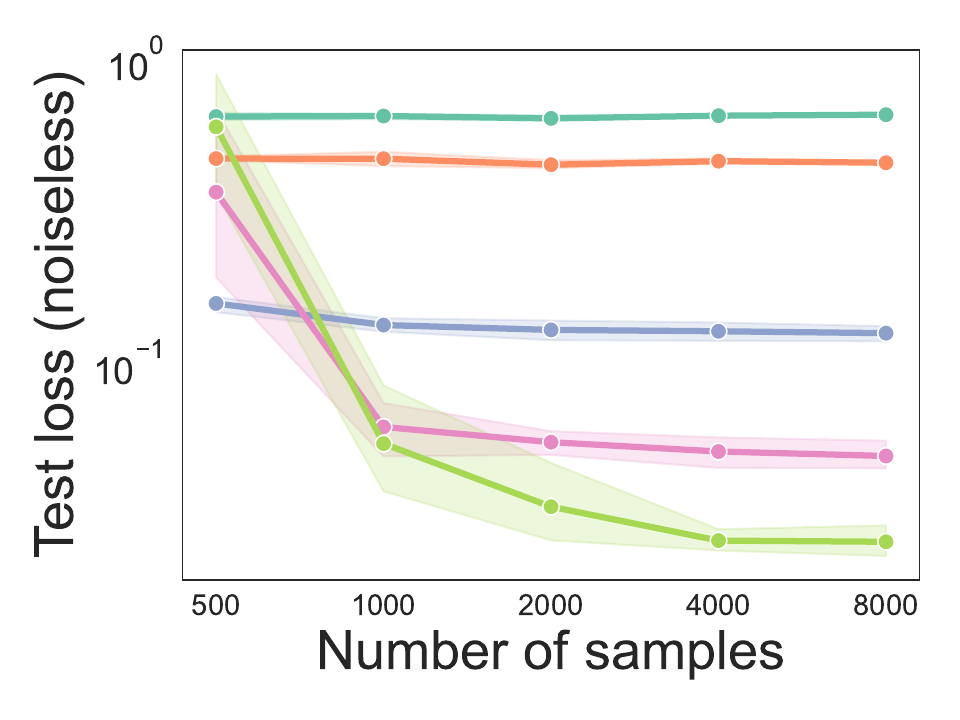}
\caption{Noiseless test loss}
\label{img:food_rescue_ordinal_noiseless_test_loss}
\end{subfigure}
\begin{subfigure}{0.6\textwidth}
\includegraphics[width = \textwidth]{plots/img_food_rescue/ordinal/legend.pdf}
\end{subfigure}
\caption{More results for ordinal case with number of samples. Different lines show results for different values of $\tau$. Solid lines correspond to values for learnt parameters, whereas dotted lines correspond to values for real parameters.}
\label{img:apdx_food_rescue_ordinal}
\end{figure}
\begin{figure}[htbp]
\label{img:apdx_food_rescue_ordinal_positive_p}
\centering
\begin{subfigure}{0.45\textwidth}
\includegraphics[width = \textwidth]{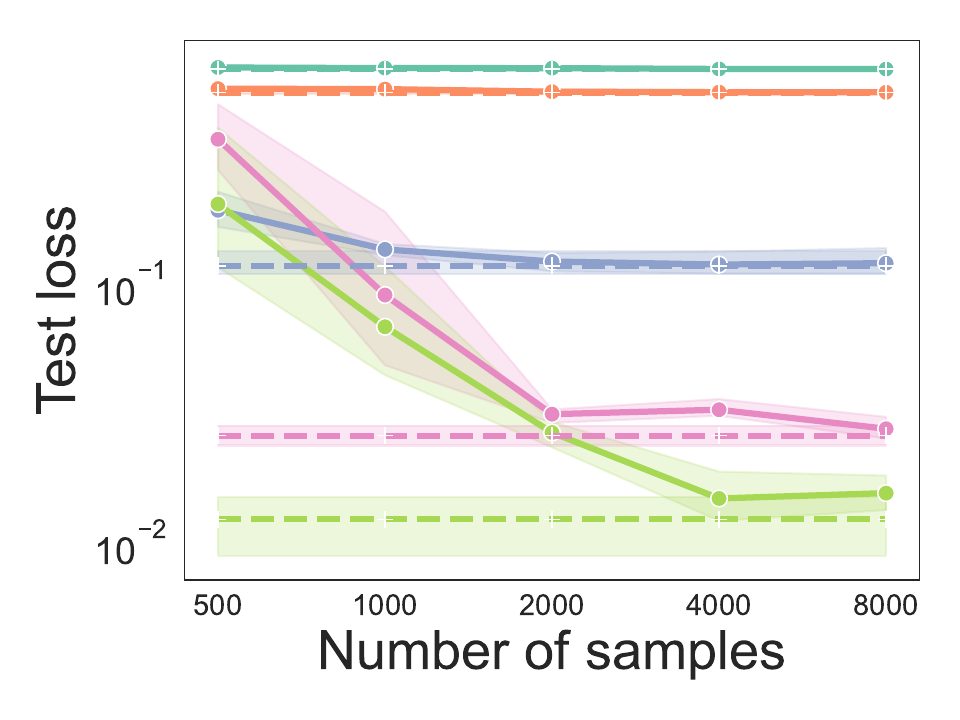}
\caption{Test loss}
\label{img:food_rescue_ordinal_positive_p_test_loss2}
\end{subfigure}
\begin{subfigure}{0.45\textwidth}
\includegraphics[width = \textwidth]{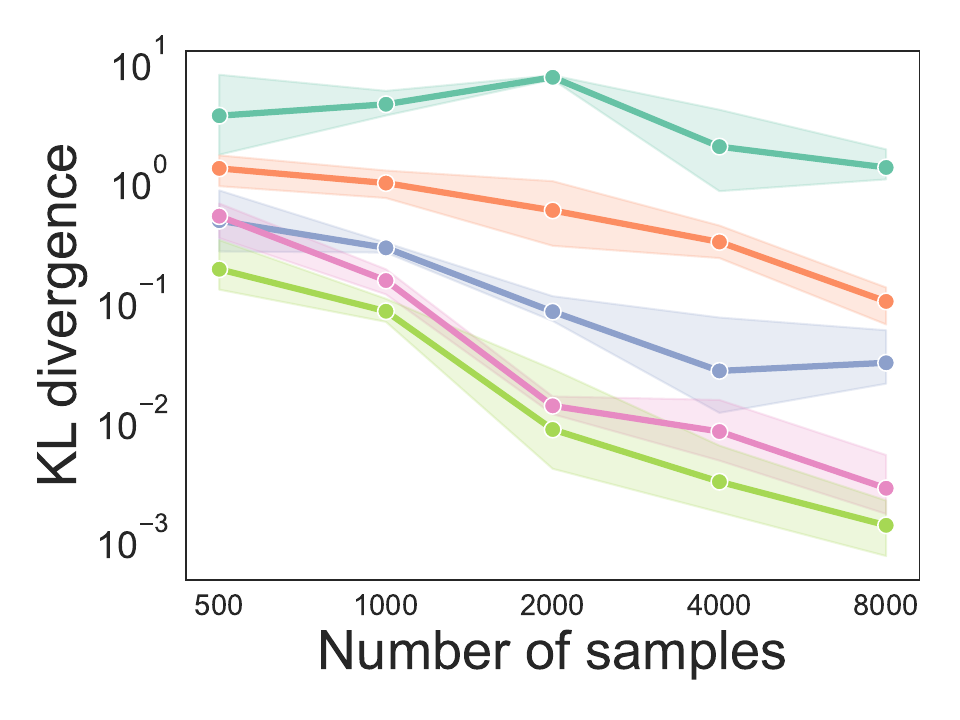}
\caption{$KL(\bfw^* \Vert \bfw)$}
\label{img:food_rescue_ordinal_positive_p_weight_KL}
\end{subfigure}
\begin{subfigure}{0.45\textwidth}
\includegraphics[width = \textwidth]{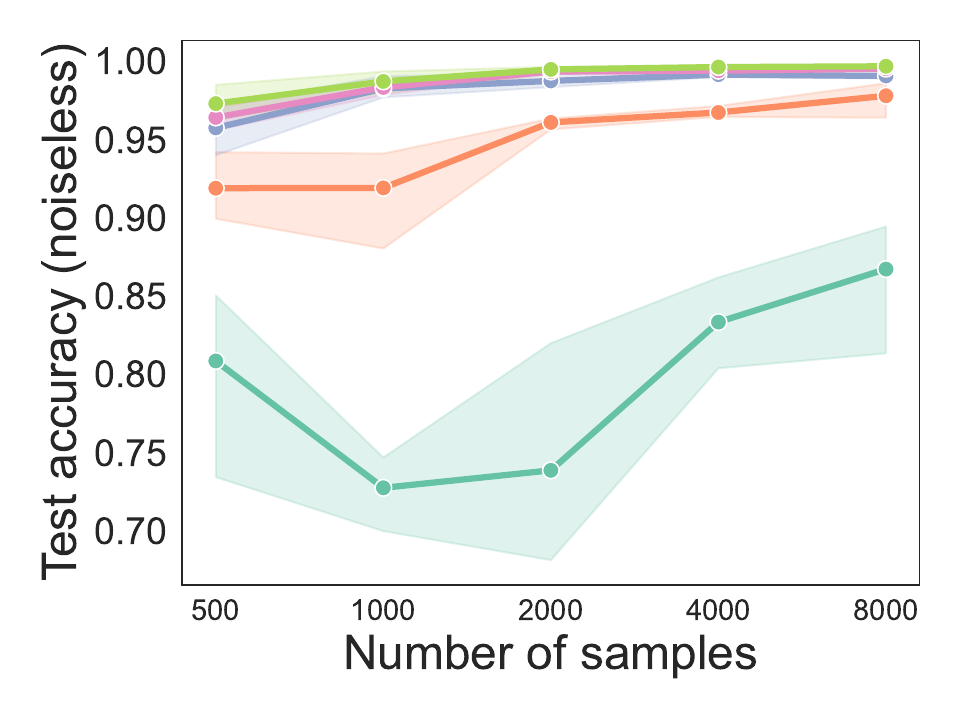}
\caption{Noiseless test accuracy}
\label{img:food_rescue_ordinal_positive_p_correct_test_acc}
\end{subfigure}
\begin{subfigure}{0.45\textwidth}
\includegraphics[width = \textwidth]{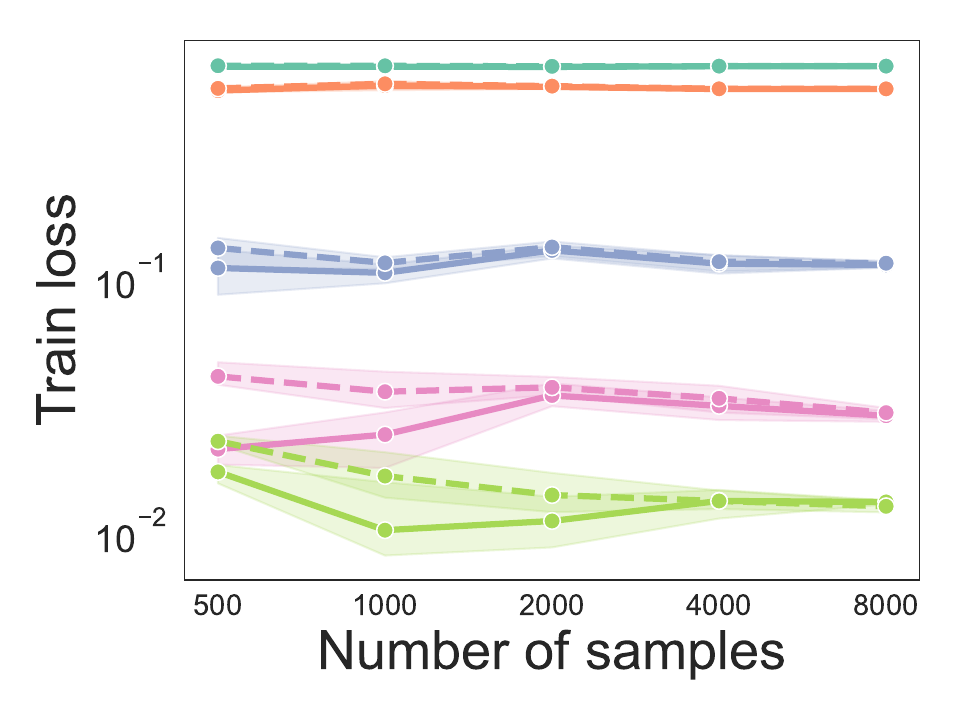}
\caption{Train loss}
\label{img:food_rescue_ordinal_positive_p_train_loss}
\end{subfigure}
\begin{subfigure}{0.32\textwidth}
\includegraphics[width = \textwidth]{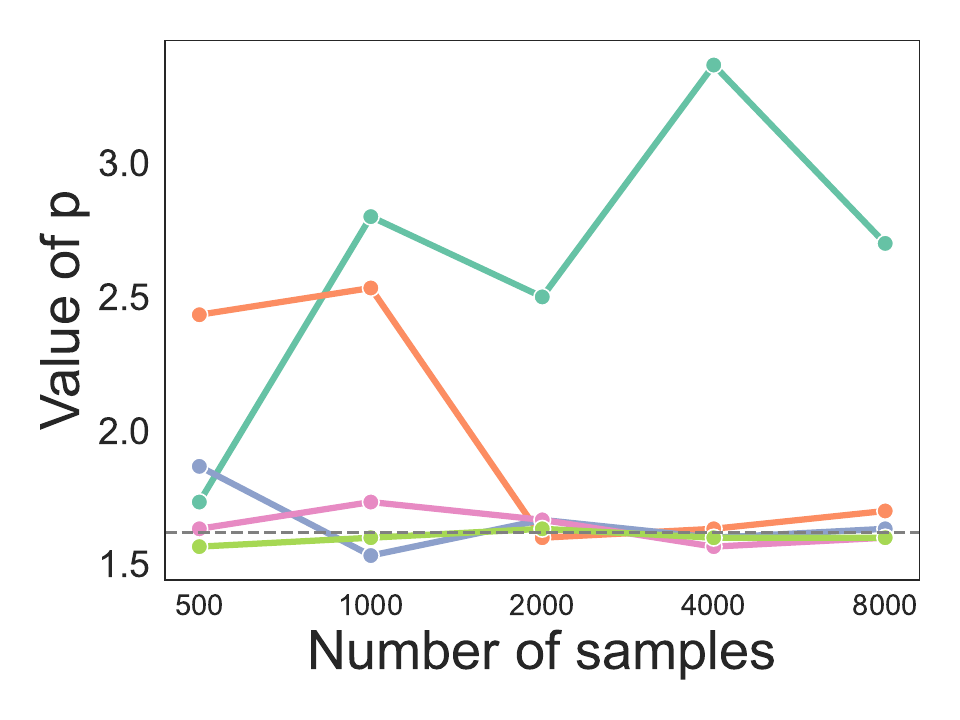}
\caption{Learnt $p$}
\label{img:food_rescue_ordinal_positive_p_p}
\end{subfigure}
\begin{subfigure}{0.32\textwidth}
\includegraphics[width = \textwidth]{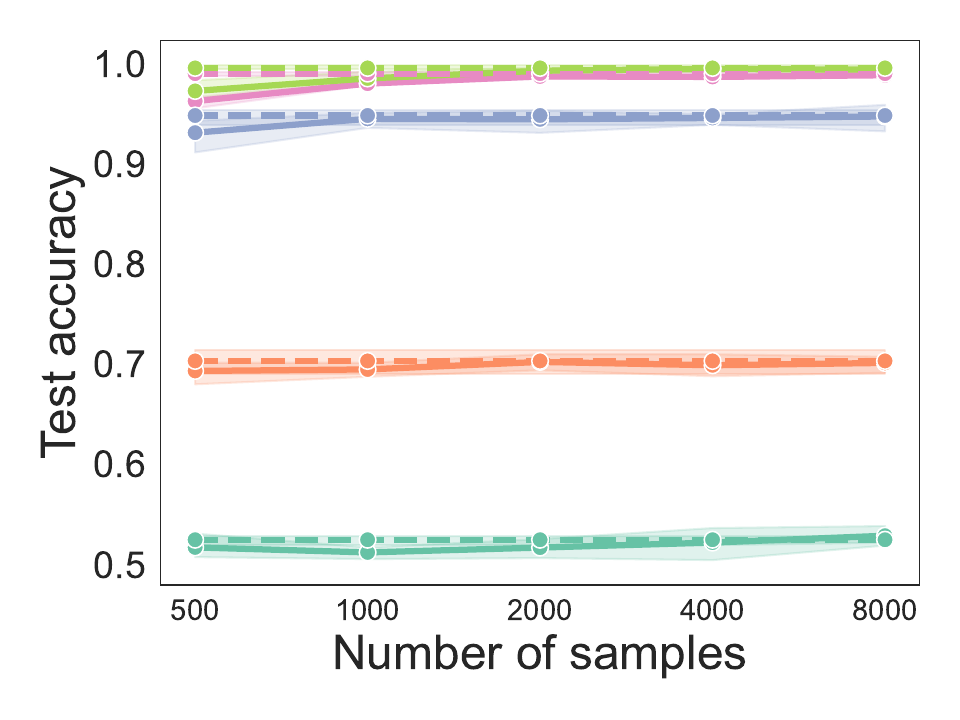}
\caption{Test accuracy}
\label{img:food_rescue_ordinal_positive_p_test_acc}
\end{subfigure}
\begin{subfigure}{0.32\textwidth}
\includegraphics[width = \textwidth]{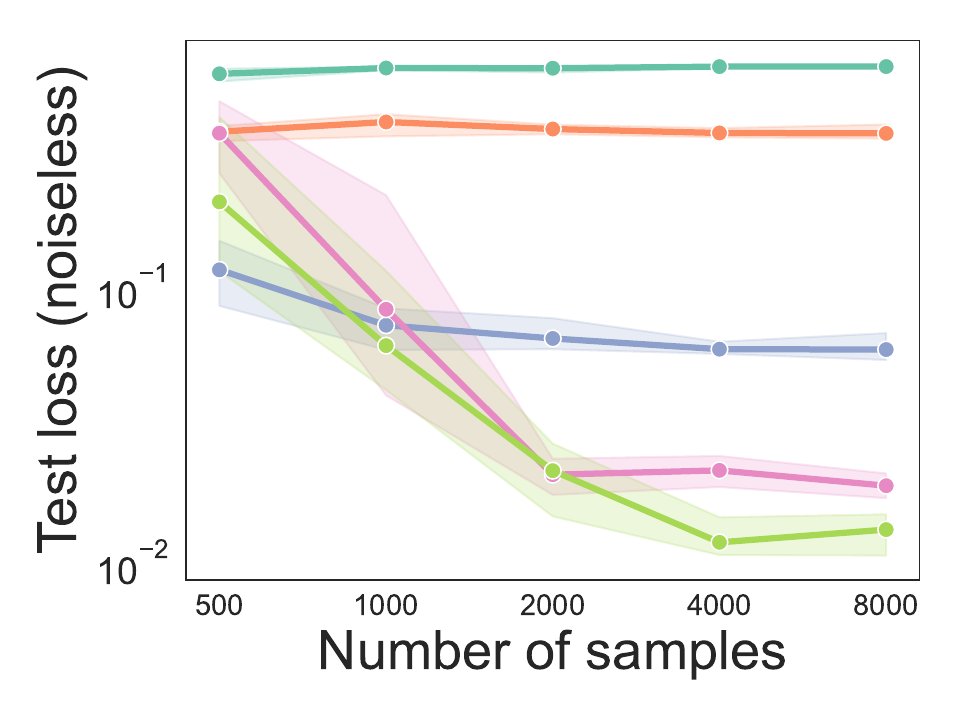}
\caption{Noiseless test loss}
\label{img:food_rescue_ordinal_positive_p_test_loss}
\end{subfigure}
\begin{subfigure}{0.6\textwidth}
\includegraphics[width = \textwidth]{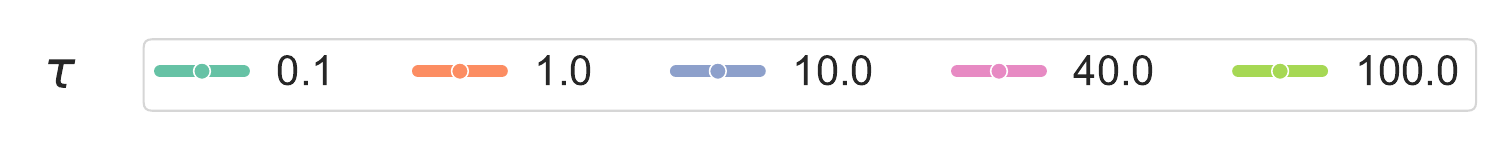}
\end{subfigure}
\caption{More results for ordinal case with $p = 1.62$. Different lines show results for different values of $\tau$. Solid lines correspond to values for learnt parameters, whereas dotted lines correspond to values for real parameters.}
\label{apdx:food_rescue_ordinal_positive_p}
\end{figure}

\section{Additional Plots}
\cref{img:comparison_nonconvexity} shows a pair of utility vectors $(\bfu, \bfv)$ such that with $\bfw = \mathbf{1}_d / d$, $\log M(\bfu; \bfw, p) - \log M(\bfv; \bfw, p)$ is non-convex. Upon slightly changing the value of $\bfv$ to $\bfv'$, we see that there can be significant change in the region $\set{p: \log M(\bfu; \bfw, p) - \log M(\bfv; \bfw, p) > 0}$.
\begin{figure}[ht]
\centering
\includegraphics[scale = 0.4]{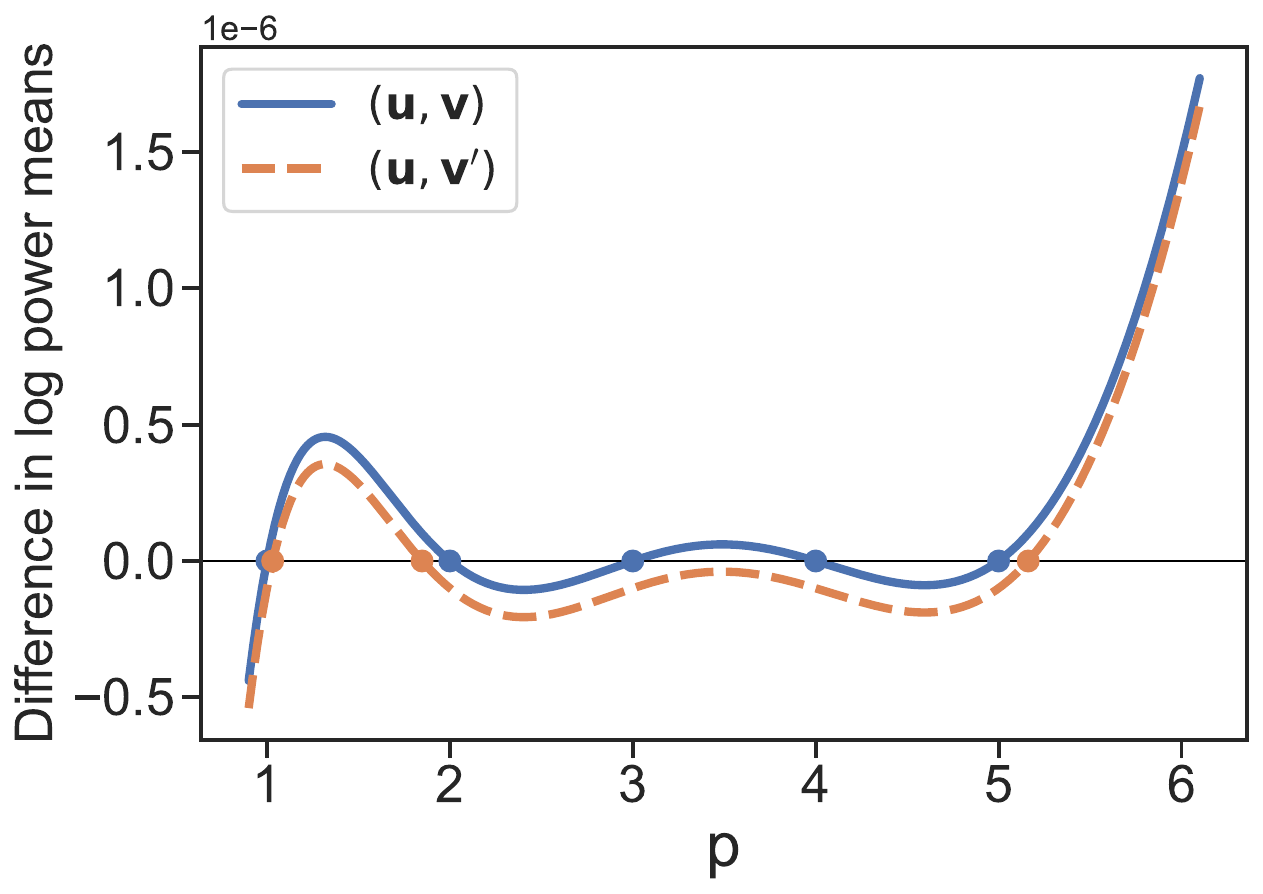}
\caption{An example showing the non-convexity of $\log M(\bfu, \bfw, p) - \log M(\bfv, \bfw, p)$. We see that the function has five roots for $(\bfu, \bfv)$, but is translated downwards for $(\bfu, \bfv')$ and has only three roots in this case. If the correct label is 1 for both pairs, then $p$ should be greater than 6; however, gradient-based optimization can stop between 3 and 4, which is a local optimum and does not give correct labels to both points.} %
\label{img:comparison_nonconvexity}
\vspace*{-1\baselineskip}
\end{figure}

\section{Simulations}
\label{apdx:synthetic_expts}

We conduct additional simulations on cardinal and ordinal data with logistic noise.

For each $d$ and $n$, we construct a dataset in a specified range $[u_{\min}, u_{\max}]^d = [1, 1000]^d$. Each individual $i$ is assumed to have a scaled and translated beta distribution over $[u_{\min}, u_{\max}]$, with the parameters $(\alpha_i, \beta_i)$ differing for each $i$. Utilities for each action are drawn independently for each individual to construct a utility vector. The underlying weight vector is sampled uniformly from $\Delta_{d - 1}$.

To learn $p$ (and $\bfw$ if needed), we assume $p$ to be in a fixed range, in this case $[-10, 10]$. We begin with a random sampling stage, where $N_{\text{random}}$ instances of $p$ (and $\bfw$) are uniformly sampled. At the end of this stage, we select the parameter set with the lowest training loss and then perform gradient descent for $N_{\text{grad}}$ steps. We observe that this two-stage method yields good results across the values of $d$ we consider. Each setting is run three times to obtain error bounds on the empirical results.

In the unknown weights case, we observe that sampling occurs in $d$ dimensions. As $d$ increases, we encounter the curse of dimensionality, meaning that $N_{\text{random}}$ would need to grow exponentially with $d$ to maintain sampling density across different $d$. This makes maintaining the same density impractical for larger dimensions. As a compromise, we increase $N_{\text{random}}$ linearly with $d$.

\begin{figure*}[t!]
\centering
\captionsetup{aboveskip=2pt,belowskip=10pt}
\begin{subfigure}[b]{0.4\textwidth}
    \centering
    \includegraphics[width=\linewidth]{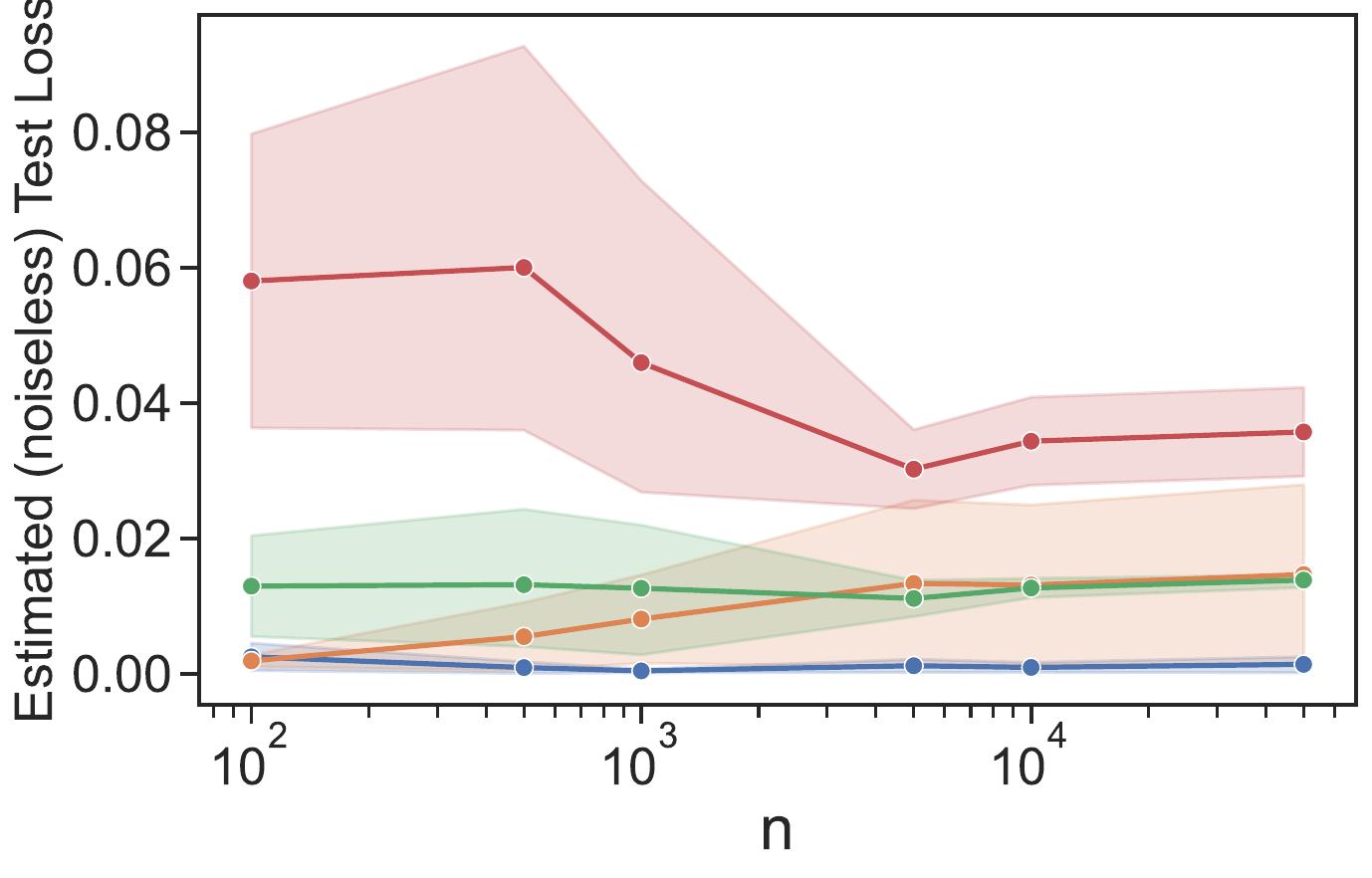}
    \caption{Cardinal social welfare values, known weights}
    \label{img:cardinal_known_weights}
\end{subfigure}
\hspace{0.05\textwidth} 
\begin{subfigure}[b]{0.4\textwidth}
    \centering
    \includegraphics[width=\linewidth]{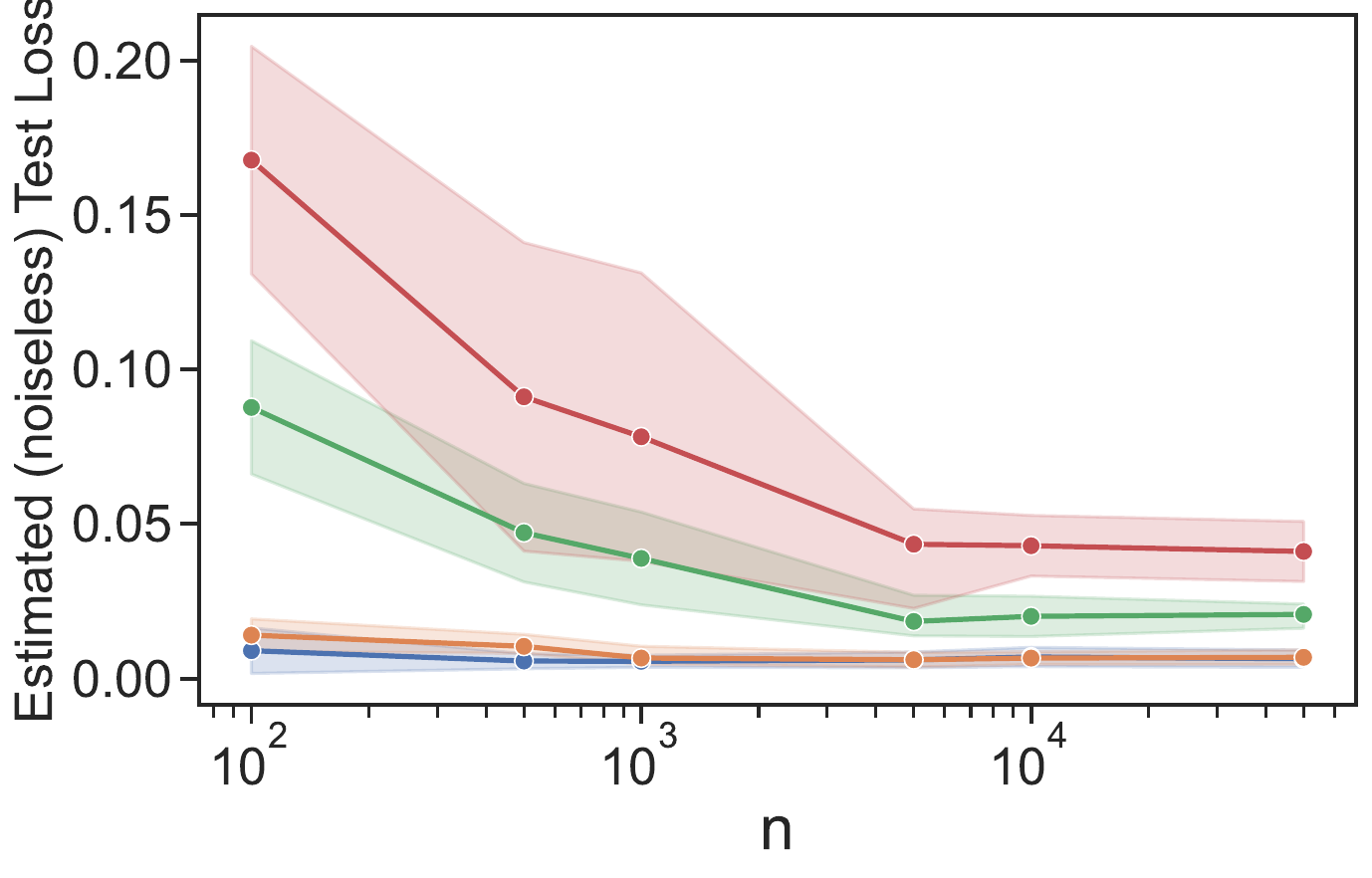}
    \caption{Cardinal social welfare values, unknown weights}
    \label{img:cardinal_unknown_weights}
\end{subfigure}

\vskip-6pt 

\begin{subfigure}[b]{0.4\textwidth}
    \centering
    \includegraphics[width=\linewidth]{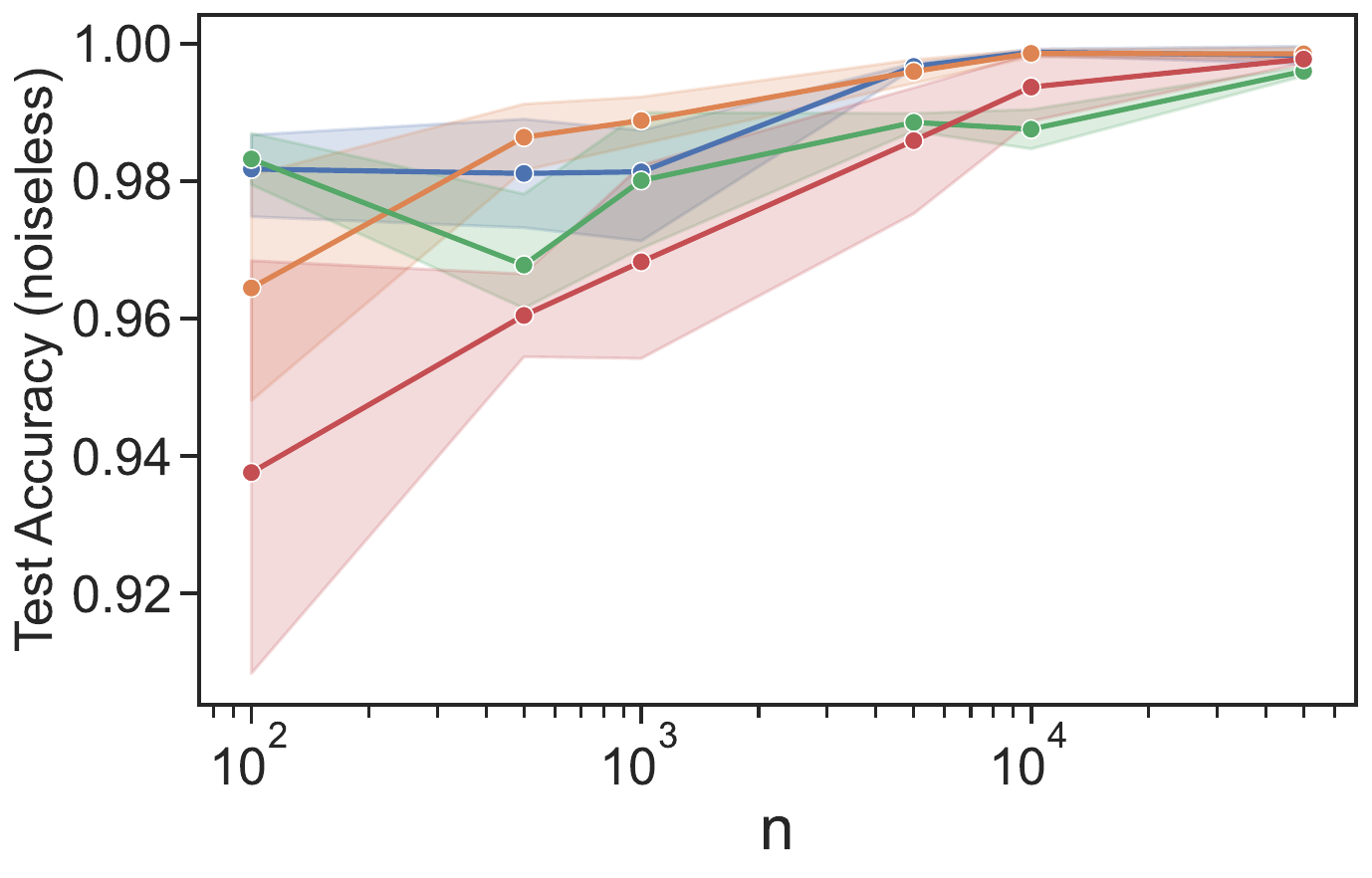}
    \caption{Pairwise comparisons with logistic noise, known weights}
    \label{img:ordinal_known_weights}
\end{subfigure}
\hspace{0.05\textwidth} 
\begin{subfigure}[b]{0.4\textwidth}
    \centering
    \includegraphics[width=\linewidth]{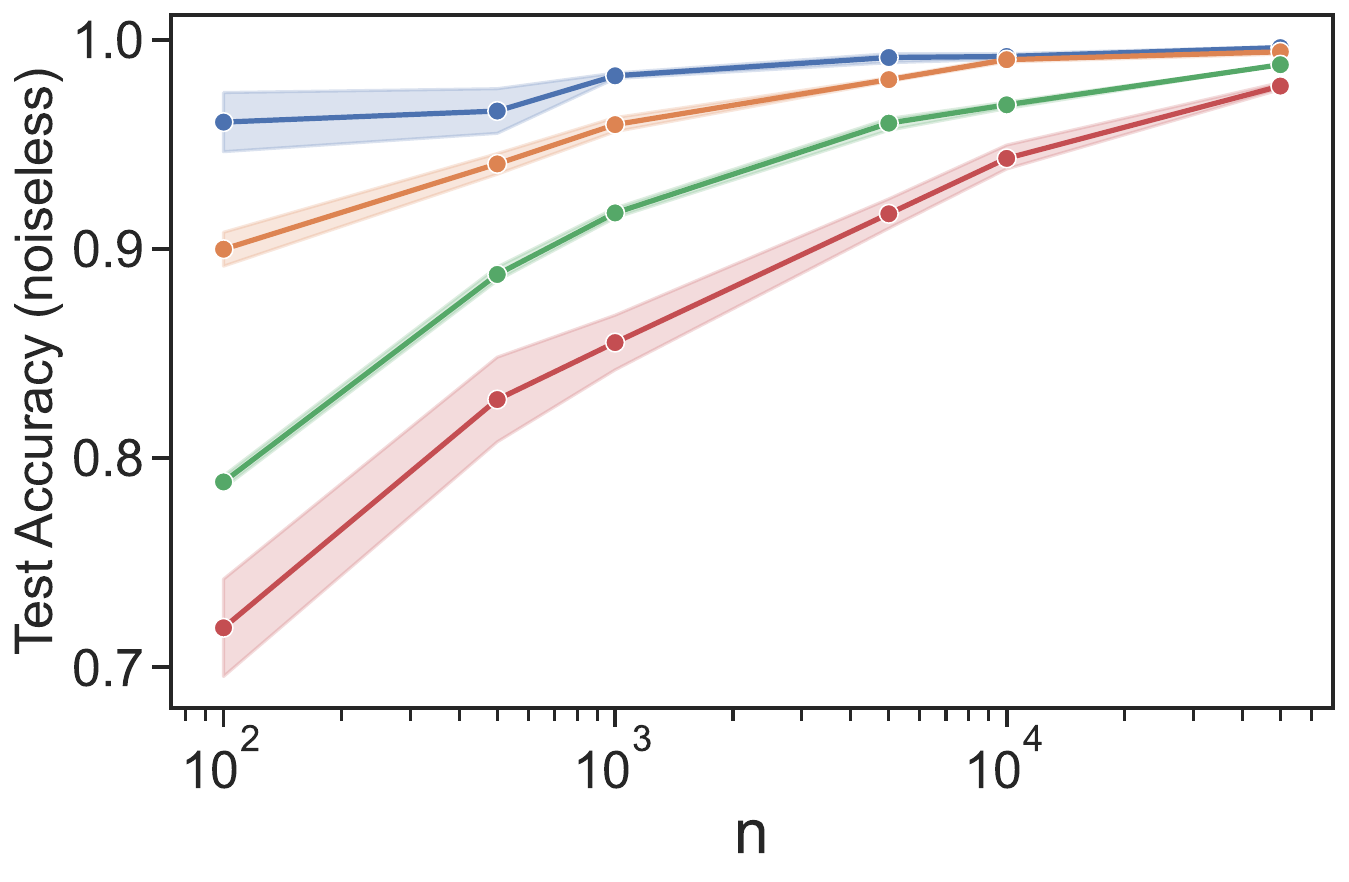}
    \caption{Pairwise comparisons with logistic noise, unknown weights}
    \label{img:ordinal_unknown_weights}
\end{subfigure}

\vskip-10pt 

\includegraphics[width=0.5\linewidth]{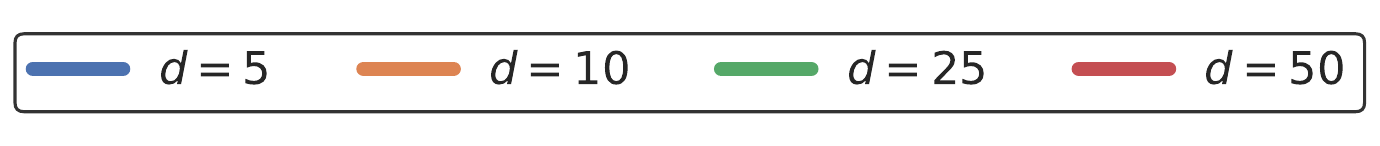} 
\caption{Results for synthetic data on cardinal and ordinal logistic tasks}
\end{figure*}

\subsection{Cardinal Values}
For cardinal values, we add Gaussian noise to each $y_i$ with standard deviation $(u_{(d)} - u_{(1)}) / 10$ and clamp the values to $[u_{(1)}, u_{(d)}]$. Experiments are conducted for both known and unknown weights with $p = -2$. \Cref{img:cardinal_known_weights} (known weights) and \Cref{img:cardinal_unknown_weights} (unknown weights) show the estimated test loss on noiseless test data generated using the true parameters.

We observe relatively little change in the test loss difference for known weights as $n$ increases. However, for higher $d$, there is a greater decrease in test loss with increasing $n$ when weights are also being learned. The estimated test loss also increases with $d$, with a stronger trend for unknown weights.

\subsection{Logistic Noise}
For logistic noise, we generate pairs of utility vectors with $p = 0.9$ and a $\bfw$ obtained through random sampling, then mislabel each instance according to Equation \eqref{eq:logistic_noise_prob} with $\tau^* = 10$. Since we also need to learn $\tau$, we set $\tau_{\max} = 50$ and sample it uniformly along with $p$ (and $\bfw$). \Cref{img:cardinal_known_weights} (known weights) and \Cref{img:cardinal_unknown_weights} (unknown weights) show the accuracy on noiseless test data of the learned parameters. Across different settings, the proportion of correctly labeled samples in the training dataset has a mean of $71.4\%$, with a maximum of $86.5\%$.

For known weights, accuracy increases with $n$, and mean accuracy remains high ($> 93\%$) across $d$. Differences between curves for various values of $d$ are minimal, with all approaching near-perfect accuracy as $n$ becomes large. This suggests that error bounds may be independent of $d$. For unknown weights, a clear trend of decreasing performance with increasing $d$ is observed, expected due to the $\calO(\sqrt{d \log d})$ dependence of logistic loss error bounds. Nevertheless, all settings achieve high accuracy as $n$ increases. Up to moderately high $d$, the logistic noise model successfully finds highly accurate parameters despite significant mislabeling in the training data. 

\begin{figure*}[h]
\centering
\includegraphics[width = 0.5\textwidth]{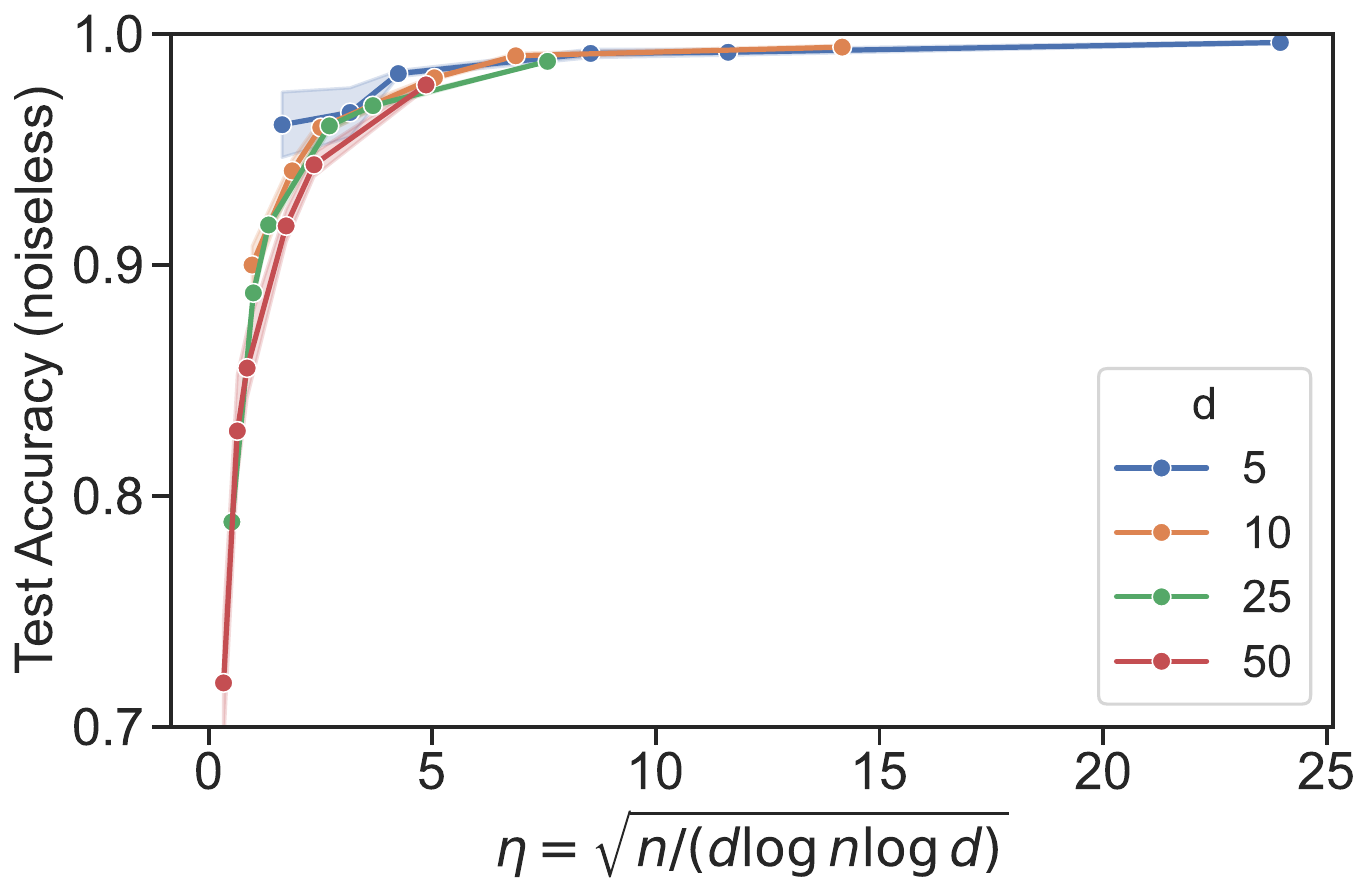}
\caption{Verification of $\calO(d \log d)$ risk bound for ordinal case with logistic noise, unknown weights}
\label{img:expt_logistic_dep}
\end{figure*}

In \Cref{img:expt_logistic_dep}, we re-plot the test accuracy $\alpha$ on noiseless data against $\eta = \sqrt{n / (d \log n \log d)}$, a re-scaled version of \Cref{img:ordinal_unknown_weights}. Theoretically, $\alpha$ and $\eta$ are related as $1 - \alpha = \calO(1 / \eta)$. The alignment of all curves in \Cref{img:expt_logistic_dep}, compared to the original curves in \Cref{img:ordinal_unknown_weights}, provides evidence that our risk and sample complexity bounds indeed scale as $d \log n \log d$ for the ordinal case with logistic noise and unknown weights.

\end{document}